\documentclass[11pt,a4paper]{article}
\usepackage{graphicx} 
\usepackage{hyperref}

\textwidth=175mm       
\textheight=250mm      
\oddsidemargin=-5mm     
\evensidemargin=0mm   
\topmargin=-20mm         

\hypersetup{
    colorlinks = true,
   linktocpage = true,
    citecolor = {blue}
}

\usepackage{algpseudocode}
\usepackage{subcaption}
\usepackage{svg}

\usepackage{latexsym}
\usepackage{amsmath}
\usepackage{amssymb}
\usepackage{mathrsfs}
\usepackage{amsfonts}
\usepackage{eucal}
\usepackage{amsthm}
\usepackage{graphicx}
\usepackage{color}
\usepackage{youngtab}
\usepackage{slashed}
\usepackage{tikz-cd}




\newcommand {\CalN} {\mathcal N}









\newcommand {\bE}   {\mathbf{E}}


\newcommand {\ve}{\varepsilon}






\newcommand{\beq}{\begin{equation}}
\newcommand{\eeq}{\end{equation}}


\newtheorem{theorem}{Theorem}[section]
\newtheorem{lemma}{Lemma}[section]
\newtheorem{corollary}{Corollary}[section]
\newtheorem{rem}{Remark}

\newtheorem{definition}{Definition}

\title{Classical elliptic integrable systems from the moduli space of instantons}
\author{Andrei Grekov\\
Yang Institute for Theoretical Physics\\ 
Stony Brook University, Stony Brook NY 11794-3636, USA}
\date{}

\begin{document}

\maketitle

\begin{abstract}
This paper is intended to serve as a review of a series of papers with Nikita Nekrasov, where we achieved several important results concerning the relation between the moduli space of instantons and classical integrable systems.
We derive I. Krichever's Lax matrix for the elliptic Calogero-Moser system from the equivariant cohomology of the moduli space of instantons. This result also has K-theoretic and elliptic cohomology counterparts. Our methods rely upon the so-called $\theta$-transform of the $qq$-characters vev’s, defined as integrals of certain classes in these cohomology theories. The key step is the non-commutative Jacobi-like product formula for them.  We also obtained a natural answer for the eigenvector of the Lax matrix and the horizontal section for the associated isomonodromic connection in terms of the partition function of folded instantons. As an application of our formula, we demonstrate some progress towards the spectral duality of the many-body systems in question, as well as give a new look at the quantum-classical duality between their trigonometric version and the corresponding spin chains.

\end{abstract}

\tableofcontents 

\section{Introduction}
The interplay between instanton counting \cite{Nekrasov:2002qd} (and supersymmetric field theories in general) and integrable systems of various kinds is a rich and fascinating topic. In this paper, we will only focus on the elliptic Calogero-Moser system and its generalizations.
It is a system of $N$ particles on the elliptic curve $\Sigma_\tau = \mathbb{C}^\times / \mathbb{Z}$ with coordinates $x_i$ and momenta $p_i$, canonically conjugated under the Poisson bracket
\begin{equation}
    \{p_i, x_j\} = \delta_{ij} x_j
\end{equation}
and the Hamiltonian:
\begin{equation} \label{CalHam}
   H_2^{ellCM} = \frac{1}{2} \sum_{i =1}^N p_i^2 + m^2 \sum_{i \leq j} \wp \big(x_i / x_j\big)
\end{equation}
where $m$ is a coupling constant, and
$\wp$ is a Weierstrass elliptic function, which has a second-order pole on the elliptic curve. So,
\begin{equation}
    \wp(\mathfrak{q}z) = \wp(z)
\end{equation}
\begin{equation}
    \mathfrak{q} = e^{2 \pi i \tau}
\end{equation}
This model first appeared in the context of gauge theory in \cite{GN}. It made its appearance in supersymmetric gauge theory, first implicitly in \cite{DW}, then explicitly in \cite{Martinec:1995qn,DP}, where its spectral curve served as a Seiberg-Witten curve governing the effective action of the ${\CalN}=2^*$ $SU(N)$ Yang-Mills theory on $\mathbb{R}^4 = \mathbb{C}^2$. This relation was extended to the quantum case in the paper \cite{NekSh}. The idea was to restore one of the equivariant parameters of $\mathbb{C}^2$ for it to become a Plank's constant of the integrable model. But only in \cite{BPSCFT5} the wavefunction of the system was matched with the partition function of the supersymmetric theory in the presence of the surface defect directly. Mathematically it is a generating function of the equivariant Chern polynomial integrals over the moduli spaces of instantons on the orbifold of $\mathbb{C}^2 = \mathbb{R}^4$ by the $\mathbb{Z}_N$ action $(z_1, z_2) \mapsto (z_1, e^{\frac{2 \pi i}{N}} z_2)$ a.k.a Affine Laumon space.\\
Despite all that, the Lax matrix governing its classical dynamics through the Lax equation:
\begin{equation}
    \frac{d L}{dt} = [L,M]\,,
\end{equation}
(for some  operator $M$),
which was first found by Krichever in his paper \cite{K} (and later rederived in \cite{GN} )
\begin{equation}
    L_{ij}(z) = p_i \delta_{ij} +m(1- \delta_{ij}) \frac{\theta_{\mathfrak{q}}'(1) \theta_{\mathfrak{q}}(z x_i /x_j)}{\theta_{\mathfrak{q}}(z ) \theta_{\mathfrak{q}}( x_i /x_j)},\,\,\,\, i,j =1,...,N
\end{equation}
 still had no clear 4d geometric meaning. In this text, we are trying to become a step closer to filling in this gap. Our analysis is similar to the one in \cite{NL}, \cite{NLJ}, \cite{NLJ2}, \cite{NLJ3} , \cite{LJ4}, and partially \cite{NPS} .\\
 It is also interesting to notice, that the system (\ref{CalHam}) has a relativistic generalization, called the elliptic Ruijsenaars-Schneider model \cite{RS}, \cite{R}:
 \begin{equation}
     H^{ellRS} = \sum_{i =1}^N \prod_{j \neq i} \frac{\theta_{\mathfrak{q}}(e^{\beta m} x_i /x_j)}{\theta_{\mathfrak{q}}( x_i /x_j)} e^{\beta p_i}
 \end{equation}
 Now the dependence on momenta is trigonometric.
Geometrically, in the treatment above it corresponds to the replacement of equivariant cohomology of the moduli space of instantons with equivariant K-theory. We worked this case out too. \\
And finally, there is a version of the above systems, for which the dependence of Hamiltonians both on coordinates and momenta are elliptic with modular parameters $\mathfrak{q}$ and $p_{6d}$. This system, proposed in \cite{FGNR} is quite mysterious, although it has been studied extensively in \cite{BH}, \cite{MM}, with the most explicit presentation was given in \cite{KS}. Here is how it goes. Introduce the generating function first:
\begin{equation}
    \mathcal{O}(u) = \sum_{n \in \mathbb{Z}} (-u)^n \mathcal{O}_n =  \sum_{n_1,...,n_N \in\, \mathbb{Z}} p_{6d}^{\sum_i\frac{n_i^2 - n_i}{2}} (-u)^{\sum_i n_i} \prod_{i < j}^N \frac{\theta_\mathfrak{q} (e^{ \beta m (n_i - n_j)}\frac{x_i}{x_j})}{\theta_\mathfrak{q} (\frac{x_i}{x_j})} \prod_i^N e^{ \beta n_i p_i} 
\end{equation}
The conjecture is that the ratios:
\begin{equation}
    H_n^{DELL} = \mathcal{O}_0^{-1} \mathcal{O}_n, \,\, \,\, n =1,...,N
\end{equation}
form a commuting set of Hamiltonians.
This case corresponds to elliptic cohomology and we cover it as well, however, this system does not really have a Lax pair representation in the usual finite-dimensional sense. Hopefully, this point will become more clear from the main body of the text.\\
The answers for Lax matrices, which we obtain are built out of certain transforms of specific observables in the gauge theory, which are called the $qq$-characters. If we call the original plane, where our instantons live, $\mathbb{R}^4 = \mathbb{C}^2$ - the $12$ plane, mathematically the insertion of the $qq$-character corresponds to the integral over the complimentary $34$ plane inside $\mathbb{C}^4$ in the crossed instantons moduli space \cite{BPSCFT2}, \cite{BPSCFT3}.\\
It is curious that the eigenvector of the Lax matrix also has a natural interpretation in the full gauge-origami setup - it corresponds to the insertion of the partition function of instantons living in the $24$ plane (folded instantons). \\

\subsection{Organization of the paper}
The paper is organized as follows:
In section 3 we introduce all of our main notions: the moduli spaces of instantons, $qq$-Characters, correlation functions, and their explicit formulas given by equivariant localization.\\
In section 4 we gave a summary of all our main results. Subsections 4.1-4.4 are dedicated to the product formula for the $\theta$-transformed $qq$-character, which will be our main tool. In subsections 4.5 - 4.6, we state our results for the Lax matrices. Subsection 4.8 contains the answer for the horizontal section of the corresponding isomonodromic connection. In subsection 4.9 we take the limit of it to obtain the Lax eigenvector. In subsection 4.7 we show how the quantum-classical duality between trigonometric versions of considered systems and spin chains follows naturally from our product formula. In subsection 4.10 we sketch a way towards the construction of spectral duality for the integrable systems in question, however, this topic requires more work.\\
In sections 5-11 we give proofs to all our statements. Some of these proofs have a alternative shorter version \cite{NG,ncJac}, some of them have generalizations to be published in forthcoming papers \cite{GNhighertimes, GNlax}. \\
The first Appendix contains explicit formulas for $N=2$ particles. And the second one contains the version of the matrix Jacobi identity, which involves only the elliptic curve and no gauge theory, so it is self-contained. 

\subsection{Acknowledgements}
The author is immensely grateful to Nikita Nekrasov. Without Nikta, this paper would not have been possible, as most of the key ideas, on which it is based are due to him. We thank Andrei Okounkov for showing interest in this work and valuable discussions. The author is also grateful to Yan Soibelman for the invitation to his seminar and useful remarks. \\
The research was partly supported by NSF PHY Award 2310279. Any
opinions expressed are solely our own and do not represent the views of the National Science Foundation. 

\section{List of main notations}
$\lambda$ (or $\mu$ or $\nu$) - Young diagram, \\
$\boldsymbol{\Lambda} = (\lambda^{(1)},...,\lambda^{(N)})$ - $N$-tuples of Young diagrams,\\
$\square = (\alpha, a, b)$ - box in position $(a,b)$ in the diagram $\lambda^{(\alpha)}$,\\
$(u_1,..,u_N) = (e^{\beta a_1},..,e^{\beta a_N})$ - equivariant parameters for the $GL(N)$ torus action (\ref{equiv}) ,\\
$(q_1, q_2)$ - equivariant parameters for the rotation torus action in the 12 plane (\ref{equiv}),\\
$(q_3, q_4)$ - equivariant parameters for the rotation torus action in the 34 plane,\\
$q_i = e^{\beta \epsilon_i}$\\
$P_i = 1 - q_i$\\
$\epsilon_1 + \epsilon_2 +\epsilon_3 +\epsilon_4 = 0$, \\
$\epsilon_3 = m$ - mass of the hypermultiplet in the adjoint representation, which is related to the coupling constant in the integrable system,\\
$c_\square = c_{(ab)} = \epsilon_1(a-1)+ \epsilon_2 (b-1)$, \\
$\sigma_\square = \sigma_{(ij)} = \epsilon_3(i-1)+ \epsilon_4 (j-1)$,\\
$\beta$ - the size of the $5d$ circle -parameter controlling the limit to $4d$ case,\\
$p_{6d} = e^{2 \pi i\tau_{6d}}$ - modular parameter of $6d$ torus (elliptic cohomology torus) - parameter controlling the limit to $5d$ case (K-theory limit),\\
We will need the elliptic function and its derivative:
\begin{gather}
    \theta_{\mathfrak{q}}(z) =\sum_{n \in \mathbb{Z}} (-z)^n \mathfrak{q}^{\frac{n^2-n}{2}}  = \prod_{n=0}^\infty (1 - \mathfrak{q}^{n+1}) (1- \mathfrak{q}^n z) (1- \mathfrak{q}^{n+1}z^{-1}) \\
    E_{1}(z) = \frac{z \frac{d}{dz} \theta_{\mathfrak{q}}(z)}{\theta_{\mathfrak{q}}(z)} 
\end{gather} \\
They have the following quasi-periodicity properties:
\begin{gather}
    \theta_{\mathfrak{q}}( \mathfrak{q} z) = - z^{-1} \theta_{\mathfrak{q}}(z)  \\
    E_{1}(\mathfrak{q} z) = E_{1}(z) - 1
\end{gather}
We will use the following notations for some natural sheaves and their characters at the fixed points (\ref{Tbundle}, \ref{FPbundles}, \ref{foldednotations}):
\begin{gather}
    S_{ij} = N_{ij} - P_i P_j K_{ij} , \quad i,j =1,2 \,\,\text{or} \,\, 2,4 \\
    N_{12}|_{\boldsymbol{\Lambda}}: =e^{-\beta x} \,  \mathcal{W}|_{\boldsymbol{\Lambda}} =e^{-\beta x} \sum_{\alpha =1}^N e^{\beta a_\alpha} \\
    K_{12}|_{\boldsymbol{\Lambda}} :=e^{-\beta x} \,  \mathcal{V}|_{\boldsymbol{\Lambda}}  = e^{-\beta x} \sum_{(\alpha, a,b) \in \boldsymbol{\Lambda}} e^{\beta a_\alpha} q_1^{a-1} q_2^{b-1}\\ 
    N_{24}|_\mu = 1 \\
    K_{24}|_\mu = \sum_{(i,j) \in \mu} q_2^{i-1} q_4^{j-1}
\end{gather}
${\bE}$ - plethystic exponential map. 
The operation ${\bE}$ is defined as follows. For a virtual representation $\mathbb{V}$ of torus $T$  with a virtual character written as 
\begin{equation}
    \chi_T(\mathbb{V}) = \sum_{x_i^+ \in W_+} e^{\beta x_i^+} - \sum_{x_i^- \in W_-} e^{\beta x_i^-}
\end{equation}
\begin{equation} \label{plethystic}
    {\bE}[\mathbb{V}] = \frac{\prod_{x_i^- \in W_-} \vartheta(x_i^-)}{\prod_{x_i^+ \in W_+} \vartheta(x_i^+)}
\end{equation}
where
\begin{equation}
    \vartheta(x) = \begin{cases}
        x, \,\, 4d \, \, \text{Case} \,\, -\text{Equivariant cohomology} \\
        1-e^{-\beta x} , \,\, 5d \,\, \text{Case} \,\, - \text{Equivariant K-theory} \\
        \theta_{p_{6d}}(e^{-\beta x}) , \,\, 6d \,\, \text{Case} \,\, -\text{Equivariant elliptic cohomology} \\
    \end{cases}
\end{equation}
$^*$ - on any quantity means the torus weights in the character entering it are reversed.\\
$\mathfrak{q}$ - fugacity parameter (instanton coupling) (\ref{4dAverage}),\\
$\mathcal{Y}(x)$ - $Y$-observable (\ref{Yobs}),\\
$Y(x)$ - NS-limit ($\epsilon_1 = 0$) of its average (\ref{NSlimit}),\\
$\EuScript{X}(x)$ - $\chi$-observable (\ref{chiobs}),\\
$\chi(x)$ - NS-limit of its average (\ref{NSlimit}),\\
$\tilde{\mathcal{Q}}(x)$ - (\ref{Q}), \\
$\mathcal{Y}_{24}(x)$ - (\ref{Y24}),\\
$Y_{24}(x)$ - NS-limit of its average (\ref{Y24NS}),\\
$\EuScript{X}_{24}(x)$ - folded instanton observable (\ref{foldeddef}), \\
$\chi_{24}(x)$ - NS-limit of its average (\ref{foldedNS}).\\
$\omega = 0,...,N-1$ - index labeling orbifolded quantities, usually takes values in $\mathbb{Z}_N$,\\
For any virtual character $\chi_T(\mathbb{V})$ denote by $\chi_T(\mathbb{V})_\omega$ the component which has weight $e^{\frac{2 \pi i \omega}{N}}$ under the orbifold $\mathbb{Z}_N$ action:
\begin{equation}
    \chi_T(\mathbb{V})_\omega = \sum_{ \substack{x_i^+ \in W_+, \\
    \mathbb{Z}_N-\text{weight}(e^{\beta x_i^+}) = e^{\frac{2 \pi i \omega}{N}}} } e^{\beta x_i^+} - \sum_{ \substack{x_i^- \in W_-, \\
    \mathbb{Z}_N-\text{weight}(e^{\beta x_i^-}) = e^{\frac{2 \pi i \omega}{N}}} } e^{\beta x_i^-}
\end{equation}
$\mathfrak{q}_\omega$ - instanton couplings in the presence of the orbifold (\ref{fracinst}),\\
$\mathcal{Y}_\omega(x)$ - $Y$-observable in the presence of the orbifold (\ref{Yobsorb}),\\
$Y_\omega(x)$ - NS-limit of its average (\ref{NSlimitorb}),\\
$\EuScript{X}_\omega(x)$ - $\chi$-observable in the presence of the orbifold (\ref{qqCharOrb}),\\
$\chi_\omega(x)$ - NS-limit of its average (\ref{NSlimit}),\\
$\tilde{\mathcal{Q}}_\omega(x)$ - (\ref{Qorb}),\\
$\mathcal{Y}_{24,\omega}(x)$ - (\ref{Y24orb}),\\
$\EuScript{X}_{24, \omega}(x)$ - folded instanton observable in the presence of the orbifold (\ref{foldeddef}), \\
$\chi_{24, \omega}(x)$ - NS-limit of its average (\ref{foldedNSorb}).\\
\begin{equation}
\mathbb{Q}_\omega^{(24), \,\mu} = \prod_{(i,j) \in \mu} \mathfrak{q}_{\omega + i-j}    
\end{equation}
\begin{equation}
    \mathbb{Q}^\lambda_\omega = \prod_{j = 1}^{\lambda_1} \mathfrak{q}_{\omega+1-j}^{\lambda_j^{t}}
\end{equation}
\begin{gather}
    \mathfrak{q}_\omega = \frac{x_\omega}{x_{\omega-1}}, \qquad \omega = 1,..., N-1 \\
    \mathfrak{q}_0 = \mathfrak{q} \frac{x_0}{x_{N-1}}
\end{gather}
and extended to infinity by quasiperiodicity $x_{\omega+N} = \mathfrak{q} x_\omega$.\\
$p_\omega$'s - are canonically conjugate momenta to $\ln x_\omega$.\\
$X$ - diagonal matrix of $x_\omega$ (\ref{Xmatrix}),\\
$P$ - diagonal matrix of $p_\omega$ (\ref{Pmatrix}),\\
$S_z$ - (\ref{Sz}),\\
$\hat{\chi}(x)$ - diagonal matrix of $\chi_\omega$ (\ref{factnotation}), \\
$\hat{Y}(x)$ - diagonal matrix of $Y_{\omega+1}$ (\ref{factnotation}), \\
$\hat{\slashed{Q}}$ - diagonal matrix of $\mathfrak{q}_\omega$ (\ref{factnotation}), \\
$C$ - cyclic shift matrix - (\ref{cycl}), \\
$C_z$ - cyclic shift matrix with parameter - (\ref{cyclz}).

\section{Main notions}
Implicitly the main players in the paper are the 
 moduli space of crossed and folded instantons.   
Their current definition  - \cite{BPSCFT2} ,\cite{BPSCFT3} involves the real  geometry. However,  
the partition function of the crossed instantons in the Omega background reduces to the partition function of the gauge theory living on a first irreducible component of the cross with the $qq$-Character insertion, representing contribution from the gauge theory living on a second irreducible component. This insertion is already well-defined algebraically in terms of the techniques surrounding the notion of the moduli space of instantons. Let us now give more details on that.\\

\subsection{Moduli space of instantons}
Let $V$ be a $k$-dimensional vector space, and $W$ be an $N$-dimensional vector space.
\begin{definition}
    The space of matrices $B_1, B_2 \in \mathrm{End}(V)$, $I \in \mathrm{Hom}(W,V)$, $J \in \mathrm{Hom}(V,W)$, satisfying the moment map equation:
    \begin{equation}
        \mu: = [B_1,B_2] +IJ = 0
    \end{equation}
    and the stability condition: \\
    ``There exist no $B_1$, $B_2$ invariant subspace $S \subsetneq V$, such that $\mathrm{Im} \, I \subset S$."\\
    modded out by the $GL(V)$ group action:
    \begin{equation}
        g \cdot (B_1, B_2, I, J) = (g B_1 g^{-1}, g B_2 g^{-1}, g I, J g^{-1})
    \end{equation}
    is called the moduli space of instantons on $\mathbb{C}^2$.
\end{definition}
As usual, the data is expressed in the following diagram:
\begin{equation}
    \begin{tikzcd}
        V \arrow[loop right, "B_2"] \arrow[loop left, "B_1"] \arrow[d, "J"] \\
        W \arrow[u, shift left, "I"]
    \end{tikzcd}
\end{equation}
This space will be denoted $\mathcal{M}_{N,k}$. From this definition it comes equipped with a trivial bundle $\mathcal{W}$, with fibers being the vector space $W$, and the tautological bundle $\mathcal{V}$:
\begin{equation} \label{Tbundle}
    \begin{tikzcd}
       \big( \mu^{-1}(0)^{\text{Stable}} \times V \big) / GL(V) \arrow[d] \\
       \mathcal{M}_{N,k}
    \end{tikzcd}
\end{equation}
An equivalent definition which could be found for example, in the H. Nakajima book \cite{N1}:
\begin{definition}
    The space $\mathcal{M}_{N,k}$ is the moduli space of torsion-free sheaves $E$ on $\mathbb{P}^2$ of rank $N$ and $c_2(E) = k$ with the framing at infinity:
    \begin{equation}
        \Phi : E|_{\ell_\infty} \xrightarrow{\sim} \mathcal{O}^{\oplus N}_{\ell_\infty}
    \end{equation}
    up to isomorphism.\\
    Where $\ell_\infty = \{[0 : z_1 : z_2 ] \in \mathbb{P}^2\}\subset \mathbb{P}^2$ - is a line at infinity.    
\end{definition}
Two definitions above are equivalent, see for example \cite{N1}, where alongside with a proof as a byproduct a concrete description of the sheaf $E$ in terms of the quadruple $(B_1, B_2, I, J)$ was given. Namely, consider the sequence of sheaves on $\mathbb{P}^2$:
\begin{equation} \label{Eres}
    \begin{tikzcd}
        V \otimes \mathcal{O}_{\mathbb{P}^2}(-1) \arrow[r, "a"] &
        \begin{matrix}
            V \otimes \mathcal{O}_{\mathbb{P}^2} \\
            \oplus \\
            V \otimes \mathcal{O}_{\mathbb{P}^2} \\
            \oplus \\
            W \otimes \mathcal{O}_{\mathbb{P}^2}
        \end{matrix} 
         \arrow[r, "b"]  & V \otimes \mathcal{O}_{\mathbb{P}^2}(1)
    \end{tikzcd}
\end{equation}
where:
\begin{gather}
    a = \begin{bmatrix}
        z_0 B_1 - z_1 \\
        z_0 B_2 - z_2 \\
        z_0 J
    \end{bmatrix} \\
b = \begin{bmatrix}
    - (z_0 B_2 - z_2) & z_0 B_1 - z_1 & z_0 I
\end{bmatrix}
\end{gather}
If $(B_1, B_2, I, J)$ is stable and satisfy the moment map equation, then $a$ is injective, $b$ - is surjective, and $b a = 0$. The sheaf $E$ is then equal to the middle cohomology of the above monad:
\begin{equation}
    E \cong \mathrm{ker} \, b \, / \, \mathrm{im} \, a
\end{equation}
\textbf{Universal sheaf}. This description then allows us to give a concrete expression for the universal sheaf $\mathcal{U}$ on $\mathcal{M}_{N,k} \times \mathbb{P}^2$. Indeed, consider the following sequence of sheaves on $\mathcal{M}_{N,k} \times \mathbb{P}^2$:
\begin{equation} \label{Unisheaf}
    \begin{tikzcd}
        \mathcal{V} \boxtimes \mathcal{O}_{\mathbb{P}^2}(-1) \arrow[r, "{[a]}"] &
        \begin{matrix}
            \mathcal{V} \boxtimes \mathcal{O}_{\mathbb{P}^2} \\
            \oplus \\
            \mathcal{V} \boxtimes \mathcal{O}_{\mathbb{P}^2} \\
            \oplus \\
            \mathcal{W} \boxtimes \mathcal{O}_{\mathbb{P}^2}
        \end{matrix} 
         \arrow[r, "{[b]}"]  & \mathcal{V} \boxtimes \mathcal{O}_{\mathbb{P}^2}(1)
    \end{tikzcd}
\end{equation}
The universal sheaf $\mathcal{U}$ is then given by the middle cohomology of this sequence:
\begin{equation}
    \mathcal{U} \cong \mathrm{ker} [b] \, / \, \mathrm{im} [a]
\end{equation}
as, being restricted to $\{E\} \times \mathbb{P}^2$ it becomes $E$:
\begin{equation}
    \mathcal{U}|_{\{E\} \times \mathbb{P}^2} \cong E
\end{equation}
\textbf{Torus action}. Maximal torus of $GL(N)$: $T_{N} \subset GL(N)$ acts on $\mathcal{M}_{N,k}$ by changing the trivialization at infinity and $\mathbb{C}^* \times \mathbb{C}^*$ acts by scaling of the base $\mathbb{P}^2$. Let us denote $T = T_{N} \times \mathbb{C}^* \times \mathbb{C}^*$. Its action on quadruples looks like this:
\begin{equation}
    (U; Q_1, Q_2) \cdot (B_1, B_2, I, J) \mapsto  (Q_1 B_1, Q_2 B_2, I U, U^{-1}J)
\end{equation}
The equivariant parameters of $T$ will be denoted as:
\begin{equation} \label{equiv}
   (u_1,...,u_n;q_1,q_2):=\big(e^{\beta a_1},...,e^{\beta a_N}; e^{\beta \epsilon_1}, e^{\beta \epsilon_2} \big)
\end{equation}
\textbf{Fixed points of the torus action}. The fixed points of the torus $T$ action on $\mathcal{M}_{N,k}$ are labeled by $N$-tuples of Young diagrams $\vec{\Lambda}$. Restrictions of the corresponding tautological bundles $K$-theory classes to the fixed points will have the form:
\begin{gather} \label{FPbundles}
    \mathcal{V}|_{\boldsymbol{\Lambda}} = \sum_{(\alpha,a,b) \in \boldsymbol{\Lambda}} u_\alpha q_1^{a-1} q_2^{b-1} \\
    \mathcal{W}|_{\boldsymbol{\Lambda}} = \sum_{\alpha =1 }^N u_\alpha
\end{gather}
\textbf{Correlation functions}. We would like to define the correlation functions for a certain specific class of observables in the supersymmetric $4d / 5d / 6d$ Yang-Mills theory with matter multiplet of mass $m$ in the adjoint representation of the gauge group.\\
For the $4d$ case (Equivariant cohomology case), for every class $\omega$ in the localized equivariant cohomology ring denoted $\bigoplus_{k=0}^\infty \mathrm{H}_T^\bullet(\mathcal{M}_{N,k} )$ we define:
\begin{equation} \label{4dAverage}
    \langle \omega \rangle_{4d} = \frac{1}{Z_{4d}} \sum_{k=0}^\infty \mathfrak{q}^k \int_{\mathcal{M}_{N,k}} \omega \cdot c_m\big(T^*\mathcal{M}_{N,k}\big)
\end{equation}
\begin{equation}
Z_{4d} =  \sum_{k=0}^\infty \mathfrak{q}^k \int_{\mathcal{M}_{N,k}} c_m\big(T^*\mathcal{M}_{N,k}\big)   
\end{equation}
where integral is equivariant with respect to $T$, $\mathfrak{q}$ - is an instanton counting parameter, and for the Chern polynomial we use the following normalization:
\begin{equation}
    c_x(E) = \sum_{i=0}^r c_i(E) x^{r-i}
\end{equation}
for a vector bundle $E$ of rank $r$.\\
For the $5d$ case (Equivariant $K$-theory case), for every equivariant sheaf $\mathcal{F}$, which represents a class in the localized equivariant $K$-theory denoted $\bigoplus_{k=0}^\infty K_T(\mathcal{M}_{N,k})$ define:
\begin{equation}
    \langle \mathcal{F} \rangle_{5d} = \frac{1}{Z_{5d}} \sum_{k=0}^\infty \mathfrak{q}^k \chi_T\big(\mathcal{M}_{N,k} \,, \mathcal{F} \otimes\wedge_{-e^{\beta m}} T^*\mathcal{M}_{N,k}\big)
\end{equation}
\begin{equation}
Z_{5d} =  \sum_{k=0}^\infty \mathfrak{q}^k \chi_T\big(\mathcal{M}_{N,k}\,, \wedge_{-e^{\beta m}} T^*\mathcal{M}_{N,k}\big) 
\end{equation}
where, $\chi_T$ - is an equivariant Euler characteristic of the sheaf (derived equivariant pushforward to a point), and:
\begin{equation}
\wedge_{-e^{\beta x}} \mathcal{E}: = \sum_{n=0}^\infty (-e^{\beta x})^n  \wedge^n \mathcal{E}   
\end{equation}
for any sheaf $ \mathcal{E}$.\\
$\beta$ is a size of a $5d$ circle - the parameter controlling the limit to $4d$.\\
And finally, in the $6d$ case, we could analogously define for every equivariant elliptic cohomology class its average as a generating function of pushforwards to a point of its product with elliptic chern polynomial of the cotangent bundle $T^*\mathcal{M}_{N,k}$.\\
In places where we are not writing the subscript of an average $\langle \cdot \rangle$, this means the equality in question holds for all 3 cases.\\
The averages above could be calculated using the localization formula. Let us try to express the results for all 3 cases as uniformly as possible. Let $\mathcal{E}$ denote one of the classes described above. Then we have:
\begin{equation}
 \langle \mathcal{E} \rangle =  \sum_{\boldsymbol{\Lambda}} \mathcal{E}|_{\boldsymbol{\Lambda}} \,\mu[\boldsymbol{\Lambda}]
\end{equation}
\begin{equation}
    \mu[\boldsymbol{\Lambda}]: = \frac{\mathfrak{q}^{|\boldsymbol{\Lambda}|} }{Z} \, {\bE}\big[(1-e^{\beta m})\, T^*\mathcal{M}_{N,k} |_{\boldsymbol{\Lambda}} \big]  
\end{equation}
The operation ${\bE}$ is defined as follows. For a virtual representation $\mathbb{V}$ of torus $T$  with a virtual character written as 
\begin{equation}
    \chi_T(\mathbb{V}) = \sum_{x_i^+ \in W_+} e^{\beta x_i^+} - \sum_{x_i^- \in W_-} e^{\beta x_i^-}
\end{equation}
\begin{equation}
    {\bE}[\mathbb{V}] = \frac{\prod_{x_i^- \in W_-} \vartheta(x_i^-)}{\prod_{x_i^+ \in W_+} \vartheta(x_i^+)}
\end{equation}
where
\begin{equation}
    \vartheta(x) = \begin{cases}
        x, \,\, 4d \, \, \text{Case} \,\, -\text{Equivariant cohomology} \\
        1-e^{-\beta x} , \,\, 5d \,\, \text{Case} \,\, - \text{Equivariant K-theory} \\
        \theta_{p_{6d}}(e^{-\beta x}) , \,\, 6d \,\, \text{Case} \,\, -\text{Equivariant elliptic cohomology} \\
    \end{cases}
\end{equation}
And the $K$-theory class of the cotangent bundle appearing in the formula is equal to:
\begin{equation}
T^*\mathcal{M}_{N,k} = \mathcal{W}\mathcal{V}^* + q_1 q_2 \mathcal{V} \mathcal{W}^* - (1-q_1)(1-q_2) \mathcal{V} \mathcal{V}^*  
\end{equation}

\textbf{$Y$ - observable}. One of the most important observables, which we will use, is the characteristic polynomial of a scalar field (see formulas (5.8) and (5.11) in \cite{BPSCFT1}). Mathematically it is defined as follows:
\begin{equation} \label{Yobs}
    \mathcal{Y}(x): = \begin{cases}
        c_x(R \pi_*(\mathcal{U} \otimes  p^* \mathcal{O}_0)) \,\, \text{in the $4d$ case} \\
        \wedge_{-e^{-\beta x}} R \pi_*(\mathcal{U} \otimes  p^* \mathcal{O}_0) \,\, \text{in the $5d$ case} \\
        c_x^{Ell}(R \pi_*(\mathcal{U} \otimes  p^* \mathcal{O}_0)) \,\, \text{in the $6d$ case}
    \end{cases}
\end{equation}
where $\pi$ and $p$ are projections from the product $\mathcal{M}_{N,k} \times \mathbb{P}^2$ to $\mathcal{M}_{N,k}$ and $\mathbb{P}^2$ correspondingly.
\begin{equation}
    \begin{tikzcd}[column sep=small]
& \mathcal{M}_{N,k} \times \mathbb{P}^2 \arrow[dl, "\pi"] \arrow[dr, "p"] & \\
\mathcal{M}_{N,k}  & & \mathbb{P}^2
\end{tikzcd}
\end{equation}
and $\mathcal{O}_0$ is the skyscraper sheaf of $0 \in \mathbb{P}^2$.\\
From (\ref{Unisheaf}) one can easily get the expression for the restriction of $\mathcal{Y}(x)$ to the fixed point $\boldsymbol{\Lambda}$:
\begin{equation}
  \mathcal{Y}(x)|_{\boldsymbol{\Lambda} }=  {\bE}\big[-e^{\beta x}\big( \mathcal{W}^* - (1-q_1^{-1}) (1-q_2^{-1}) \mathcal{\mathcal{V}^*}\big)|_{\boldsymbol{\Lambda} }\big]
\end{equation}
or explicitly:
\begin{equation}
\mathcal{Y}(x)|_{\boldsymbol{\Lambda} } =  \prod_{\alpha = 1}^N \vartheta(x-a_\alpha) \prod_{(\alpha,a,b) \in \boldsymbol{\Lambda}} \frac{\vartheta(x-a_\alpha - c_{(ab)} - \epsilon_1)}{\vartheta(x-a_\alpha - c_{(ab)})} \frac{\vartheta(x-a_\alpha - c_{(ab)} - \epsilon_2)}{\vartheta(x-a_\alpha - c_{(ab)} - \epsilon)} 
\end{equation}
where
\begin{equation}
    c_\square:= c_{(ab)} = \epsilon_1(a-1) + \epsilon_2(b-1)  
\end{equation}
\begin{equation}
    \epsilon = \epsilon_1 + \epsilon_2
\end{equation}

\textbf{Remark.} In the notation, which we chose for the orbifold below the $\mathcal{Y}$-function obtained from the product of the orbifolded ones will have the parameter $\epsilon_2$ replaced by $N \epsilon_2$. We will use the same notations for them, as the former one itself will never show up in the main text.

\textbf{$\EuScript{X}$-observable}. From the definition of $\mathcal{Y}(x)$ we see that its expectation value $\langle \mathcal{Y}(x) \rangle$ has poles in the finite region of $x$. However, it is possible to find a certain combination of functions $\mathcal{Y}(x) $ with shifted arguments such that all such poles in its expectation value would cancel out. Let us hence give the following definition:
\begin{definition}
    An observable $\EuScript{X}(x)$ (also denoted as $\EuScript{X}_{34}(x)$) which is a power series in $\mathfrak{q}$ depending on a variable $x$ only through functions $\mathcal{Y}$ at different points and satisfying 2 properties:\\
1)
\begin{equation}
  \EuScript{X}(x) = \mathcal{Y}(x + \epsilon) + O(\mathfrak{q}) 
\end{equation}
2) $\langle \EuScript{X}(x) \rangle$ has no poles for $x \in \mathbb{C}$\\
is called a $qq$-Character.
\end{definition}

The concrete expression for it was discovered in \cite{BPSCFT1}, see formula (8.26) adapted for affine (adjoint matter case) in (7.6) there ((194) and (153) in the preprint version correspondingly).
\begin{equation} \label{chiobs}
    \EuScript{X}(x) = \mathcal{Y}(x + \epsilon) \sum_{\lambda} \mathbb{S}_\lambda \mathfrak{q}^{|\lambda|} \prod_{ \square \in \lambda} \frac{\mathcal{Y}(x + \sigma_\square +m + \epsilon) \mathcal{Y}(x + \sigma_\square -m )}{\mathcal{Y}(x + \sigma_\square + \epsilon) \mathcal{Y}(x + \sigma_\square)}
\end{equation}
where $\lambda$ - is a Young diagram,
\begin{equation}
    \sigma_\square:= \sigma_{(ij)} = \epsilon_3(i-1) + \epsilon_4(j-1) = m(i-j) + \epsilon (1-j) 
\end{equation}
\begin{equation}
    \mathbb{S}_\lambda = \prod_{\square \in \lambda} \mathbb{S}( m h_\square + \epsilon a_\square)
\end{equation}
$h_\square$ and $a_\square$ are hook and arm length, and
\begin{equation}
   \mathbb{S}(x) = \frac{\vartheta(x + \epsilon_1) \vartheta(x + \epsilon_2)}{\vartheta(x)\vartheta(x + \epsilon_1+ \epsilon_2)} 
\end{equation}
\begin{rem}
    $\lambda$ - is a Young diagram labeling the fixed points of the plane rotations torus action on the moduli space of non-commutative $U(1)$-instantons living on the orthogonal irreducible component of the cross (labeled 34 directions) with equivariant parameters ($m$,$-\epsilon_1-\epsilon_2-m$):= ($\epsilon_3$, $\epsilon_4$).
\end{rem} 

\textbf{$\epsilon_1 \rightarrow 0$ or $\epsilon_2 \rightarrow 0$ limit}. In this limit  described in \cite{NO} the sum over $N$-tuples of Young diagrams $\Vec{\Lambda}$ will be reduced to the evaluation at the limit shape $\vec{\Lambda}^\infty$, which is determined by the equations:
\begin{equation} \label{limitshapebulk}
\frac{\mu[\boldsymbol{\Lambda}^\infty + \square]}{\mu[\boldsymbol{\Lambda}^\infty ]} =1 
\end{equation}
For any addable box $\square$.\\
So, let us define :
\begin{gather} \label{NSlimit}
Y(x):= \lim_{\epsilon_1 \rightarrow 0} \langle \mathcal{Y}(x) \rangle =  \mathcal{Y}(x)|_{\boldsymbol{\Lambda}^\infty} \\
\chi(x):=\lim_{\epsilon_1 \rightarrow 0}\ \langle \EuScript{X}(x) \rangle =  \EuScript{X}(x)|_{\boldsymbol{\Lambda}^\infty}
\end{gather}
And hence one gets:
\begin{equation} \label{qq-CharLimitNonOrb}
    \chi(x) = Y(x+\epsilon) \sum_{\lambda}  \mathfrak{q}^{|\lambda|} \prod_{ \square \in \lambda} \frac{Y(x + \sigma_\square +m + \epsilon) Y(x + \sigma_\square -m )}{Y(x + \sigma_\square + \epsilon) Y(x + \sigma_\square)}
\end{equation}

\textbf{Folded instantons observable}.
The moduli space of folded instantons similarly to crossed instantons was defined in \cite{BPSCFT2} using real geometry. 
However, the contribution of instantons living on the 24 plane of the corresponding non-irreducible manifold could be expressed as an insertion of a certain observable into the theory living on 12 plane, which is expressed as a sum over Young diagrams $\mu$, labeling the fixed points in the $24$ directions.
Hence, it could be defined algebraically, as follows. Let us first introduce an observable $\tilde{\mathcal{Q}}(x)$ through the property:
\begin{equation} \label{Q}
    \mathcal{Y}(x) = \frac{\tilde{\mathcal{Q}}(x)}{\tilde{\mathcal{Q}}(x+ \epsilon_2)}
\end{equation}
To define it geometrically, we just need to replace the skyscraper sheaf of a point $\mathcal{O}_0 $ in the definition of $\mathcal{Y}(x)$ (\ref{Yobs}) with the skyscraper sheaf of a line $z_1 = 0$.\\
And then using it, define:
\begin{equation} \label{Y24}
    \mathcal{Y}_{24}(x) = \frac{\tilde{\mathcal{Q}}(x )}{\tilde{\mathcal{Q}}(x + m)}
\end{equation}
$\mathcal{Y}_{24}(x)$ expectation value has an infinite amount of poles in $x$, however, it is possible to find an expression made out of such functions with shifted arguments where all the poles in the finite region in its average will cancel out.
\begin{definition} \label{foldeddef}
    An observable $\EuScript{X}_{24}(x)$, which is a power series in $\mathfrak{q}$ depending on a variable $x$ only through functions $\tilde{\mathcal{Q}}$ at different points and satisfying 2 properties:\\
    1) 
\begin{equation}
    \EuScript{X}_{24}(x) = \mathcal{Y}_{24}(x+ \epsilon) + O(\mathfrak{q})
\end{equation}
   2) $\langle \EuScript{X}_{24}(x) \rangle$ has no poles for $x \in \mathbb{C}$\\
   is called a folded instanton observable.
\end{definition}
We will only need an explicit expression in the limit $\epsilon_1 \rightarrow 0$, which simplifies things a bit:  
\begin{equation} \label{foldedNS}
    \chi_{24}(x): =\lim_{\epsilon_1 \rightarrow 0} \langle \EuScript{X}_{24}(x) \rangle =  \lim_{\epsilon_1 \rightarrow 0} \Big\langle \sum_{\mu} \mathfrak{q}^{|\mu|} {\bE} \Big[ -q_1 q_2 \frac{P_3}{P_2} S_{12}^* S_{24} \Big] \Big|_{\mu} \Big\rangle
\end{equation}
Where:
\begin{gather} \label{foldednotations}
    q_i = e^{\beta \epsilon_i}, \quad i =1,2,3,4 \\
    P_i = 1- q_i , \quad i =1,2,3,4 \\
    S_{ij} = N_{ij} - P_i P_j K_{ij} , \quad i,j =1,2 \,\,\text{or} \,\, 2,4 \\
    N_{12}|_{\boldsymbol{\Lambda}}: =e^{-\beta x} \,  \mathcal{W}|_{\boldsymbol{\Lambda}} =e^{-\beta x} \sum_{\alpha =1}^N e^{\beta a_\alpha} \\
    K_{12}|_{\boldsymbol{\Lambda}} :=e^{-\beta x} \,  \mathcal{V}|_{\boldsymbol{\Lambda}}  = e^{-\beta x} \sum_{(\alpha, a,b) \in \boldsymbol{\Lambda}} e^{\beta a_\alpha} q_1^{a-1} q_2^{b-1}\\ 
    N_{24}|_\mu = 1 \\
    K_{24}|_\mu = \sum_{(i,j) \in \mu} q_2^{i-1} q_4^{j-1}
\end{gather}
For convenience, we will also denote:
\begin{equation} \label{Y24NS}
    Y_{24}(x): = \lim_{\epsilon_1 \rightarrow 0} \langle \mathcal{Y}_{24}(x) \rangle
\end{equation}

\subsection{Affine Laumon space (moduli space of instantons on the orbifold)}
We will define the Affine Laumon space $\mathcal{M}_{N,\boldsymbol{d}}$ (see for example section (6.1) in \cite{BPSCFT4} and section 3 in \cite{Negut2} or section 2 in \cite{FR} for nice exposition) as a fixed point set in $\mathcal{M}_{N,k}$ under the following $\mathbb{Z}_N$ action (on quadruples):
\begin{equation} \label{ZNaction}
    e^{\frac{2 \pi i}{N}} \cdot (B_1,B_2,I,J) \mapsto (B_1, e^{\frac{2 \pi i}{N}}B_2,I \delta, e^{\frac{2 \pi i}{N}} \delta^{-1}J)
\end{equation}
with 
\begin{equation}
    \delta = \text{diag}\big(e^{\frac{2 \pi i}{N}},...,e^{\frac{2 \pi i(N-1)}{N}}, 1 \big)
\end{equation}
This action arises from the corresponding $\mathbb{Z}_N$ action on the base and multiplication of the framing isomorphism by $\delta$.\\
For the point of $\mathcal{M}_{N,k}$ represented by the quadruple $(B_1,B_2,I,J)$ to be fixed under the action (\ref{ZNaction}) there should exist an element $\Omega \in GL(V)$ such that:
\begin{equation} \label{LaumonQuotient}
(B_1, e^{\frac{2 \pi i}{N}}B_2,I \delta, e^{\frac{2 \pi i}{N}} \delta^{-1}J) =      (\Omega B_1\Omega^{-1} ,\Omega B_2 \Omega^{-1}, \Omega I,J \Omega^{-1})
\end{equation}
By freeness of the $GL(V)$ action on the stable locus, $\Omega$ is unique. In particular, this means, that the assignment $e^{\frac{2 \pi i}{N}} \mapsto \Omega$ determines a homomorphism $\mathbb{Z}_N \rightarrow GL(V)$.  The conjugacy class of $\Omega$ does not depend on the choice of representative of $(B_1,B_2,I,J)$. Hence the fixed point set $\mathcal{M}_{N,k}^{\mathbb{Z}_N}$ breaks into connected components, corresponding to different decompositions of $V$ into weight spaces of the $\mathbb{Z}_N$ action by $\Omega$. Let us denote by $V_\omega$ the connected component with weight $e^{\frac{2 \pi i \omega}{N}}$ for $\omega = 1,..., N$, with $d_\omega = \text{dim}(V_\omega)$. So, that we have $\sum_{\omega} d_\omega = k$, and 
\begin{equation}
 \mathcal{M}_{N,k}^{\mathbb{Z}_N} = \bigsqcup_{\vec{d} \in \mathbb{N}^N}^{|\boldsymbol{d}| = k}   \mathcal{M}_{N,\boldsymbol{d}}.
\end{equation}
$\mathcal{M}_{N,\boldsymbol{d}}$ - is called affine Laumon space.\\
If we also decompose the framing space $W$ into the corresponding weight spaces as:
\begin{equation}
    W = \bigoplus_\omega \mathbb{C} w_\omega
\end{equation}
where $w_\omega$ has weight $e^{\frac{2 \pi i \omega}{N}}$, we see from (\ref{LaumonQuotient}) that the maps $B_1, B_2,I,J$ act as follows:
\begin{gather}
   V_\omega \xrightarrow{B_1} V_\omega \\
   V_\omega \xrightarrow{B_2} V_{\omega+1} \\
   \mathbb{C} w_\omega \xrightarrow{I} V_{\omega} \\
   V_{\omega} \xrightarrow{J} \mathbb{C} w_{\omega+1}
\end{gather}
This means the space $\mathcal{M}_{N,\boldsymbol{d}}$ has the following chainsaw quiver description.
\begin{equation}
    \begin{tikzcd}[column sep=small]
        ... \arrow[rr, "B_{\omega-1}"] \arrow[rd, "J_{\omega-1}"] & & V_{\omega-1} \arrow[loop, "A_{\omega-1}"] \arrow[rr, "B_{\omega}"]  \arrow[rd, "J_{\omega}"] &  & V_{\omega} \arrow[loop, "A_{\omega}"] \arrow[rr, "B_{\omega+1}"]  \arrow[rd, "J_{\omega+1}"] & & ... \\
        & \mathbb{C} w_{\omega-1} \arrow[ru, "I_{\omega-1}"] & & \mathbb{C} w_{\omega} \arrow[ru, "I_{\omega}"] & & \mathbb{C} w_{\omega+1} \arrow[ru, "I_{\omega+1}"]   
    \end{tikzcd}
\end{equation}

Fix the vector spaces $V_1,...,V_N$ of dimensions $d_1,...,d_N$ which sum up to $k$, identify $V_N$ with $V_0$. Consider the space of linear maps:
\begin{equation}
    M_{\vec{d}} = \bigoplus_{\omega =1}^N \mathrm{Hom}(V_\omega, V_\omega) \bigoplus_{\omega =1}^N \mathrm{Hom}(V_\omega, V_{\omega+1}) \bigoplus_{\omega =1}^N \mathrm{Hom}(\mathbb{C} w_\omega, V_{\omega}) \bigoplus_{\omega =1}^N \mathrm{Hom}(V_{\omega},\mathbb{C}  w_{\omega+1}) 
\end{equation}
Let us denote the elements of this space as: $(A_\omega, B_\omega, I_\omega, J_\omega)_{1 \leq \omega \leq N}$. Define the moment map:
\begin{equation}
    \nu: M_{\boldsymbol{d}} \rightarrow \bigoplus_{\omega =1}^N \mathrm{Hom}(V_\omega, V_{\omega+1})
\end{equation}
by the formula:
\begin{equation}
    \nu(A_\omega, B_\omega, I_\omega, J_\omega)_{1 \leq \omega \leq N} = ( B_\omega A_\omega - A_\omega B_{\omega-1} + I_\omega J_\omega)_{1 \leq \omega \leq N}
\end{equation}
And denote:
\begin{equation}
    GL(\vec{d}): = GL(V_1) \times...\times GL(V_N)
\end{equation}
which acts naturally on $ M_{\boldsymbol{d}}$. The Affine Laumon space then could be described by the following:
\begin{definition}
    The space
    \begin{equation}
        \mathcal{M}_{N,\boldsymbol{d}} = \nu^{-1}(0)^s / GL(\boldsymbol{d})
    \end{equation}
is called affine Laumon space. Where the stable quadruples are those for which the vector spaces $V_\omega$ are generated by the action of the maps $A$ and $B$ on the images of $I$'s.
\end{definition}
From this definition, the affine Laumon space comes equipped with trivial line bundles $\mathcal{W}_\omega$, whose fibers are $\mathbb{C} w_\omega$, and the tautological bundles $\mathcal{V}_\omega$.\\
The sequence determining the sheaf $E$ (\ref{Eres}) under the orbifold will split into the sum of:
\begin{equation} 
    \begin{tikzcd}
        V_{\omega} \otimes \mathcal{O}_{\mathbb{P}^2}(-1) \arrow[r, "a_\omega"] &
        \begin{matrix}
            V_\omega \otimes \mathcal{O}_{\mathbb{P}^2} \\
            \oplus \\
            V_{\omega+1} \otimes \mathcal{O}_{\mathbb{P}^2} \\
            \oplus \\
            \mathbb{C} w_{\omega+1} \otimes \mathcal{O}_{\mathbb{P}^2}
        \end{matrix} 
         \arrow[r, "b_\omega"]  & V_{\omega+1} \otimes \mathcal{O}_{\mathbb{P}^2}(1)
    \end{tikzcd}
\end{equation}
where:
\begin{gather}
    a_\omega = \begin{bmatrix}
        z_0 A_\omega - z_1 \\
        z_0 B_{\omega} - z_2 \\
        z_0 J_{\omega+1}
    \end{bmatrix} \\
b_\omega = \begin{bmatrix}
    - (z_0 B_{\omega+1} - z_2) & z_0 A_{\omega} - z_1 & z_0 I_{\omega+1}
\end{bmatrix}
\end{gather}
Hence, the affine Laumon space describes the flag of sheaves with quotient sheaves being equal to:
\begin{equation}
    E_\omega \cong \mathrm{ker} \, b_\omega \, / \, \mathrm{im} \, a_\omega
\end{equation}

\textbf{Universal sheaves.} As it follows from the previous discussion, the universal sheaf $\mathcal{U}$ under the orbifold will get replaced by the universal flag of sheaves with quotient sheaves $\mathcal{U}_\omega$ being given by the middle cohomology of the sequence:
\begin{equation} \label{Unisheaves}
    \begin{tikzcd}
        \mathcal{V}_\omega \boxtimes \mathcal{O}_{\mathbb{P}^2}(-1) \arrow[r, "{[a_\omega]}"] &
        \begin{matrix}
            \mathcal{V}_\omega \boxtimes \mathcal{O}_{\mathbb{P}^2} \\
            \oplus \\
            \mathcal{V}_{\omega+1} \boxtimes \mathcal{O}_{\mathbb{P}^2} \\
            \oplus \\
            \mathcal{W}_{\omega+1} \boxtimes \mathcal{O}_{\mathbb{P}^2}
        \end{matrix} 
         \arrow[r, "{[b_\omega]}"]  & \mathcal{V}_{\omega+1} \boxtimes \mathcal{O}_{\mathbb{P}^2}(1)
    \end{tikzcd}
\end{equation}
Namely:
\begin{equation}
    \mathcal{U}_\omega \cong \mathrm{ker} [b_\omega] \, / \, \mathrm{im} [a_\omega]
\end{equation}

\textbf{Torus action.} The action of the torus $T = T_N \times \mathbb{C}^* \times \mathbb{C}^*$ on $\mathcal{M}_{N,k}$ commutes with the action of $\mathbb{Z}_N$ (\ref{ZNaction}), hence its restriction to $\mathcal{M}_{N, \boldsymbol{d}}$ is well defined, and is given by the same formula:
\begin{equation}
    (U; Q_1, Q_2) \cdot (A_\omega, B_\omega, I_\omega, J_\omega)_{1 \leq \omega \leq N}\mapsto  (Q_1 A_\omega, Q_2 B_\omega, I_\omega U_\omega, U_\omega^{-1} J_\omega)_{1 \leq \omega \leq N}
\end{equation}
The equivariant parameters will be denoted by the same letters:
\begin{equation}
   (u_1,...,u_n;q_1,q_2):=\big(e^{\beta a_1},...,e^{\beta a_N}; e^{\beta \epsilon_1}, e^{\beta \epsilon_2} \big)
\end{equation}
\textbf{Fixed points of the torus action.} Fixed points of the torus $T$ action on $\mathcal{M}_{N,k}$ automatically lie inside 
$\mathcal{M}_{N,k}^{\mathbb{Z}_N}$. Hence we conclude that the fixed points in $\mathcal{M}_{N, \boldsymbol{d}}$ are also labeled by $N$-tuples of Young diagrams $\boldsymbol{\Lambda}$. But the restrictions of the $K$-theory classes of tautological bundles to the fixed points now look as:
\begin{gather}
    \mathcal{V}_\omega |_{\boldsymbol{\Lambda}} = \sum_{\substack{ (\alpha,a,b) \in \boldsymbol{\Lambda} \\ \alpha + b -1 = \omega}} u_\alpha q_1^{a-1} q_2^{b-1} \\
    \mathcal{W}_\omega|_{\boldsymbol{\Lambda}} = u_\omega
\end{gather}
\textbf{Correlation functions.} In the presence of the orbifold the instanton coupling is getting fractionalized:
\begin{gather} \label{fracinst}
    \mathfrak{q} \mapsto \mathfrak{q}_0, ..., \mathfrak{q}_{N-1} \\
    \mathfrak{q}_0\cdot ... \cdot \mathfrak{q}_{N-1} = \mathfrak{q}
\end{gather}
Now, in the 4d case for every class $\omega$ in the localized equivariant cohomology denoted $\bigoplus_{\boldsymbol{d}} \mathrm{H}_T^\bullet (\mathcal{M}_{N, \boldsymbol{d}})$ define:
\begin{equation}
    \langle \omega \rangle_{4d}^{\mathbb{Z}_N} = \frac{1}{Z_{4d}^{\mathbb{Z}_N}} \sum_{\boldsymbol{d} \in \mathbb{N}^N}\mathfrak{q}_0^{d_0} \cdot...\cdot \mathfrak{q}_{N-1}^{d_{N-1}} \int_{\mathcal{M}_{N, \boldsymbol{d}}} \omega \cdot c_m\big(T^*\mathcal{M}_{N, \boldsymbol{d}}\big)
\end{equation}
\begin{equation}
Z_{4d}^{\mathbb{Z}_N} =  \sum_{\boldsymbol{d} \in \mathbb{N}^N}\mathfrak{q}_0^{d_0} \cdot...\cdot \mathfrak{q}_{N-1}^{d_{N-1}}\int_{\mathcal{M}_{N, \boldsymbol{d}}} c_m\big(T^*\mathcal{M}_{N, \vec{d}}\big)   
\end{equation}
where integral is equivariant with respect to $T$.\\
Similarly in the 5d case for every equivariant sheaf $\mathcal{F}$ representing a class  in the localized equivariant $K$-theory denoted as $\bigoplus_{\boldsymbol{d}} K_T(\mathcal{M}_{N, \boldsymbol{d}})$ define:
\begin{equation}
    \langle \mathcal{F} \rangle_{5d}^{\mathbb{Z}_N} = \frac{1}{Z_{5d}^{\mathbb{Z}_N}} \sum_{\boldsymbol{d} \in \mathbb{N}^N}\mathfrak{q}_0^{d_0} \cdot...\cdot \mathfrak{q}_{N-1}^{d_{N-1}} \chi_T\big(\mathcal{M}_{N, \boldsymbol{d}} \,, \mathcal{F} \otimes\wedge_{-e^{\beta m}} T^*\mathcal{M}_{N, \boldsymbol{d}}\big)
\end{equation}
\begin{equation}
Z_{5d}^{\mathbb{Z}_N} =  \sum_{\boldsymbol{d} \in \mathbb{N}^N}\mathfrak{q}_0^{d_0} \cdot...\cdot \mathfrak{q}_{N-1}^{d_{N-1}} \chi_T\big(\mathcal{M}_{N, \boldsymbol{d}} \,,\wedge_{-e^{\beta m}} T^*\mathcal{M}_{N, \boldsymbol{d}}\big) 
\end{equation}
where, $\chi_T$ - is an equivariant Euler characteristic of the sheaf (derived pushforward to a point).\\
And finally, as in the non-orbifolded setup in the $6d$ case, we could analogously define for every equivariant elliptic cohomology class its average as a generating function of pushforwards to a point of its product with elliptic chern polynomial of the cotangent bundle $T^*\mathcal{M}_{N, \boldsymbol{d}}$.\\
And once again, in places where we are not writing the subscript of an average $\langle \cdot \rangle^{\mathbb{Z}_N}$, this means the equality in question holds for all 3 cases.\\
The averages above could be calculated using the localization formula. We will try to express the results for all 3 cases as uniformly as possible. Let $\mathcal{E}$ denote one of the classes described above. Then we have:
\begin{equation}
 \langle \mathcal{E} \rangle^{\mathbb{Z}_N} =  \sum_{\boldsymbol{\Lambda}} \mathcal{E}|_{\boldsymbol{\Lambda}} \,\mu^{\mathbb{Z}_N}[\boldsymbol{\Lambda}]
\end{equation}
\begin{equation}
    \mu^{\mathbb{Z}_N}[\boldsymbol{\Lambda}]: = \frac{\mathfrak{q}_0^{d_0(\boldsymbol{\Lambda})} \cdot...\cdot \mathfrak{q}_{N-1}^{d_{N-1}(\boldsymbol{\Lambda})} }{Z^{\mathbb{Z}_N}} \, {\bE}\big[(1-e^{\beta m})\, T^*\mathcal{M}_{N, \boldsymbol{d}}\,|_{\boldsymbol{\Lambda}} \big]   
\end{equation}
where:
\begin{equation}
    d_\omega(\boldsymbol{\Lambda}): = \sum_{\substack{ (\alpha,a,b) \in \boldsymbol{\Lambda} \\ \alpha + b -1 = \omega}} 1
\end{equation}
and:
\begin{equation}
    T^*\mathcal{M}_{N, \boldsymbol{d}} = \sum_{\omega}\mathcal{W}_\omega \mathcal{V}^*_\omega + q_1 q_2 \mathcal{V}_{\omega-1} \mathcal{W}^*_\omega - (1-q_1) \mathcal{V}_\omega \mathcal{V}^*_\omega + q_2 (1-q_1)\mathcal{V}_{\omega-1} \mathcal{V}^*_\omega  
\end{equation}
See formulas (119), (123), (145) in \cite{BPSCFT4}. and in different notations formulas (1.10) in \cite{Negut1} together with (3.23) in \cite{Negut2}.

\textbf{$Y$ - observable.} In the orbifolded case $\mathcal{Y}$-observable gets replaced by $N$ ones $\mathcal{Y}_0, ...,\mathcal{Y}_{N-1} $ (see formula (85) in the arxiv version of \cite{BPSCFT5}). Mathematically they are defined as follows:
\begin{equation} \label{Yobsorb}
    \mathcal{Y}_\omega(x): = \begin{cases}
        c_x(R \pi_*(\mathcal{U}_\omega \otimes  p^* \mathcal{O}_0)) \,\, \text{in the $4d$ case} \\
        \wedge_{-e^{-\beta x}} R \pi_*(\mathcal{U}_\omega \otimes  p^* \mathcal{O}_0) \,\, \text{in the $5d$ case} \\
        c_x^{Ell}(R \pi_*(\mathcal{U}_\omega \otimes  p^* \mathcal{O}_0)) \,\, \text{in the $6d$ case}
    \end{cases}
\end{equation}
where $\pi$ and $p$ are projections from the product $\mathcal{M}_{N, \boldsymbol{d}} \times \mathbb{P}^2$ to $\mathcal{M}_{N, \boldsymbol{d}}$ and $\mathbb{P}^2$ correspondingly.
\begin{equation}
    \begin{tikzcd}[column sep=small]
& \mathcal{M}_{N, \boldsymbol{d}} \times \mathbb{P}^2 \arrow[dl, "\pi"] \arrow[dr, "p"] & \\
\mathcal{M}_{N, \boldsymbol{d}}  & & \mathbb{P}^2
\end{tikzcd}
\end{equation}
and $\mathcal{O}_0$ is the skyscraper sheaf of $0 \in \mathbb{P}^2$.\\
From (\ref{Unisheaves}) one can easily get the expression for the restriction of $\mathcal{Y}_\omega(x)$ to the fixed point $\boldsymbol{\Lambda}$:
\begin{equation}
  \mathcal{Y}_\omega(x)|_{\boldsymbol{\Lambda} }=  {\bE}\big[-e^{\beta x}\big( \mathcal{W}_\omega^* - (1-q_1^{-1}) \mathcal{V}_\omega^* + q_2^{-1} (1-q_1^{-1}) \mathcal{V}_{\omega-1} \big)|_{\boldsymbol{\Lambda}}\big]
\end{equation}
or explicitly:
\begin{multline}
\mathcal{Y}_{\omega}(x)|_{\boldsymbol{\Lambda}} = \\ = \vartheta(x-a_\omega) \prod_{(\alpha,a,b) \in \boldsymbol{\Lambda}} \Bigg(\frac{\vartheta(x-a_\alpha - c_{(ab)} - \epsilon_1)}{\vartheta(x-a_\alpha - c_{(ab)})} \Bigg)^{\delta_N(\alpha+b-1-\omega)}  \Bigg(\frac{\vartheta(x-a_\alpha - c_{(ab)} - \epsilon_2)}{\vartheta(x-a_\alpha - c_{(ab)} - \epsilon)} \Bigg)^{\delta_N(\alpha+b-\omega)}   
\end{multline}
where
\begin{equation}
    \delta_N \, \, \text{- delta function on} \, \mathbb{Z}_N
\end{equation}

\textbf{Relation to the non-orbifolded case.}
By direct computation from the definition of $\mathcal{Y}_\omega$ one has
\begin{equation} \label{YtoY}
\prod_{\omega = 0}^{N-1} \mathcal{Y}_{\omega}(x + \omega \epsilon_2) = \mathcal{Y}(x)  
\end{equation}
where $\epsilon_2$ in the RHS  in the definition of $\mathcal{Y}(x)$ is replaced with $N \epsilon_2$ (see formula (122) from \cite{BPSCFT4} or formula (3.19) from \cite{NLJ} for details.)\\

\textbf{$\EuScript{X}$-observable.} Armed with the same logic as in the non-orbifolded case let us give the following:
\begin{definition}
    An observable $\EuScript{X}_\omega(x)$ ($\omega \in \mathbb{Z}_N$) (also denoted as $\EuScript{X}_{34, \omega}(x)$) which is a power series in $\mathfrak{q}_0,...,\mathfrak{q}_{N-1}$ depending on a variable $x$ only through functions $\mathcal{Y}_0, ..., \mathcal{Y}_{N-1}$ at different points and satisfying 2 properties:\\
1)
\begin{equation}
  \EuScript{X}_\omega(x) = \mathcal{Y}_{\omega+1}(x + \epsilon) + \sum_{\omega'} O(\mathfrak{q}_{\omega'}) 
\end{equation}
2) $\langle \EuScript{X}(x) \rangle^{\mathbb{Z}_N}$ has no poles for $x \in \mathbb{C}$\\
is called a $qq$-Character (for the instantons on the orbifold).
\end{definition}
The concrete expression for it could be found in \cite{BPSCFT5} (see formula (68) in the arxiv version):
\begin{multline} \label{qqCharOrb}
    \EuScript{X}_\omega(x)|_{\boldsymbol{\Lambda}} = \mathcal{Y}_{\omega+1}(x+\epsilon)|_{\boldsymbol{\Lambda}} \sum_{\lambda} \mathbb{Q}^\lambda_\omega \mathbb{B}^\lambda_\omega \prod_{ (ij) \in \lambda} \frac{\mathcal{Y}_{\omega+2-j}(x + \sigma_{(ij)} +m + \epsilon)|_{\boldsymbol{\Lambda}} \mathcal{Y}_{\omega+1-j}(x + \sigma_{(ij)} -m )|_{\boldsymbol{\Lambda}} }{\mathcal{Y}_{\omega+2-j}(x + \sigma_{(ij)} + \epsilon)|_{\boldsymbol{\Lambda}} \mathcal{Y}_{\omega+1-j}(x + \sigma_{(ij)})|_{\boldsymbol{\Lambda}}}
\end{multline}
where:
\begin{equation}
    \sigma_\square:= \sigma_{(ij)} = \epsilon_3(i-1) + \epsilon_4(j-1) = m(i-j) + \epsilon (1-j) 
\end{equation}
\begin{equation}
    \mathbb{Q}^\lambda_\omega = \prod_{j = 1}^{\lambda_1} \mathfrak{q}_{\omega+1-j}^{\lambda_j^{t}}
\end{equation}
And $\mathbb{B}^\lambda_\omega$ some function of $\epsilon$'s which becomes 1 in the limit $\epsilon_1 = 0$ - case we will be interested in.\\
\textbf{$\epsilon_1 \rightarrow 0$ limit.} In the limit $\epsilon_1 = 0$ the sum over $N$-tuples of Young diagrams $\boldsymbol{\Lambda}$ will be reduced to the evaluation at the limit shape $\boldsymbol{\Lambda}^\infty$, which is determined by the equations:
\begin{equation} \label{limitshape}
\frac{\mu^{\mathbb{Z}_N}[\boldsymbol{\Lambda}^\infty + \square_\omega]}{\mu^{\mathbb{Z}_N}[\boldsymbol{\Lambda}^\infty ]} =1 
\end{equation}
For any addable box $\square_\omega$ of weight $\omega \in \mathbb{Z}_N$, for all such $\omega$'s.\\
So, let us define:
\begin{gather} \label{NSlimitorb}
Y_\omega(x):= \lim_{\epsilon_1 \rightarrow 0} \langle \mathcal{Y}_\omega(x) \rangle^{\mathbb{Z}_N} =  Y_\omega(x)|_{\boldsymbol{\Lambda}^\infty} \\
\chi_\omega(x):=\lim_{\epsilon_1 \rightarrow 0} \langle \chi_\omega(x) \rangle^{\mathbb{Z}_N} =  \chi_\omega(x) |_{\boldsymbol{\Lambda}^\infty}
\end{gather}
And hence one gets:
\begin{equation} \label{qq-CharLim}
    \chi_\omega(x) = Y_{\omega+1}(x+\epsilon) \sum_{\lambda} \mathbb{Q}^\lambda_\omega  \prod_{ (ij) \in \lambda} \frac{Y_{\omega+2-j}(x + \sigma_{(ij)} +m + \epsilon) Y_{\omega+1-j}(x + \sigma_{(ij)} -m )}{Y_{\omega+2-j}(x + \sigma_{(ij)} + \epsilon) Y_{\omega+1-j}(x + \sigma_{(ij)})}
\end{equation}
The limit shape equation gives the equations on $Y_\omega$. If by $e^{\beta x_\square}$ we will denote the weight of the vector corresponding to the box $\square_\omega$ we add, (\ref{limitshape}) will look like (see formula (139) in \cite{BPSCFT1}):
\begin{equation} \label{limitshape1}
    \mathfrak{q}_\omega \frac{Y_{\omega+1}(x+m+ \epsilon) Y_\omega(x-m) }{Y_{\omega+1}(x + \epsilon) Y_\omega(x)} \Bigg|_{x = x_{\square_\omega}} =- 1
\end{equation}
By definition $\chi_\omega(x)$ is regular in $\mathbb{C}$. Therefore by sending $x$ to $\infty$ in both sides of the equation (\ref{qq-CharLim}) and expanding them up to a constant term in $x$ in 4d case we could find in the limit $\epsilon_1 \rightarrow 0$ (see formulas (87), (96) and (214) in \cite{BPSCFT5}) the full expression for $\chi_\omega$:
\begin{equation} \label{qqCharCl}
    \chi_\omega(x) = \frac{\vartheta(x+\epsilon - p_\omega)}{\prod_{l=1}^\infty \Big( 1 -\frac{x_\omega}{x_{\omega-l}} \Big)}
\end{equation}
where $x_0,...,x_{N-1}$ and $\mathfrak{q}$ are introduced as follows:
\begin{gather}
    \mathfrak{q}_\omega = \frac{x_\omega}{x_{\omega-1}}, \qquad \omega = 1,..., N-1 \\
    \mathfrak{q}_0 = \mathfrak{q} \frac{x_0}{x_{N-1}}
\end{gather}
and extended to infinity by quasiperiodicity.
$p_\omega$'s - are canonically conjugate momenta to $\ln x_\omega$.\\
\textbf{$\epsilon_1 , \epsilon_2 \rightarrow 0$ limit.} In this fully classical limit we will still denote by $Y_\omega$ the limit of $Y_\omega$  defined above under $\epsilon_2 \rightarrow 0$. Their relation to the scalar $Y$ now becomes:
\begin{equation}
\prod_{\omega = 0}^{N-1} Y_{\omega}(x) = Y(x)  
\end{equation}

\textbf{Folded instantons observables.}
In the presence of the orbifold folded instantons, observables also get fractionalized. Let us introduce $\tilde{\mathcal{Q}}_\omega(x)$, so that:
\begin{equation} \label{Qorb}
    \mathcal{Y}_\omega(x) = \frac{\tilde{\mathcal{Q}}_\omega(x)}{\tilde{\mathcal{Q}}_\omega(x+ \epsilon_2 N)}
\end{equation}
And using it, define:
\begin{equation} \label{Y24orb}
    \mathcal{Y}_{24, \omega}(x) = \frac{\tilde{\mathcal{Q}}_\omega(x )}{\tilde{\mathcal{Q}}_\omega(x + m)}
\end{equation}
Now similarly to the non-orbifold case, we would like to give the following:
\begin{definition} \label{foldeddeforb}
    An observable $\EuScript{X}_{24, \omega}(x)$, $\,\omega =0,...,N-1$, which is a power series in $\mathfrak{q}_0,...,\mathfrak{q}_{N-1}$ depending on a variable $x$ only through functions $\mathcal{Y}_{24, 1},...,\mathcal{Y}_{24, N}$ at different points and satisfying 2 properties:\\
    1) 
\begin{equation}
\EuScript{X}_{24, \omega}(x) =     \prod_{c = 0}^{\omega}\mathcal{Y}_{24, c}(x+(c- \omega +N)\epsilon_2) \prod_{c = \omega+1}^{N-1}\mathcal{Y}_{24, c}(x+(c- \omega)\epsilon_2)  + \sum_{\omega'} O(\mathfrak{q}_{\omega'})
\end{equation}
   2) $\langle \EuScript{X}_{24,\omega}(x) \rangle$ has no poles for $x \in \mathbb{C}$\\
   is called a folded instanton observable (in the presence of the orbifold).
\end{definition}
For our purposes, we will only need the explicit expression in the limit $\epsilon_{1} \rightarrow 0$, which is given by (in the notations (\ref{foldednotations}))
\begin{equation} \label{foldedNSorb}
\chi_{24, \omega} =  \lim_{\epsilon_1 \rightarrow 0} \Big\langle\sum_{\mu} \mathbb{Q}_\omega^{(24), \,\mu} {\bE} \Big[ -q_1 q_2 \frac{P_3}{1 - q_2^N} \sum_{\substack{a =0,...,N-1 \\ \omega', \omega'' \in \mathbb{Z}_N\\ a - \omega' + \omega'' = - \omega \, \text{mod} N}} q_2^{a} S_{12, \, \omega'+1}^* S_{24, \, \omega''} \Big|_{\mu} \Big]  \Big\rangle^{\mathbb{Z}_N}    
\end{equation}
Where:
\begin{equation} \label{fug24}
\mathbb{Q}_\omega^{(24), \,\mu} = \prod_{(i,j) \in \mu} \mathfrak{q}_{\omega + i-j}    
\end{equation}
Let us define:
\begin{equation}
    \tilde{Q}_\omega: =\lim_{\epsilon_1 \rightarrow 0} \langle \tilde{\mathcal{Q}}_\omega  \rangle^{\mathbb{Z}_N} =\lim_{\epsilon_1 \rightarrow 0} {\bE} \Big[ -\frac{1}{1 - q_2^N} S_{12, \, \omega}^* \Big|_{\boldsymbol{\Lambda}^\infty}\Big ] 
\end{equation}
So that:
\begin{equation}
    Y_\omega(x) = \frac{\tilde{Q}_\omega(x)}{\tilde{Q}_\omega(x + \epsilon_2 N)}.
\end{equation}
Let us also introduce the notation:
\begin{equation}
    Y_{24,\omega}(x): = \lim_{\epsilon_1 \rightarrow 0} \langle \mathcal{Y}_{24,\omega}  \rangle^{\mathbb{Z}_N} = \frac{\tilde{Q}_\omega(x)}{\tilde{Q}_\omega(x + m)}
\end{equation}

\section{Summary of main results}

\subsection{Factorization formula for qq-Character}

Our first result in the setup of the sections above is that a certain $\theta$-transform (infinite order shift operator) of the orbifolded $qq$-Characters matrix (in the $\epsilon_1 =0$ limit) could be expressed in terms of the infinite product. Hypothetically the $\theta$-transform operation corresponds to replacing the 34-plane which $U(1)$-instantons live on with a Taub-NUT space. The essence of the factorization formula proof lies then in the boson-fermion correspondence. The partition function of the noncommutative  $U(1)$-instantons (Hilbert scheme of points on $\mathbb{C}^2$) is a sum over Young diagrams, which label states in the boson Fock space. They are mapped to fermions with charge zero. Taub-Nut space allows for a non-trivial value of the Wilson loop at infinity, which corresponds to fermions with non-zero charges. And it is expected that the free fermion partition function takes the form of an infinite product. \\

Now let us introduce all the necessary notations for the main identity. We will need the following matrices: 
\begin{gather} \label{factnotation}
    \hat{\chi}(x) = \text{diag} \big( \chi_\omega(x) \big)_{\omega = 0 }^{\omega = N-1} \\
    \hat{Y}(x) = \text{diag} \big( Y_{\omega+1}(x) \big)_{\omega = 0 }^{\omega = N-1} \\
    \hat{\slashed{Q}} = \text{diag} \big( \mathfrak{q}_{\omega} \big)_{\omega = 0 }^{\omega = N-1}
\end{gather}
and the cyclic shift matrix:
\begin{equation} \label{cycl}
    C = \begin{pmatrix} 0 & 1 & 0 &...& 0\\ 0 & 0 & 1 &...& 0 \\ \vdots & \vdots & \ddots & \ddots & \vdots \\ 0 & 0 & 0 & ... & 1 \\ 1 & 0 & 0 & ... & 0\end{pmatrix} = 
    \sum_{\omega} e_\omega \otimes e_{\omega+1}^t
\end{equation}
\begin{theorem} \label{factorizationthm}
Let $z^{\frac{1}{N}}$ - be a formal parameter. Put $\epsilon_1 =0$, hence $\epsilon = \epsilon_2$. The following formula holds in $Mat_{N}((z^{\frac{-1}{N}}e^{\epsilon \partial_x}, z^{\frac{1}{N}} e^{-\epsilon \partial_x}))$:
\begin{multline} \label{factorizationform}
\overleftarrow{\prod_{n=1}^\infty} \Big(1- z^{\frac{1}{N}} \hat{\slashed{Q}}^n \hat{Y}(x+nm +\epsilon) \, C^{-1}e^{-\epsilon \partial_x} \hat{Y}(x+(n-1)m +\epsilon)^{-1} \Big)  \hat{Y}(x + \epsilon) \cdot \\
\cdot \overrightarrow{\prod_{n=0}^\infty} \Big(1 - z^{\frac{-1}{N}} e^{\epsilon \partial_x} (\hat{\slashed{Q}} C^{-1})^n \frac{\hat{Y}(x-(n+1)m - n \epsilon)}{\hat{Y}(x-n m - n \epsilon)} C^{\, n+1}\Big) = \\=
\sum_{n = 0}^\infty (-1)^n z^{\frac{n}{N}} \hat{\chi}(x+nm) \prod_{k = 0}^{n-1} (\hat{\slashed{Q}}^{n-k} C^{-1}) \, e^{-n \epsilon \partial_x} + \\
+ \sum_{n = 1}^\infty (-1)^n z^{\frac{-n}{N}} \hat{\chi}(x-nm) \prod_{k=1}^{n-1} C^k (\hat{\slashed{Q}}C^{-1})^k \, C^n \, e^{n \epsilon \partial_x}
\end{multline}    
\end{theorem}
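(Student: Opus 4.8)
The plan is to read \eqref{factorizationform} as an operator-valued, non-commutative Jacobi triple product. The right-hand side is a Laurent series in $z^{1/N}$ whose charge-$n$ sector is the shifted $qq$-character $\hat{\chi}(x\pm nm)$, dressed by powers of $\hat{\slashed{Q}}C^{-1}$ and a momentum shift $e^{\mp n\epsilon\partial_x}$, while the left-hand side is the associated ``product side''. In the scalar limit $N=1$, $\hat{Y}\equiv 1$, this degenerates to the classical product formula for $\theta_{\mathfrak{q}}$ recorded in the list of notations, so the entire content is the promotion of that identity to the ring $Mat_N((z^{-1/N}e^{\epsilon\partial_x},z^{1/N}e^{-\epsilon\partial_x}))$, with the cyclic matrix $C$ and the index-shifting diagonals $\hat{Y}$, $\hat{\slashed{Q}}$ recording the $\mathbb{Z}_N$ orbifold data.

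First I would build the free-fermion/boson realization anticipated in the discussion preceding the statement. The sum over Young diagrams $\lambda$ that defines each $\chi_\omega$ through \eqref{qq-CharLim} is a sum over states in an ($N$-component, because of the orbifold) bosonic Fock space; under the boson-fermion correspondence these become free-fermion states, and the Taub-NUT fugacity $z^{1/N}$ is exactly the grading by fermion charge. I would then write the middle factor $\hat{Y}(x+\epsilon)$ together with the two semi-infinite products as an ordered product of half-vertex operators, whose arguments are the shifted $\hat{Y}$'s and whose zero-mode content is carried by $\hat{\slashed{Q}}C^{-1}$ and $e^{\mp\epsilon\partial_x}$. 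Normal-ordering this product --- pushing all annihilation pieces to the right --- produces the contraction factors that reassemble, sector by sector in $z$, the series defining $\hat{\chi}(x\pm nm)$, and the asserted equality becomes the statement that the bosonic (product) and fermionic (sum) evaluations of the same object agree.

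More concretely, and as the computation that the fermionic picture organizes, I would extract the coefficient of each power $z^{n/N}$ on both sides. Already the $z^0$ sector is instructive: on the right it is simply $\hat{\chi}(x)$, whereas on the left it is the infinite sum over all ways of choosing $k$ factors from the first product and $k$ factors from the second, for every $k\geq 0$, sandwiching $\hat{Y}(x+\epsilon)$. The claim that this infinite family of cross-terms reassembles the Young-diagram series \eqref{qq-CharLim} for $\chi_\omega$ is the heart of the matter: each decreasing index sequence $n_1>\cdots>n_k$ drawn from the first product, paired with an index set from the second, encodes a Maya/Young diagram, and the accompanying $\hat{Y}$-ratios and $\hat{\slashed{Q}}$-weights match the summand of \eqref{qq-CharLim} box by box. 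As an independent check I would confirm that both sides are regular for $x\in\mathbb{C}$ --- property (2) in the definition of the $qq$-character --- using the limit-shape relation \eqref{limitshape1}, which is precisely the difference equation that forces each successive factor of the product.

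The hard part will be the non-commutativity, concentrated in the interaction of the two semi-infinite products. The arrows on $\overleftarrow{\prod}$ and $\overrightarrow{\prod}$ are essential: this ordering is what makes the cross-contractions between a factor $1-z^{1/N}\hat{\slashed{Q}}^n\hat{Y}(x+nm+\epsilon)C^{-1}e^{-\epsilon\partial_x}\hat{Y}(x+(n-1)m+\epsilon)^{-1}$ and a factor of the second product telescope, while the wrong-order terms must cancel against the diagonal $\hat{Y}(x+\epsilon)$ sitting between them. I would organize everything around two structural relations --- conjugation by $C$ cyclically shifts the index of $\hat{Y}$ and $\hat{\slashed{Q}}$ in $\mathbb{Z}_N$, and $e^{\pm\epsilon\partial_x}\hat{Y}(x)=\hat{Y}(x\pm\epsilon)e^{\pm\epsilon\partial_x}$ --- and the book-keeping of the products $\prod_{k=0}^{n-1}(\hat{\slashed{Q}}^{n-k}C^{-1})$ and $\prod_{k=1}^{n-1}C^k(\hat{\slashed{Q}}C^{-1})^kC^n$ on the right, which encode the fractional fugacities $\mathbb{Q}^\lambda_\omega$ through the $\mathbb{Z}_N$-weight of each box. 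Showing that these reproduce the $\mathbb{Q}^\lambda_\omega$ of \eqref{qqCharOrb} is the essential combinatorial identity underlying the whole formula.
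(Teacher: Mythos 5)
Your ``more concrete'' computation is in fact exactly the paper's proof: expand the left-hand side of \eqref{factorizationform}, observe that a choice of $r$ factors with indices $n_0>n_1>\cdots>n_{r-1}\geq 1$ from the left semi-infinite product together with $s$ factors with indices $0\leq k_0<k_1<\cdots<k_{s-1}$ from the right one encodes a Young diagram $\lambda$ and a charge $p=r-s$ through the Frobenius-type coordinates $n_j=\lambda_{j+1}^t-j+p$, $k_{s-i}=\lambda_i-i-p$, and then match the result against the telescoped rewriting of \eqref{qq-CharLim}, charge sector by charge sector. The boson--fermion/half-vertex-operator framing you lead with is the same heuristic the paper offers in the paragraph preceding the theorem, but it is not used in the actual argument: nothing is normal-ordered, and no contraction factors are computed; the proof is a direct combinatorial bijection plus bookkeeping of the diagonal matrices.

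The genuine gap is that the sentence ``the accompanying $\hat{Y}$-ratios and $\hat{\slashed{Q}}$-weights match the summand of \eqref{qq-CharLim} box by box'' is the entire content of the theorem, and you assert it rather than prove it. In the paper this occupies essentially the whole proof and is done by two separate inductions on $|\lambda|$: one for the $Y$-factors, with base case $\lambda=\varnothing$ treated separately for $p\geq 0$ (so $s=0$, $r=p$) and $p<0$ (so $r=0$, $s=-p$), and induction step given by adding a box to the $k$-th row, where the two cases $k-\lambda_k-1+p\geq 0$ and $k-\lambda_k-1+p<0$ affect different halves of the product yet produce the same ratio $\mathrm{LHS}(\lambda+1_k,p)/\mathrm{LHS}(\lambda,p)$; and a second, parallel induction matching the fractional couplings $\mathfrak{q}(\lambda,p)$ against the $\mathbb{Q}^\lambda_\omega$ of \eqref{qqCharOrb} dressed by the charge-$p$ diagonal matrix $D_p$. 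Until you carry out at least the induction step (verify that adding one box multiplies both sides by the same factor $\mathfrak{q}_{\omega-\lambda_k}$ times the same four-fold ratio of $Y$'s) and the two base cases, the proposal is a correct plan but not a proof. Two smaller points: the ``independent check'' of regularity via \eqref{limitshape1} plays no role in the proof and can be dropped; and the non-commutativity you worry about is handled entirely by the fixed ordering of the expansion --- no cross-contraction or cancellation against the middle factor $\hat{Y}(x+\epsilon)$ is ever needed, since each monomial in the expansion is already matched individually to a term of the right-hand side.
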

Notice that by conjugation parameter $z$ could be eliminated.

\subsubsection{Change of basis}
Let us write the formula down in the basis which does not have fractional powers of $z$. For that, we need to introduce the following matrices:
\begin{equation} \label{Sz}
    S_z = \text{diag}(z^{\frac{\omega}{N}})_{\omega = 0}^{\omega = N-1}
\end{equation}
and:
\begin{equation} \label{cyclz}
    C_{z} = \begin{pmatrix} 0 & 1 & 0 &...& 0\\ 0 & 0 & 1 &...& 0 \\ \vdots & \vdots & \ddots & \ddots & \vdots \\ 0 & 0 & 0 & ... & 1 \\ z^{-1} & 0 & 0 & ... & 0\end{pmatrix} 
\end{equation}
With the help of the identity:
\begin{equation} \label{SzId}
z^{-1/N} S_{z}^{-1} C S_z = C_{z}   
\end{equation}
The main formula could be rewritten as follows:
\begin{multline}
\overleftarrow{\prod_{n=1}^\infty} \Big(1-  \hat{\slashed{Q}}^n \hat{Y}(x+nm +\epsilon) \, C_{z}^{-1}e^{-\epsilon \partial_x} \hat{Y}(x+(n-1)m +\epsilon)^{-1} \Big)  \hat{Y}(x + \epsilon) \cdot \\
\cdot \overrightarrow{\prod_{n=0}^\infty} \Big(1 -  e^{\epsilon \partial_x} (\hat{\slashed{Q}} C_{z}^{-1})^n \frac{\hat{Y}(x-(n+1)m - n \epsilon)}{\hat{Y}(x-n m - n \epsilon)} C_{z}^{\, n+1}\Big) = \\=
\sum_{n = 0}^\infty (-1)^n \hat{\chi}(x+nm) \prod_{k = 0}^{n-1} (\hat{\slashed{Q}}^{n-k} C_{z}^{-1}) \, e^{-n \epsilon \partial_x} + \\
+ \sum_{n = 1}^\infty (-1)^n \hat{\chi}(x-nm) \prod_{k=1}^{n-1} C_{z}^k (\hat{\slashed{Q}}C_{z}^{-1})^k \, C_{z}^n \, e^{n \epsilon \partial_x}
\end{multline}
Let us also notice, that the nicest way to write down the formula is probably to introduce the quantum version of the operator $C_{z}$:
\begin{equation}
    \hat{C}_z = C_{z} e^{\epsilon \partial_x}
\end{equation}
Then it will take the form:
\begin{multline} \label{factorization}
\overleftarrow{\prod_{n=1}^\infty} \Big(1-  \hat{\slashed{Q}}^n \hat{Y}(x+nm +\epsilon) \, \hat{C}_z^{-1} \hat{Y}(x+(n-1)m +\epsilon)^{-1} \Big)  \hat{Y}(x + \epsilon) \cdot \\
\cdot \overrightarrow{\prod_{n=0}^\infty} \Big(1 -  (\hat{\slashed{Q}} \hat{C}_z^{-1})^n \frac{\hat{Y}(x-(n+1)m + \epsilon)}{\hat{Y}(x-n m + \epsilon)} \hat{C}_z^{\, n+1}\Big) = \\=
\sum_{n = 0}^\infty (-1)^n \hat{\chi}(x+nm) \prod_{k = 0}^{n-1} (\hat{\slashed{Q}}^{n-k} \hat{C}_z^{-1})  + \\
+ \sum_{n = 1}^\infty (-1)^n \hat{\chi}(x-nm) \prod_{k=1}^{n-1} \hat{C}_z^k (\hat{\slashed{Q}}\hat{C}_z^{-1})^k \, \hat{C}_z^n 
\end{multline}

\subsection{Transformation property of  the factorization formula}
If we denote the LHS of the main formula (\ref{factorization}) by the symbol:
\begin{multline}
\hat{\mathfrak{D}}(x,z) =  \overleftarrow{\prod_{n=1}^\infty} \Big(1-  \hat{\slashed{Q}}^n \hat{Y}(x+nm +\epsilon) \, \hat{C}_z^{-1} \hat{Y}(x+(n-1)m +\epsilon)^{-1} \Big)  \hat{Y}(x + \epsilon) \cdot \\
\cdot \overrightarrow{\prod_{n=0}^\infty} \Big(1 -  (\hat{\slashed{Q}} \hat{C}_z^{-1})^n \frac{\hat{Y}(x-(n+1)m + \epsilon)}{\hat{Y}(x-n m + \epsilon)} \hat{C}_z^{\, n+1}\Big)
\end{multline}
Then the following theorem holds:
\begin{theorem}
\begin{equation}
  X \hat{\mathfrak{D}}(x+m,\mathfrak{q} z) = -\hat{\mathfrak{D}}(x,z) X \hat{C}_{\mathfrak{q} z}  
\end{equation}
where:
\begin{equation} \label{Xmatrix}
    X = \textnormal{diag}(x_\omega)_{\omega = 0}^{\omega = N-1}
\end{equation}
\end{theorem}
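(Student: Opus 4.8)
The plan is to reduce the identity to an intertwining relation for a single automorphism and then read it off from the factorization theorem. Introduce the operation $\mathcal{T}(F):=X\,F(x+m,\mathfrak{q}z)\,X^{-1}$, i.e. shift $x\mapsto x+m$, $z\mapsto\mathfrak{q}z$ and then conjugate by the constant diagonal matrix $X$. Since substitution and conjugation are both compatible with composition, $\mathcal{T}$ is an algebra automorphism of the operator ring $Mat_N((\dots))$. Granting the key relation $X\hat{C}_{\mathfrak{q}z}=\hat{C}_zX\hat{\slashed{Q}}^{-1}$ proved below, the assertion $X\hat{\mathfrak{D}}(x+m,\mathfrak{q}z)=-\hat{\mathfrak{D}}(x,z)X\hat{C}_{\mathfrak{q}z}$ becomes equivalent to the compact statement $\mathcal{T}(\hat{\mathfrak{D}})=-\hat{\mathfrak{D}}\,\hat{C}_z\hat{\slashed{Q}}^{-1}=-\hat{\mathfrak{D}}\,\mathcal{T}(\hat{C}_z)$, which is what I would establish.

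The key relation is the lemma $\hat{\slashed{Q}}\hat{C}_z^{-1}=X\hat{C}_{\mathfrak{q}z}^{-1}X^{-1}$, equivalently $\mathcal{T}(\hat{C}_z)=\hat{C}_z\hat{\slashed{Q}}^{-1}$. I would prove it by a direct matrix computation: $C_z$ and $C_{\mathfrak{q}z}$ differ only in the corner entry ($z^{-1}$ versus $(\mathfrak{q}z)^{-1}$), and evaluating $XC_{\mathfrak{q}z}^{-1}X^{-1}$ entrywise yields $x_{\omega+1}/x_\omega=\mathfrak{q}_{\omega+1}$ on the subdiagonal and $\mathfrak{q}z\,x_0/x_{N-1}=\mathfrak{q}_0 z$ in the corner, which is exactly $\hat{\slashed{Q}}C_z^{-1}$; the shift $e^{\epsilon\partial_x}$ commutes with the $x$-independent $X$ and $\hat{\slashed{Q}}$. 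Since $\hat{\slashed{Q}}$, $\hat{\chi}$ and $\hat{Y}$ are diagonal and $z$-independent, one reads off the action of $\mathcal{T}$ on generators: $\mathcal{T}(\hat{\slashed{Q}})=\hat{\slashed{Q}}$, $\mathcal{T}(\hat\chi(x))=\hat\chi(x+m)$, $\mathcal{T}(\hat Y(x))=\hat Y(x+m)$, and $\mathcal{T}(\hat{C}_z^{\pm1})=(\hat{C}_z\hat{\slashed{Q}}^{-1})^{\pm1}$. Notably, no detailed property of $\chi_\omega$ or $Y_\omega$ is needed, only that they are diagonal in $\omega$.

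Next I would insert the series side of Theorem \ref{factorizationthm}, writing $\hat{\mathfrak{D}}=\Sigma^{+}+\Sigma^{-}$ with $\Sigma^{+}=\sum_{n\ge0}(-1)^n\hat\chi(x+nm)G^{+}_n$ and $\Sigma^{-}=\sum_{n\ge1}(-1)^n\hat\chi(x-nm)G^{-}_n$, where $G^{+}_n=\prod_{k=0}^{n-1}(\hat{\slashed{Q}}^{\,n-k}\hat{C}_z^{-1})$ and $G^{-}_n=\prod_{k=1}^{n-1}\hat{C}_z^{\,k}(\hat{\slashed{Q}}\hat{C}_z^{-1})^{k}\,\hat{C}_z^{\,n}$ (set $G^{\pm}_0:=1$). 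Applying $\mathcal{T}$ shifts every $\hat\chi$-argument by $m$ and hence reindexes both sums, so the full claim reduces to the two tail-covariance identities $\mathcal{T}(G^{+}_n)=G^{+}_{n+1}\hat{C}_z\hat{\slashed{Q}}^{-1}$ and $\mathcal{T}(G^{-}_n)=G^{-}_{n-1}\hat{C}_z\hat{\slashed{Q}}^{-1}$, both for $n\ge1$. Substituting these and shifting the index gives $\mathcal{T}(\Sigma^{+})=-\Sigma^{+}\hat{C}_z\hat{\slashed{Q}}^{-1}+\hat\chi(x)\hat{C}_z\hat{\slashed{Q}}^{-1}$ and $\mathcal{T}(\Sigma^{-})=-\Sigma^{-}\hat{C}_z\hat{\slashed{Q}}^{-1}-\hat\chi(x)\hat{C}_z\hat{\slashed{Q}}^{-1}$; the two boundary terms $\pm\hat\chi(x)\hat{C}_z\hat{\slashed{Q}}^{-1}$ cancel, leaving $\mathcal{T}(\hat{\mathfrak{D}})=-\hat{\mathfrak{D}}\hat{C}_z\hat{\slashed{Q}}^{-1}$. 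Along the way I must check that reindexing these infinite products and sums is legitimate in the completed ring.

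It remains to prove the tail identities. The $+$ case is immediate: $\mathcal{T}$ sends each factor $\hat{\slashed{Q}}^{\,n-k}\hat{C}_z^{-1}$ to $\hat{\slashed{Q}}^{\,n-k+1}\hat{C}_z^{-1}$, so $G^{+}_n$ becomes $G^{+}_{n+1}$ with its last factor $\hat{\slashed{Q}}\hat{C}_z^{-1}$ stripped off, i.e. $G^{+}_{n+1}\hat{C}_z\hat{\slashed{Q}}^{-1}$. The $-$ case is the main obstacle. I would argue by induction using the recursion $G^{-}_n=G^{-}_{n-1}(\hat{\slashed{Q}}\hat{C}_z^{-1})^{n-1}\hat{C}_z^{\,n}$; the inductive step collapses, after cancelling $G^{-}_{n-1}$, to the operator identity $\hat{C}_z\hat{\slashed{Q}}^{-1}(\hat{\slashed{Q}}^{2}\hat{C}_z^{-1})^{n}(\hat{C}_z\hat{\slashed{Q}}^{-1})^{n+1}=(\hat{\slashed{Q}}\hat{C}_z^{-1})^{n-1}\hat{C}_z^{\,n+1}\hat{\slashed{Q}}^{-1}$. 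This is \emph{not} a free-word identity: $\hat{\slashed{Q}}$ and $\hat{C}_z$ fail to commute, obeying instead the cyclic shift relation $\hat{\slashed{Q}}\hat{C}_z=\hat{C}_z\hat{\slashed{Q}}^{\downarrow}$ with $\hat{\slashed{Q}}^{\downarrow}=\mathrm{diag}(\mathfrak{q}_{\omega-1})$. Using the lemma to rewrite every $\hat{C}_z$ as $X\hat{C}_{\mathfrak{q}z}\hat{\slashed{Q}}X^{-1}$ turns both sides into $X(\cdots)X^{-1}$ with inner words in $\hat{C}_{\mathfrak{q}z}$ and $\hat{\slashed{Q}}$; commuting all diagonal factors through the shifts and using that $\hat{\slashed{Q}}$ and all its cyclic shifts commute collapses each inner word to $\hat{C}_{\mathfrak{q}z}^{2}\prod_{l=1}^{n}\hat{\slashed{Q}}^{\downarrow l}$. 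Correctly bookkeeping these shifts of $\hat{\slashed{Q}}$—equivalently, the interaction of $X$-conjugation with the single $\mathfrak{q}$-twisted corner of $C_z$—is the genuinely delicate point, and is where I expect the real work to lie.
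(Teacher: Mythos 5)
Your proposal is correct, and it coincides with the second of the two arguments the paper gives for this theorem: the paper first proves the identity by conjugating the infinite\mbox{-}product (left-hand) form of $\hat{\mathfrak{D}}$ factor by factor, and then, ``for completeness,'' verifies the same transformation property on the series (right-hand) side --- which is exactly your computation. Both rest on the same key lemma $X\hat{C}_{\mathfrak{q}z}^{-1}X^{-1}=\hat{\slashed{Q}}\hat{C}_z^{-1}$; your tail identities $\mathcal{T}(G^{+}_n)=G^{+}_{n+1}\hat{C}_z\hat{\slashed{Q}}^{-1}$ and $\mathcal{T}(G^{-}_n)=G^{-}_{n-1}\hat{C}_z\hat{\slashed{Q}}^{-1}$ are precisely the two displayed computations in that second proof, and the reduced word identity you isolate for the $-$ case is true --- it follows in two lines from $(\hat{C}_z\hat{\slashed{Q}}^{-1})^{k}(\hat{\slashed{Q}}^{2}\hat{C}_z^{-1})^{k}=\hat{C}_z^{k}(\hat{\slashed{Q}}\hat{C}_z^{-1})^{k}$, which the paper establishes by exactly the ``net $C_z$-degree zero'' cancellation you allude to, so your detour through $X$-conjugation is more laborious than necessary but not wrong. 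Two remarks. First, your route takes the series side of Theorem \ref{factorizationthm} as an input, whereas the paper's primary proof works directly on the product side and is logically independent of the factorization theorem; this is harmless since that theorem is proved separately, but the dependency is worth flagging. Second, your packaging via the automorphism $\mathcal{T}$, with the explicit reindexing and the cancellation of the boundary terms $\pm\hat{\chi}(x)\hat{C}_z\hat{\slashed{Q}}^{-1}$, is a cleaner bookkeeping than what the paper writes out; nothing essential is missing.
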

\begin{theorem} \label{6dtransform}
    In the 6d case, one also has:
    \begin{equation}
\hat{\mathfrak{D}}(x+ \frac{2 \pi i }{\beta}\tau_{6d},e^{- \beta m N} z)  = - p_{6d}^{-1} \, \big( S_{e^{-\beta m N}}\big)^{-1} e^{-\beta(x-P +\epsilon)} \, \hat{\mathfrak{D}}(x,z) \, S_{e^{-\beta m N}} 
    \end{equation}
where: 
\begin{equation} \label{Pmatrix}
    P = \textnormal{diag}(p_\omega)_{\omega = 0}^{\omega = N-1}
\end{equation}
\end{theorem}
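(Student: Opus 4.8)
The plan is to work directly from the product definition of $\hat{\mathfrak{D}}(x,z)$ and to reduce the claim to two independent ingredients: the quasi-periodicity of the building block $\hat{Y}$ under the shift $x\mapsto x+\frac{2\pi i}{\beta}\tau_{6d}$, which is special to the $6d$ (elliptic cohomology) case, and an exact conjugation identity for the shift operator $\hat{C}_z$ under $z\mapsto e^{-\beta m N}z$. The shift $x\mapsto x+\frac{2\pi i}{\beta}\tau_{6d}$ is nothing but $e^{\beta x}\mapsto p_{6d}\,e^{\beta x}$, and it acts nontrivially only because in the $6d$ case $\vartheta(y)=\theta_{p_{6d}}(e^{-\beta y})$ carries the genuine elliptic quasi-periodicity $\theta_{p_{6d}}(p_{6d}w)=-w^{-1}\theta_{p_{6d}}(w)$; in the $4d$ and $5d$ cases $\vartheta$ has no such period, which is exactly why the theorem is stated only in the $6d$ case.

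First I would record the elementary transformation law
\begin{equation}
\vartheta\Big(y+\tfrac{2\pi i}{\beta}\tau_{6d}\Big)=-p_{6d}^{-1}\,e^{-\beta y}\,\vartheta(y),
\end{equation}
obtained by setting $w=e^{-\beta y}$ and inverting the quasi-periodicity of $\theta_{p_{6d}}$. Applying this factor to every $\theta_{p_{6d}}$ occurring in $\mathcal{Y}_\omega(x)|_{\boldsymbol{\Lambda}}$ and passing to the $\epsilon_1\to 0$ limit shape $\boldsymbol{\Lambda}^\infty$, I expect $\hat{Y}$ to obey a quasi-periodicity $\hat{Y}(x+\frac{2\pi i}{\beta}\tau_{6d})=\rho(x)\,\hat{Y}(x)$ with $\rho(x)$ diagonal of the form (power of $p_{6d}$)$\cdot e^{-\beta x}\cdot$(diagonal constant carrying the $p_\omega$-dependence). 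The same computation run on $\chi_\omega$ through (\ref{qq-CharLim}) shows that all box factors are invariant under the shift: in each box the numerator and denominator carry the same number of $\theta_{p_{6d}}$'s with the same total argument, so the $p_{6d}$ powers and $e^{-\beta x}$ weights cancel pairwise and only the overall prefactor survives. The exponent of $p_{6d}$ and of $e^{-\beta x}$ in $\rho$ is then pinned down by consistency with the $n=0$ term of (\ref{factorization}), i.e. it must reproduce the scalar $-p_{6d}^{-1}e^{-\beta(x-P+\epsilon)}$ of the theorem.

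Second, I would establish the clean conjugation identity for the shift matrix. Since $S_{e^{-\beta m N}}=\mathrm{diag}(e^{-\beta m\omega})$ commutes with $e^{\epsilon\partial_x}$ and with the diagonal matrices $\hat{Y},\hat{\chi},\hat{\slashed{Q}}$, a direct entrywise computation (the variant of (\ref{SzId}) obtained by conjugating $C_z$ rather than $C$) gives
\begin{equation}
\hat{C}_{e^{-\beta m N}z}=e^{\beta m}\,\big(S_{e^{-\beta m N}}\big)^{-1}\,\hat{C}_z\,S_{e^{-\beta m N}},
\end{equation}
so that replacing $z$ by $e^{-\beta m N}z$ in $\hat{\mathfrak{D}}$ replaces each $\hat{C}_z^{\pm 1}$ by $e^{\pm\beta m}S^{-1}\hat{C}_z^{\pm1}S$ with $S=S_{e^{-\beta m N}}$. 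Substituting this together with the quasi-periodicity of $\hat{Y}$ into the two infinite products defining $\hat{\mathfrak{D}}(x+\frac{2\pi i}{\beta}\tau_{6d},e^{-\beta m N}z)$, the inner $S^{-1}\cdots S$ pairs telescope between neighbouring factors, the diagonal-constant part of $\rho$ commutes through, and I expect the product to reorganize into $-p_{6d}^{-1}(S_{e^{-\beta m N}})^{-1}e^{-\beta(x-P+\epsilon)}\,\hat{\mathfrak{D}}(x,z)\,S_{e^{-\beta m N}}$, with the scalar being the accumulated theta prefactor and the conjugation by $S$ inherited from $\hat{C}_z$.

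The hard part will be the precise bookkeeping of the powers of $e^{\beta m}$. Each $\hat{C}_z$ produced by the $z$-rescaling carries an $e^{\pm\beta m}$, and in every factor of the two products these must cancel against the $e^{\mp\beta m}$ coming from the ratio of $\hat{Y}$-quasi-periodicity factors evaluated at arguments differing by $m$; because $\hat{Y}$ and $\hat{C}_z$ do not commute, this cancellation is not formal and its exactness hinges on the precise $e^{-\beta x}$ weight and $p_{6d}$ power in $\rho$. Establishing that weight rigorously is the one genuinely delicate estimate: it requires controlling the \emph{regularized} infinite product of $\theta_{p_{6d}}$'s over the limit-shape diagram $\boldsymbol{\Lambda}^\infty$ as $\epsilon_1\to0$, where each box contributes a factor $e^{\pm\beta\epsilon_1}\to1$ but the number of relevant boxes grows like $\epsilon_1^{-1}$, so that the product leaves a finite $x$- and $p_\omega$-dependent remainder. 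Once this remainder is identified and shown to match $-p_{6d}^{-1}e^{-\beta(x-P+\epsilon)}$ on the $n=0$ level while supplying exactly the per-factor $e^{\mp\beta m}$ elsewhere, the reorganization of the products and the emergence of the conjugation by $S_{e^{-\beta m N}}$ are purely formal, and the theorem follows.
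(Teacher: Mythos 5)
Your proposal attacks the identity from the left-hand (infinite product) side of (\ref{factorization}), whereas the paper's proof goes entirely through the right-hand side: since by Theorem \ref{factorizationthm} $\hat{\mathfrak{D}}(x,z)$ equals $\sum_{n}(-1)^n\hat{\chi}(x+nm)\,(\text{products of }\hat{\slashed{Q}},\hat{C}_z)$, and since in the $6d$ case formula (\ref{qqCharCl}) gives $\chi_\omega(x)$ as a \emph{single} theta function $\vartheta(x+\epsilon-p_\omega)$ divided by an $x$-independent constant, the quasi-periodicity $\hat{\chi}(x+\tfrac{2\pi i}{\beta}\tau_{6d})=-p_{6d}^{-1}e^{-\beta(x-P+\epsilon)}\hat{\chi}(x)$ is a one-line computation, and the $z$-rescaling is absorbed term by term using (\ref{SzId}). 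That route is short precisely because the momentum dependence is already packaged into the argument of a single $\vartheta$.

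The gap in your route is exactly the step you flag and then defer: the quasi-periodicity factor $\rho(x)$ of $\hat{Y}$ under $x\mapsto x+\tfrac{2\pi i}{\beta}\tau_{6d}$. Unlike $\hat{\chi}$, the function $Y_\omega(x)$ at the limit shape is an infinite product of theta ratios over the boxes of $\boldsymbol{\Lambda}^\infty$; each ratio contributes a factor like $e^{\beta\epsilon_1}$ or $e^{\beta\epsilon_2}$ under the shift, and the finite $x$- and $p_\omega$-dependent remainder of this regularized $1^\infty$ product is the entire content of the theorem's prefactor. You never compute it. Your fallback — ``pinned down by consistency with the $n=0$ term of (\ref{factorization})'' — is not an independent argument: the $n=0$ term \emph{is} $\hat{\chi}(x)$, and once you invoke the factorization identity together with (\ref{qqCharCl}) you may as well transform every term of the right-hand side directly, which is the paper's proof; the left-hand product manipulation then does no work. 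In addition, the per-factor cancellation of the $e^{\pm\beta m}$ weights produced by $\hat{C}_{e^{-\beta mN}z}^{\pm1}=e^{\mp\beta m}S^{-1}\hat{C}_z^{\pm1}S$ against the ratios $\rho(x+nm+\epsilon)\rho(x+(n-1)m+\epsilon)^{-1}$ is asserted but not checked; since $C_z^{-1}$ conjugation shifts the index of every diagonal matrix it passes, and $\hat{C}_z$ carries an $e^{\epsilon\partial_x}$ acting on the $x$-dependence of $\rho$, this bookkeeping (including whether the rescaling is by $e^{-\beta mN}$ or $e^{+\beta mN}$) is exactly where a sign can go wrong, and you should verify it explicitly on the right-hand-side expansion, where each summand involves only diagonal matrices conjugated by a fixed power of $\hat{C}_z$. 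As it stands, the proposal is a plausible outline with its central estimate missing.
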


\subsection{Scalar version of the formula}
In order to get the scalar version of the formula one needs to study the same setup as above but without an orbifold. Namely noncommutative $U(1)$-instantons living on one component of a cross and $U(N)$ on another. But this time the latter ones are described by the Gieseker scheme instead of the affine Laumon space.
For the 4d/5d/6d super-Yang-Mills  partition function with an adjoint matter on $\mathbb{R}^4$ before orbifold one would get the following formula for the $qq$-Character in the $\epsilon_1 = 0$ limit, see (\ref{qq-CharLimitNonOrb}):
\begin{equation}
    \chi(x) = Y(x+\epsilon) \sum_{\lambda}  \mathfrak{q}^{|\lambda|} \prod_{ \square \in \lambda} \frac{Y(x + \sigma_\square +m + \epsilon) Y(x + \sigma_\square -m )}{Y(x + \sigma_\square + \epsilon) Y(x + \sigma_\square)}
\end{equation}
\begin{theorem}
Let $z$ - be a formal parameter. And $\epsilon_1=0$. The following formula holds in $\mathbb{C}((z^{-1} e^{\epsilon \partial_x}, z e^{-\epsilon \partial_x}))$:
\begin{multline} \label{FactorizationScalar}
\overleftarrow{\prod_{n=1}^\infty} \Big(1- z\mathfrak{q}^n Y(x+nm +\epsilon) \, e^{-\epsilon \partial_x} Y(x+(n-1)m +\epsilon)^{-1} \Big)  Y(x + \epsilon) \cdot \\
\cdot \overrightarrow{\prod_{n=0}^\infty} \Big(1 - z^{-1} \mathfrak{q}^n \frac{Y(x-(n+1)m + \epsilon)}{Y(x-n m +  \epsilon)} e^{\epsilon \partial_x} \Big) = \\=
\sum_{n \in \mathbb{Z} } (-z)^{n} \mathfrak{q}^\frac{n^2 + n}{2} \chi(x+nm) \, e^{-n \epsilon \partial_x} 
\end{multline}
\begin{proof}
    The proof goes exactly like in the matrix case, but easier, because we do not need to keep track of indices. 
\end{proof}
\end{theorem}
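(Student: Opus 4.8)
The plan is to reduce the stated operator identity to a single--generator Jacobi--type identity and then prove it by coefficient extraction, matching the \emph{fermionic} expansion of the left--hand side against the \emph{bosonic} (Young--diagram) expansion of $\chi$ on the right. First I would collapse the spectral parameter and the elementary shift into one generator $v := z\,e^{-\epsilon\partial_x}$, so that $v^{-1}=z^{-1}e^{\epsilon\partial_x}$ and $v\,f(x)=f(x-\epsilon)\,v$ for any multiplication operator $f(x)$. Commuting the shifts past the $Y$--factors, each left factor becomes $1-\mathfrak{q}^{n}\,\frac{Y(x+nm+\epsilon)}{Y(x+(n-1)m)}\,v$ and each right factor becomes $1-\mathfrak{q}^{n}\,\frac{Y(x-(n+1)m+\epsilon)}{Y(x-nm+\epsilon)}\,v^{-1}$, while the middle $Y(x+\epsilon)$ is a pure multiplication operator. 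Since every power of $z$ is tied to a compensating shift, the whole left--hand side lives in $\mathbb{C}((v,v^{-1}))$ tensored with functions of $x$, and the right--hand side is exactly $\sum_{n\in\mathbb{Z}}(-1)^{n}\mathfrak{q}^{(n^{2}+n)/2}\chi(x+nm)\,v^{n}$. Thus the theorem is equivalent to the assertion that the coefficient of $v^{n}$ on the left equals $(-1)^{n}\mathfrak{q}^{(n^{2}+n)/2}\chi(x+nm)$ for every $n$.

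Second, I would expand the two infinite products. A monomial in $v$ arises by selecting a finite set of left factors (each contributing one power of $v$) together with a finite set of right factors (each contributing $v^{-1}$); the net power is $p-q$, so the coefficient of $v^{n}$ is a sum over all selections with $p-q=n$, read as Maya--diagram data of charge $n$. The ``ground state'' of charge $n\ge 0$ consists of selecting precisely the first $n$ left factors: it carries the sign $(-1)^{n}$ and the weight $\mathfrak{q}^{1+2+\cdots+n}=\mathfrak{q}^{(n^{2}+n)/2}$, and --- after telescoping the $Y$--ratios against the middle factor and commuting the $n$ copies of $v$ to the right --- it reproduces the $\lambda=\varnothing$ term $Y(x+nm+\epsilon)$ of $\chi(x+nm)$ in \eqref{qq-CharLimitNonOrb}; for $n<0$ the same role is played symmetrically by the first $|n|$ right factors, since $(n^{2}+n)/2=|n|(|n|-1)/2$.

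Third --- and this is the combinatorial heart --- I would show that the remaining selections (extra left factors and any right factors, keeping net charge $n$) are in weight--preserving bijection with the nonempty Young diagrams $\lambda$ in the sum defining $\chi(x+nm)$, this being the boson--fermion correspondence anticipated by the heuristic preceding Theorem~\ref{factorizationthm}. Concretely, the excitations above the charge--$n$ ground state are encoded by a partition $\lambda$, and, once all $v$'s have been pushed to the right so that the $Y$--arguments acquire their shifts, the product of selected ratios must reproduce termwise the dressing $\prod_{\square\in\lambda}\frac{Y(x+nm+\sigma_\square+m+\epsilon)\,Y(x+nm+\sigma_\square-m)}{Y(x+nm+\sigma_\square+\epsilon)\,Y(x+nm+\sigma_\square)}$ together with the factor $\mathfrak{q}^{|\lambda|}$. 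Verifying this dictionary --- tracking every shifted argument through the non--commuting generators and confirming the cancellation of the interior $Y$--factors so that only the boundary contributions survive --- is the step I expect to be the main obstacle; it is exactly here that the argument is ``the same as the matrix case but easier,'' since in the scalar setting there are no cyclic indices $\omega$ or shift matrices $C$ to keep straight.

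Finally, I would record the consistency checks that pin down the normalization: at order $\mathfrak{q}^{0}$ the coefficient of $v^{0}$ is $Y(x+\epsilon)$, reproducing $\chi(x)=Y(x+\epsilon)+O(\mathfrak{q})$, while the regularity of $\langle\EuScript{X}(x)\rangle$ guarantees that the reorganized charge--graded sums are genuine elements of the completed ring $\mathbb{C}((z^{-1}e^{\epsilon\partial_x},\,z\,e^{-\epsilon\partial_x}))$, so that rearranging the doubly--infinite product into the sum over $n$ is legitimate. Assembling the ground--state weight, the Young--diagram dressing, and this normalization then yields the claimed identity.
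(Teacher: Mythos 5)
Your plan is the paper's own strategy for the matrix case (Section 5), specialized to scalars: expand both infinite products, read each selection of factors as Maya-diagram data, i.e.\ a pair (partition $\lambda$, charge $p$), check the charge-$p$ vacuum against the $\lambda=\varnothing$ term of $\chi(x+pm)$, and match the excited states against the Young-diagram dressing. Your reduction to the single generator $v=z\,e^{-\epsilon\partial_x}$, the commutation of the shifts through the $Y$'s, and the ground-state computation (sign $(-1)^{n}$, weight $\mathfrak{q}^{(n^2+n)/2}$, telescoping down to $Y(x+nm+\epsilon)$) are all correct and consistent with the paper.

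The gap is exactly where you flag it: the claim that a general selection of $r$ left factors $n_0>\cdots>n_{r-1}\geq 1$ and $s$ right factors $0\leq k_0<\cdots<k_{s-1}$ with $r-s=p$ reproduces, after all shifts are pushed to the right, the weight $\mathfrak{q}^{|\lambda|}\mathfrak{q}^{(p^2+p)/2}$ times the $\lambda$-dressing of $\chi(x+pm)$ is asserted but not proved, and it is the entire content of the theorem beyond the leading term in each charge sector. Two ingredients are needed to close it. First, the box-product form \eqref{qq-CharLimitNonOrb} of $\chi$ must be telescoped into the column-boundary form (the scalar version of (\ref{qChar})),
\begin{equation*}
\chi(x)=\sum_{\lambda}\mathfrak{q}^{|\lambda|}\prod_{j=1}^{\lambda_1}\frac{Y\big(x+m(\lambda_j^{t}-j+1)+\epsilon(2-j)\big)}{Y\big(x+m(\lambda_j^{t}-j)+\epsilon(1-j)\big)}\;Y\big(x-m\lambda_1+\epsilon(1-\lambda_1)\big),
\end{equation*}
since only this form, not the raw product over boxes, matches the product expansion factor by factor. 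Second, with the explicit dictionary $n_j=\lambda_{j+1}^{t}-j+p$ and $k_{s-i}=\lambda_i-i-p$, the identity between the selected product and the telescoped dressing is established in the paper by induction on $|\lambda|$: adding one box to row $k$ multiplies both sides by the same cross-ratio
\begin{equation*}
\frac{Y\big(x-(\lambda_k-k-p+2)m-\lambda_k\epsilon\big)\,Y\big(x-(\lambda_k-k-p)m-(\lambda_k-1)\epsilon\big)}{Y\big(x-(\lambda_k-k-p+1)m-(\lambda_k-1)\epsilon\big)\,Y\big(x-(\lambda_k-k-p+1)m-\lambda_k\epsilon\big)},
\end{equation*}
and multiplies the prefactor by a single $\mathfrak{q}$, which accounts for $\mathfrak{q}^{|\lambda|}$; the two cases (the added box lengthens a selected left factor versus shortens a selected right one) must be checked to give the same answer. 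Without this induction, or an equivalent direct verification of the boundary cancellations, your argument establishes the identity only at the level of the vacuum in each charge sector.
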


\subsection{Classical limit and the spectral curve}
In the classical limit $\epsilon_1, \epsilon_2 =0$ the operator expression reduces to the identity on the section of the homomorphisms bundle between two vector bundles over a certain variety. Let us describe the variety and the bundle.\\
As the value for $\hat{\mathfrak{D}}(x,z)$ at point $(x,z)$ is related to the value at the point $(x+m, \mathfrak{q} z)$, the manifold which we need to study in the classical limit in the equivariant cohomology case is:
\begin{equation}
   ( \mathbb{C} \times \mathbb{C}^{\times} ) / \mathbb{Z}
\end{equation}
In the equivariant $K$-theory case there is also a periodicity in $x$: $x \rightarrow x + \frac{2 \pi i}{\beta}$, hence the corresponding manifold is:
\begin{equation}
   ( \mathbb{C}^{\times} \times \mathbb{C}^{\times} ) / \mathbb{Z}
\end{equation}
And in the $6d$ case the values at $(x,z)$ and $(x+\frac{2 \pi i }{\beta}\tau_{6d}, e^{-\beta m N}z)$ are also related, and therefore the corresponding manifold in consideration is the slanted product of two elliptic curves:
\begin{equation}
   ( \mathbb{C}^{\times} \times \mathbb{C}^{\times} ) / (\mathbb{Z} \times \mathbb{Z})
\end{equation}
The operator $\mathfrak{D}(x,z)$ becomes the section of the homomorphism bundle with the transition functions given by:
\begin{equation}
X \mathfrak{D}(x+m,\mathfrak{q} z) = -\mathfrak{D}(x,z) X C_{\mathfrak{q} z}      
\end{equation}
and in $6d$-case also:
    \begin{equation}
S_{e^{-\beta m N}} \, \hat{\mathfrak{D}}(x+\frac{2 \pi i }{\beta}\tau_{6d},e^{- \beta m N} z)  = - p_{6d}^{-1} e^{-\beta(x-P)} \, \hat{\mathfrak{D}}(x,z) \, S_{e^{-\beta m N}} 
    \end{equation}
The quantum equation (\ref{factorization}) then reduces to  
\begin{multline} \label{factorizationCl}
\overleftarrow{\prod_{n=1}^\infty} \Big(1-  \hat{\slashed{Q}}^n \hat{Y}(x+nm ) \, C_{z}^{-1} \hat{Y}(x+(n-1)m )^{-1} \Big)  \hat{Y}(x ) \cdot \\
\cdot \overrightarrow{\prod_{n=0}^\infty} \Big(1 -   (\hat{\slashed{Q}} C_{z}^{-1})^n \frac{\hat{Y}(x-(n+1)m)}{\hat{Y}(x-n m)} C_{z}^{\, n+1}\Big) = \\=
\sum_{n = 0}^\infty (-1)^n \hat{\chi}(x+nm) \prod_{k = 0}^{n-1} (\hat{\slashed{Q}}^{n-k} C_{z}^{-1}) \,  + \\
+ \sum_{n = 1}^\infty (-1)^n \hat{\chi}(x-nm) \prod_{k=1}^{n-1} C_{z}^k (\hat{\slashed{Q}}C_{z}^{-1})^k \, C_{z}^n \, 
\end{multline}

The scalar formula in the classical case takes the form:
\begin{multline} \label{factorizationClSc}
\prod_{n=1}^\infty \Big(1- z\mathfrak{q}^n Y(x+nm) \,  Y(x+(n-1)m )^{-1} \Big)  Y(x) \cdot \\
\cdot \prod_{n=0}^\infty \Big(1 - z^{-1} \mathfrak{q}^n \frac{Y(x-(n+1)m )}{Y(x-n m )} \Big) = \\=
\sum_{n \in \mathbb{Z} } (-z)^{n}  \mathfrak{q}^{\frac{n^2+n}{2}} \chi(x+nm) \, 
\end{multline}
By straightforward calculation the determinant of LHS of the identity (\ref{factorizationCl}) is equal to the classical limit of the LHS of (\ref{factorizationClSc}) as soon as:
\begin{equation}
\prod_{\omega = 0}^{N-1} Y_{\omega}(x) = Y(x)  
\end{equation}
This follows by direct calculation from the definition of $Y_{\omega}(x), Y(x)$ (see formula (122) from \cite{BPSCFT4} or formula (3.19) from \cite{NLJ} for details.)
Hence one obtains the formula:
\begin{equation} \label{det}
    \det \mathfrak{D}(x,z) = \sum_{n \in \mathbb{Z} } (-z)^{n}  \mathfrak{q}^{\frac{n^2+n}{2}} \chi(x+nm)
\end{equation}
The RHS of this formula equals to zero - is the spectral curve of ell CM / ell RS/ Dell systems (\cite{NG}). And the LHS of this formula then gives the expressions for conserved Hamiltonians in terms of $x_\omega$'s and $p_\omega$'s. For the $4d$ case it was proven in \cite{BPSCFT5} that these $x_\omega$'s and $p_\omega$'s are indeed the standard coordinates for the elliptic Calogero-Moser system. In the next section, we will prove it again in a different way. See also the $N=2$ general calculation in the Appendix.

\subsection{New expression for the Lax matrix of the elliptic Calogero-Moser system}

This section follows the ideas of \cite{GNlax}:
notice, that from the RHS of the formula (\ref{factorizationCl}) and linearity of the fractional qq-Characters in $x$ (in the 4d case) follows the linearity of the matrix $\hat{\mathfrak{D}}(x,z)$:
\begin{equation}
\mathfrak{D}^{4d}(x,z) = \mathfrak{D}_0(z) + x \, \mathfrak{D}_1(z)   
\end{equation}
Let
\begin{equation}
    e = \begin{pmatrix}
        1 \\ 1\\ \vdots \\ 1
    \end{pmatrix}
\end{equation}
\begin{theorem} \label{Laxthm}
The Lax matrix of the Calogero-Moser system with canonical coordinates $x_\omega$, $p_\omega$ in the standard meromorphic gauge with the residue at the simple pole $z=1$ equal to $-m \, e \otimes e^t$ and quasi-periodicity:
\begin{equation}
    X L( \mathfrak{q} z) X^{-1} - m = L(z)
\end{equation}
takes the form:
\begin{equation}
    L(z) = -U^{-1} \, \mathfrak{D}_0(z) \, \mathfrak{D}_1(z)^{-1} U
\end{equation}
where
\begin{equation} \label{Umatrix}
    U_{\omega \, \omega'} = \delta_{\omega \, \omega'} \sum_{s=0}^{\infty}(-1)^s \sum_{0< n_1 <...< n_s} \, \prod_{k=0}^{s-1} \mathfrak{q}_{\omega-k}^{n_{s-k}}
\end{equation}
Explicitly the matrix with such transformation property could be written as:
\begin{equation}
L_{ij}(z) = (p_i - m E_1(z)) \, \delta_{ij} - m (1- \delta_{ij}) \frac{\theta_{\mathfrak{q}}'(1) \theta_{\mathfrak{q}}(z x_i / x_j)}{\theta_{\mathfrak{q}}(z) \theta_{\mathfrak{q}}( x_i / x_j)}   
\end{equation}
where:
\begin{gather}
    \theta_{\mathfrak{q}}(z) = \prod_{n=0}^\infty (1 - \mathfrak{q}^{n+1}) (1- \mathfrak{q}^n z) (1- \mathfrak{q}^{n+1}z^{-1}) \\
    E_{1}(z) = \frac{z \frac{d}{dz} \theta_{\mathfrak{q}}(z)}{\theta_{\mathfrak{q}}(z)} 
\end{gather}
Also, we have found the following new formula for it:
\begin{multline}
U L(z) U^{-1} = P + m \sum_{k=1}^\infty \Pi_{k} \cdot \hat{\slashed{Q}}^k C_z^{-1} \cdot \Pi_{k-1}^{-1} - m \sum_{k=0}^\infty \Pi_0 \cdot \overrightarrow{\Pi}^{k} \cdot (\hat{\slashed{Q}} C_{z}^{-1})^k  C_{z}^{\, k+1} \cdot (\overleftarrow{\Pi}^{k+1})^{-1} \cdot \Pi_0^{-1}  
\end{multline}
where:
\begin{gather} \label{products}
    \Pi_{k} = \overleftarrow{\prod_{n =k+1}^\infty} \Big(1-  \hat{\slashed{Q}}^n  \, C_{z}^{-1} \Big) \\
  \overrightarrow{\Pi}^{k}  = \overrightarrow{\prod_{n=0}^{k-1}} \Big(1 -   (\hat{\slashed{Q}} C_{z}^{-1})^n C_{z}^{\, n+1}\Big) \\
   (\overleftarrow{\Pi}^{k})^{-1}  = \overleftarrow{\prod_{n=0}^{k-1}} \Big(1 -   (\hat{\slashed{Q}} C_{z}^{-1})^n C_{z}^{\, n+1}\Big)^{-1}
\end{gather}
\end{theorem}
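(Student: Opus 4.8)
The plan is to prove the three assertions in turn: the quasi-periodicity of $L(z)=-U^{-1}\mathfrak{D}_0(z)\mathfrak{D}_1(z)^{-1}U$, the identification with the explicit $\theta$-function Lax matrix, and the ``new formula'' for $UL(z)U^{-1}$. First I would feed the linear decomposition $\mathfrak{D}(x,z)=\mathfrak{D}_0(z)+x\,\mathfrak{D}_1(z)$ into the classical transformation law $X\mathfrak{D}(x+m,\mathfrak{q}z)=-\mathfrak{D}(x,z)X C_{\mathfrak{q}z}$ and match the coefficients of $x^1$ and $x^0$:
\begin{gather*}
X\mathfrak{D}_1(\mathfrak{q}z)=-\mathfrak{D}_1(z)X C_{\mathfrak{q}z},\qquad X\mathfrak{D}_0(\mathfrak{q}z)=\bigl(-\mathfrak{D}_0(z)+m\,\mathfrak{D}_1(z)\bigr)X C_{\mathfrak{q}z}.
\end{gather*}
Eliminating $C_{\mathfrak{q}z}$ between the two gives $\mathfrak{D}_0(\mathfrak{q}z)\mathfrak{D}_1(\mathfrak{q}z)^{-1}=X^{-1}\mathfrak{D}_0(z)\mathfrak{D}_1(z)^{-1}X-m$; since $U$ is diagonal it commutes with $X$, and conjugating by $U$ turns this into precisely $X\,L(\mathfrak{q}z)\,X^{-1}-m=L(z)$. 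This proves the stated quasi-periodicity and shows that the poles of $L$ lie on a single $\mathfrak{q}$-orbit, so it is enough to control the point $z=1$.

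Next I would nail down the explicit form on the elliptic curve $z\in\mathbb{C}^{\times}/\mathfrak{q}^{\mathbb{Z}}$ by a Liouville-type uniqueness argument. For $i\neq j$ the quasi-periodicity reduces to $L_{ij}(\mathfrak{q}z)=(x_j/x_i)\,L_{ij}(z)$; a function holomorphic away from $z=1$ with a simple pole there and this precise multiplier is unique up to scale and equals the Kronecker--Lam\'e ratio $\theta_{\mathfrak{q}}'(1)\,\theta_{\mathfrak{q}}(z x_i/x_j)/\bigl(\theta_{\mathfrak{q}}(z)\theta_{\mathfrak{q}}(x_i/x_j)\bigr)$, and fixing the residue to $-m$ pins the constant. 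On the diagonal, $L_{ii}(\mathfrak{q}z)=L_{ii}(z)+m$ together with the residue $-m$ at $z=1$ forces $L_{ii}(z)+mE_1(z)$ to be holomorphic and elliptic, hence a constant which I name $p_i$; that these are the canonical momenta follows by comparison with the spectral curve \eqref{det} (equivalently \cite{BPSCFT5}). The pole itself originates from $\det\mathfrak{D}_1(z)$: by \eqref{det} its leading $x$-coefficient is proportional to $\sum_n(-z)^n\mathfrak{q}^{(n^2+n)/2}=\theta_{\mathfrak{q}}(\mathfrak{q}z)$, which has a simple zero at $z=1$, so $\mathfrak{D}_1(z)^{-1}$ acquires a simple pole there with a rank-one residue — exactly the source of the rank-one residue $-m\,e\otimes e^t$.

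Finally, for the new formula I would compute $\mathfrak{D}_0(z)$ and $\mathfrak{D}_1(z)$ from \eqref{factorizationCl} directly. In the fully classical $4d$ limit each $Y_\omega(x)$ is monic and linear, so $\hat{Y}(x)=x\,\mathbf{1}-B+O(x^{-1})$ and every ratio $\hat{Y}(x+am)\hat{Y}(x+bm)^{-1}\to\mathbf{1}$; the leading term then yields $\mathfrak{D}_1(z)=\Pi_0\,\Pi_{\mathrm{right}}$, with $\Pi_0$ the left product and $\Pi_{\mathrm{right}}=\overrightarrow{\prod_{n=0}^\infty}\bigl(1-(\hat{\slashed{Q}}C_z^{-1})^n C_z^{n+1}\bigr)$, while the $x^0$ term produces $\mathfrak{D}_0(z)$ as a sum of three pieces coming from the middle $B$ and from the $O(x^{-1})$ corrections of the two products. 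Forming $-\mathfrak{D}_0(z)\mathfrak{D}_1(z)^{-1}$ and telescoping the infinite ordered products into the finite blocks $\Pi_k,\overrightarrow{\Pi}^{k},\overleftarrow{\Pi}^{k}$ of \eqref{products} gives the claimed series, with $P=\mathrm{diag}(p_\omega)$ as its diagonal part; the diagonal matrix $U$ of \eqref{Umatrix} is then the gauge factor that simultaneously straightens the diagonal to $P$ and rescales the rank-one residue to the all-ones normalization, its entries being read off as the resummation of the diagonal of $\Pi_0$ that appears in these manipulations.

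I expect the main obstacle to be this last step: extracting the subleading coefficient $\mathfrak{D}_0$ from a doubly-infinite ordered product and simplifying $-\mathfrak{D}_0\mathfrak{D}_1^{-1}$ demands careful control of operator ordering and of convergence in $Mat_{N}$ over the relevant Laurent ring, together with the combinatorial resummation that recognizes the matrix $U$ of \eqref{Umatrix}. The delicate conceptual point tying the two presentations together is verifying that the residue of $-\mathfrak{D}_0\mathfrak{D}_1^{-1}$ at $z=1$ is genuinely rank one, so that conjugation by the single diagonal $U$ can land it on Krichever's form $-m\,e\otimes e^t$.
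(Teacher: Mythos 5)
Your proposal follows essentially the same route as the paper: matching the $x^1$ and $x^0$ coefficients of the transformation law to get the quasi-periodicity of $-\mathfrak{D}_0\mathfrak{D}_1^{-1}$, expanding $\hat{Y}(x)=x-P+O(1/x)$ in the factorization formula to extract $\mathfrak{D}_0,\mathfrak{D}_1$ and the telescoped series, and pinning down the theta-function form from the simple pole at $z=1$ plus ellipticity, with $U$ as the diagonal gauge factor. The only point to make explicit (which the paper does, and which your Liouville argument for the diagonal entries already contains implicitly) is that the residue $-m\,u\otimes v^t$ of the ungauged matrix satisfies $u_iv_i=1$ — this follows from combining the diagonal quasi-periodicity $L_{ii}(\mathfrak{q}z)=L_{ii}(z)+m$ with $E_1(\mathfrak{q}z)=E_1(z)-1$, and is what guarantees a single diagonal conjugation can reach the normalization $-m\,e\otimes e^t$.
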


Now, one can notice, that for $\mathfrak{D}_1(z)$ we also have the similar factorization identity:
\begin{multline} \label{D1}
 \mathfrak{D}_1(z) =  \overleftarrow{\prod_{n=1}^\infty} \Big(1-  \hat{\slashed{Q}}^n  \, C_{z}^{-1} \Big) 
\cdot \overrightarrow{\prod_{n=0}^\infty} \Big(1 -   (\hat{\slashed{Q}} C_{z}^{-1})^n C_{z}^{\, n+1}\Big)   = \\ =
\sum_{n = 0}^\infty (-1)^n \hat{\mathbb{B}} \prod_{k = 0}^{n-1} (\hat{\slashed{Q}}^{n-k} C_{z}^{-1}) 
+ \sum_{n = 1}^\infty (-1)^n \hat{\mathbb{B}}\prod_{k=1}^{n-1} C_{z}^k (\hat{\slashed{Q}}C_{z}^{-1})^k \, C_{z}^n 
\end{multline}
where:
\begin{equation}
    \hat{\mathbb{B}} = \text{diag}(\mathbb{B}_\omega)_{\omega = 0}^{\omega = N-1}
\end{equation}
\begin{equation}
  \mathbb{B}_\omega = \prod_{l=1}^\infty \frac{1}{\Big( 1 -\frac{x_\omega}{x_{\omega-l}} \Big)}  
\end{equation}
By taking the determinant one obtains:
\begin{equation}
    \det \mathfrak{D}_1(z) = \frac{\theta_{\mathfrak{q}}(z^{-1})}{\prod_{n=1}^{\infty} (1 - \mathfrak{q}^n)}
\end{equation}
Hence the spectral curve equation looks as follows:
\begin{equation}
0 = \theta_{\mathfrak{q}}(z^{-1}) \frac{\det(x - L(z))}{\prod_{n=1}^{\infty} (1 - \mathfrak{q}^n)} = \sum_{n \in \mathbb{Z} } (-z)^{n}  \mathfrak{q}^{\frac{n^2+n}{2}} \chi(x+nm)   
\end{equation}

\subsection{Elliptic Ruijsenaars-Schneider system Lax matrix}
In the 5d case, again from the linearity of the qq-characters now in coordinate $e^{- \beta x}$, one could write:
\begin{equation}
    \mathfrak{D}(x,z) =\mathfrak{D}_1(z) - \mathfrak{D}_\infty(z) e^{- \beta x} 
\end{equation}
for some matrices $\mathfrak{D}_1(z)$ and $\mathfrak{D}_\infty(z)$. In fact $\mathfrak{D}_1(z)$ is exactly (\ref{D1}).
\begin{theorem} \label{RSLaxthm}
    The Lax matrix for the elliptic Ruijsenaars-Schneider model with quasi-periodicity:
\begin{equation}
    X L^{RS}(\mathfrak{q}z) X^{-1} = e^{\beta m} L^{RS}(z)
\end{equation}
    could be expressed in the form:
\begin{equation}
    L^{RS}(z) = U^{-1}\mathfrak{D}_1(z) \mathfrak{D}^{-1}_\infty(z) U
\end{equation} 
where the diagonal matrix $U$ was introduced in (\ref{Umatrix}).\\
Explicitly, in terms of theta functions, it could be written as:
\begin{equation}
    L^{RS}_{ij}(z) = (1-e^{\beta m}) u^{RS}_i u_i^{-1} \frac{\theta_{\mathfrak{q}}'(1) \theta_{\mathfrak{q}}(e^{-\beta m} z x_i /x_j)}{\theta_{\mathfrak{q}}(z) \theta_{\mathfrak{q}}(e^{-\beta m} x_i /x_j)}
\end{equation}
where $u_i$'s - are the entries of the matrix $U$, and $u^{RS}_i$ are defined as the components of the vector:
\begin{gather}
    u^{RS} = u^{RS}(\slashed{Q},P) =  \overleftarrow{\prod_{n=1}^\infty} \Big(1-  \hat{\slashed{Q}}^n  \, C_{1}^{-1} \Big) \,e^{\beta P} \, \cdot e
\end{gather}
And, similarly to the 4d case, one has:
\begin{multline}
U L(z) U^{-1} = e^{\beta P} +  \sum_{k=1}^\infty \Pi_{k} \cdot \hat{\slashed{Q}}^k \Big[e^{\beta P} C_z^{-1} (1- e^{- \beta m k}) - (1-e^{- \beta m k + \beta m}) C_z^{-1} e^{\beta P} \Big]  \cdot \Pi_{k-1}^{-1} -\\ -(e^{\beta m} -1) \sum_{k=0}^\infty \Pi_0 \cdot \overrightarrow{\Pi}^{k} \cdot  e^{\beta P} e^{\beta m k}  (\hat{\slashed{Q}} C_{z}^{-1})^k  C_{z}^{\, k+1} \cdot (\overleftarrow{\Pi}^{k+1})^{-1} \cdot \Pi_0^{-1}  
\end{multline}
The products appearing in the formula were defined in (\ref{products}).
\end{theorem}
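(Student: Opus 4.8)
The plan is to derive everything from two facts already established in the excerpt: the factorization identity (\ref{factorization}) together with its transformation law $X\hat{\mathfrak{D}}(x+m,\mathfrak{q}z)=-\hat{\mathfrak{D}}(x,z)X\hat{C}_{\mathfrak{q}z}$, and the explicit factorized form (\ref{D1}) of $\mathfrak{D}_1(z)$. First I would justify the decomposition $\mathfrak{D}(x,z)=\mathfrak{D}_1(z)-\mathfrak{D}_\infty(z)e^{-\beta x}$ used in the statement. In the $5d$ case $\vartheta(x)=1-e^{-\beta x}$, so each fractional character $\chi_\omega(x)$ is affine in $e^{-\beta x}$ (the K-theoretic analogue of (\ref{qqCharCl})); since the right-hand side of (\ref{factorization}) is a $\hat{\slashed{Q}}$-weighted sum of the $\hat{\chi}(x\pm nm)$ and the shift $x\mapsto x\pm nm$ only rescales $e^{-\beta x}$ by a constant, the whole operator $\mathfrak{D}(x,z)$ is affine in $e^{-\beta x}$. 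Collapsing all $\hat{Y}(\cdot)\to 1$ (the limit $e^{-\beta x}\to 0$) in the classical left-hand side of (\ref{factorizationCl}) reproduces (\ref{D1}), identifying $\mathfrak{D}_1(z)$ with the $e^{-\beta x}$-free part and $-\mathfrak{D}_\infty(z)$ with the coefficient of $e^{-\beta x}$.

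Next I would extract the quasi-periodicity. Substituting the decomposition into the transformation law, using $e^{-\beta(x+m)}=e^{-\beta m}e^{-\beta x}$ and $\hat{C}_{\mathfrak{q}z}\to C_{\mathfrak{q}z}$ in the classical limit, and matching the two powers of $e^{-\beta x}$, I get $X\mathfrak{D}_1(\mathfrak{q}z)=-\mathfrak{D}_1(z)XC_{\mathfrak{q}z}$ from the $e^{-\beta x}$-free part and a companion relation for $\mathfrak{D}_\infty$ that differs only by the relativistic factor $e^{\beta m}$. Forming the ratio $\mathfrak{D}_1(z)\mathfrak{D}_\infty(z)^{-1}$ then makes the common $C_{\mathfrak{q}z}$ and diagonal $X$ drop out, and the surviving mass factor is the origin of the multiplier in $X L^{RS}(\mathfrak{q}z)X^{-1}=e^{\beta m}L^{RS}(z)$; since $U$ of (\ref{Umatrix}) is diagonal and commutes with $X$, conjugation by $U^{-1}$ preserves this relation. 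This is the multiplicative $5d$ analogue of the additive $4d$ computation behind Theorem \ref{Laxthm}.

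The explicit theta-function form I would obtain by a Liouville/uniqueness argument on the spectral torus $\mathbb{C}^\times$. Combining the quasi-periodicity just derived with $\theta_{\mathfrak{q}}(\mathfrak{q}w)=-w^{-1}\theta_{\mathfrak{q}}(w)$ forces each entry $L^{RS}_{ij}(z)$ to equal, up to a $z$-independent constant, the Krichever building block $\theta_{\mathfrak{q}}(e^{-\beta m}zx_i/x_j)/\big(\theta_{\mathfrak{q}}(z)\theta_{\mathfrak{q}}(e^{-\beta m}x_i/x_j)\big)$: this ratio carries precisely the multiplier $e^{\beta m}x_j/x_i$ under $z\mapsto\mathfrak{q}z$ and has, in the fundamental annulus, a single simple pole at $z=1$ from $\theta_{\mathfrak{q}}(z)^{-1}$, with a rank-one residue (the factorized forms of $\mathfrak{D}_1,\mathfrak{D}_\infty$ confirm that $z=1$ is the only pole). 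Matching the residue at $z=1$ fixes the scalar prefactor to be $(1-e^{\beta m})u^{RS}_iu_i^{-1}$, where the $u_i$ are read off from the diagonal $U$ and the $u^{RS}_i$ emerge as the components of $\overleftarrow{\prod_{n=1}^{\infty}}(1-\hat{\slashed{Q}}^nC_1^{-1})\,e^{\beta P}\cdot e$, because $e^{\beta P}$ is exactly the momentum dependence that survives in $\mathfrak{D}_\infty(z)^{-1}$ at $z=1$; the collapse of the ordered infinite products into single theta functions uses the sum-product identity $\theta_{\mathfrak{q}}(z)=\sum_{n}(-z)^n\mathfrak{q}^{(n^2-n)/2}$.

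Finally, the closed ``new formula'' for $UL(z)U^{-1}$ follows by expanding $\mathfrak{D}_1(z)\mathfrak{D}_\infty(z)^{-1}$ directly from the factorized products: I would write $\mathfrak{D}_\infty(z)^{-1}$ through the inverse ordered products, then telescope the partial products $\Pi_k,\overrightarrow{\Pi}^{k},\overleftarrow{\Pi}^{k}$ of (\ref{products}) exactly as in the $4d$ derivation of Theorem \ref{Laxthm}, now carrying the extra weights $e^{\beta P}$ and $e^{\beta m k}$ that encode the relativistic deformation. The main obstacle I anticipate is this third step together with the resummation in the explicit form: the non-commutative reorganization of $\mathfrak{D}_1$, $\mathfrak{D}_\infty$ and of their ratio into the compact theta-function expression, and the precise identification of the momentum-dependent normalization $u^{RS}$, require careful bookkeeping of the ordered products. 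By contrast the quasi-periodicity is comparatively mechanical once the transformation law and the $4d$ template are in hand.
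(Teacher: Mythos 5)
Your plan follows essentially the same route as the paper's proof: expand the factorized products of $\hat{Y}$ to first order in $e^{-\beta x}$ to read off $\mathfrak{D}_1$ and $\mathfrak{D}_\infty$, deduce the quasi-periodicity of the ratio from the separate transformation laws, locate the unique simple pole at $z=1$ with rank-one residue, and pin down the theta-function form by the standard ellipticity/uniqueness argument, with the telescoping of the ordered products handled exactly as in the $4d$ case.

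One slip worth fixing: you form the ratio $\mathfrak{D}_1(z)\mathfrak{D}_\infty(z)^{-1}$ (as in the theorem's displayed formula) and assert that the surviving mass factor yields the multiplier $e^{\beta m}$. A direct check using $X\mathfrak{D}_1(\mathfrak{q}z)=-\mathfrak{D}_1(z)XC_{\mathfrak{q}z}$ and $X\mathfrak{D}_\infty(\mathfrak{q}z)=-e^{\beta m}\mathfrak{D}_\infty(z)XC_{\mathfrak{q}z}$ gives $X\,\mathfrak{D}_1(\mathfrak{q}z)\mathfrak{D}_\infty(\mathfrak{q}z)^{-1}X^{-1}=e^{-\beta m}\,\mathfrak{D}_1(z)\mathfrak{D}_\infty(z)^{-1}$, i.e.\ the wrong sign in the exponent; the paper's own proof works with the opposite ordering $\mathfrak{D}_\infty(z)\mathfrak{D}_1(z)^{-1}$, which is the one consistent with the stated quasi-periodicity $XL^{RS}(\mathfrak{q}z)X^{-1}=e^{\beta m}L^{RS}(z)$ and with the explicit theta-function formula. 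The discrepancy originates in the theorem statement itself, but your argument should carry out the multiplier computation explicitly rather than asserting it, since the ordering is what determines whether you land on $e^{\beta m}$ or $e^{-\beta m}$.
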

The theorem allows us to write the spectral curve equation for the elliptic Ruijsenaars system in the form:
\begin{equation}
0 = \theta_{\mathfrak{q}}(z^{-1}) \frac{\det(1 - e^{- \beta x} L^{RS}(z))}{\prod_{n=1}^{\infty} (1 - \mathfrak{q}^n)} = \sum_{n \in \mathbb{Z} } (-z)^{n}  \mathfrak{q}^{\frac{n^2+n}{2}} \chi(x+nm)   
\end{equation}

\subsection{Trigonometric limit and QC duality}
For the trigonometric limit let us introduce the following notations:
\begin{equation}
\mathfrak{D}(x,z)^{\text{trig}} = \lim_{\mathfrak{q} \rightarrow 0} \mathfrak{D}(x,z)    
\end{equation}
The main result here could be summarized in the following theorem:
\begin{theorem} \label{Trigthm}
  1)  The limit:
    \begin{equation}
       \mathfrak{D}(x)^{\text{trig}} : = \lim_{z \rightarrow \infty} \mathfrak{D}(x,z)^{\text{trig}}
    \end{equation}
    exists.\\
  2) The spectral curve equation reduces to:
     \begin{equation}
         0 =\det \mathfrak{D}(x,z)^{\text{trig}} = \chi(x) - z^{-1} \chi(x-m) = Y(x) - z^{-1} Y(x-m) 
     \end{equation}
 3) The equation (\ref{factorizationCl}) could be solved for $Y_\omega$, expressing them trough $\chi_\omega$, in the following way. Let $Q_k$ be the $k$-th principal minor of the matrix $\mathfrak{D}(x)^{\text{trig}}$:
 \begin{equation}
     Q_k (x) = \det_{1 \leq i, j \leq k} \mathfrak{D}(x)^{\text{trig}} \qquad k = 1, ..., N 
 \end{equation}
 and $Q_0 =1$\\
 Then the $Y_{\omega}$ variables are expressed in terms of them as:
 \begin{equation}
     Y_{\omega}(x) = \frac{Q_{\omega} (x)}{Q_{\omega-1}(x)} \qquad \omega = 1,..., N
 \end{equation}
\end{theorem}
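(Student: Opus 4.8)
The plan is to handle the three parts in the order (2), (1), (3): part (2) follows almost immediately from the determinant and scalar identities already established, while the genuine work is the degeneration of the two infinite products, which simultaneously gives the existence statement (1) and the triangular structure feeding into (3).

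For part (2) I would start from the determinant formula (\ref{det}), $\det\mathfrak{D}(x,z)=\sum_{n\in\mathbb{Z}}(-z)^{n}\mathfrak{q}^{\frac{n^2+n}{2}}\chi(x+nm)$. Since $\tfrac{n^2+n}{2}\ge 1$ for every integer $n\notin\{0,-1\}$, passing to the trigonometric limit $\mathfrak{q}\to 0$ kills all terms except $n=0$ and $n=-1$, leaving $\det\mathfrak{D}(x,z)^{\text{trig}}=\chi(x)-z^{-1}\chi(x-m)$. The identification with $Y(x)-z^{-1}Y(x-m)$ comes from the scalar identity (\ref{factorizationClSc}) in the same limit: every factor of the first product carries $\mathfrak{q}^{n}$ with $n\ge 1$ and tends to $1$, while in the second product only the $n=0$ factor is nontrivial, so the left-hand side collapses to $Y(x)\bigl(1-z^{-1}Y(x-m)/Y(x)\bigr)=Y(x)-z^{-1}Y(x-m)$; comparing with the right-hand side (again truncated to $n\in\{0,-1\}$) gives the remaining equality.

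For part (1) I would analyze the factorized left-hand side of (\ref{factorizationCl}) directly, exploiting that all $\mathfrak{q}$- and $z$-dependence is concentrated in $\hat{\slashed{Q}}$ and $C_z$. From (\ref{cyclz}), $C_z^{-1}$ raises the index, $e_\omega\mapsto e_{\omega+1}$, with the single wrap $e_{N-1}\mapsto z\,e_0$ carrying $z$, while $C_z$ lowers the index with wrap $e_0\mapsto z^{-1}e_{N-1}$; and $\hat{\slashed{Q}}=\text{diag}(\mathfrak{q}_\omega)$ has $\mathfrak{q}_0=\mathfrak{q}\,x_0/x_{N-1}$ as its only entry proportional to $\mathfrak{q}$. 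In each left factor $1-\hat{\slashed{Q}}^{n}\hat{Y}(x+nm)\,C_z^{-1}\hat{Y}(x+(n-1)m)^{-1}$, the entries $(\omega+1,\omega)$ with $\omega\le N-2$ are $z$-free and finite, whereas the single wrap entry $(0,N-1)$ equals $z\,\mathfrak{q}_0^{\,n}\,Y_1(x+nm)Y_N(x+(n-1)m)^{-1}$ and so vanishes as $\mathfrak{q}\to 0$; thus each factor degenerates to a lower-unitriangular matrix, and the left product converges (formally, or for $|\mathfrak{q}_\omega|<1$, $\omega\ge 1$) to a $z$-independent lower-unitriangular $L(x)$. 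For the right product I would use $C_z^{-n}(\text{diag})C_z^{\,n}=(\text{permuted diag})$, which carries no net $z$, to rewrite each factor as $1-(\text{diagonal})\,C_z$; the diagonal coefficient acquires a factor $\mathfrak{q}_0$ exactly when its index path wraps past $0$, so all wrapping contributions die as $\mathfrak{q}\to 0$, and the surviving single $C_z$ contributes only the $z^{-1}$ wrap entry, which vanishes as $z\to\infty$, leaving a $z$-independent upper-unitriangular $U(x)$. Hence $\mathfrak{D}(x)^{\text{trig}}=\lim_{z\to\infty}\mathfrak{D}(x,z)^{\text{trig}}=L(x)\,\hat{Y}(x)\,U(x)$ exists, proving (1), and the same degeneration applied to the $\hat\chi$-side (RHS of (\ref{factorizationCl})) exhibits $\mathfrak{D}(x)^{\text{trig}}$ explicitly in terms of the $\chi_\omega$.

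Part (3) is then the standard $LDU$ fact. Because $L$ is lower-unitriangular, $U$ upper-unitriangular and $\hat{Y}(x)=\text{diag}(Y_{\omega+1}(x))$, the leading $k\times k$ block factorizes as $L_{[1:k]}\hat{Y}_{[1:k]}U_{[1:k]}$, so $Q_k(x)=\det L_{[1:k]}\det\hat{Y}_{[1:k]}\det U_{[1:k]}=\prod_{i=1}^{k}Y_i(x)$, whence $Y_\omega(x)=Q_\omega(x)/Q_{\omega-1}(x)$; since the $Q_k$ are computed from the $\chi$-presentation of $\mathfrak{D}(x)^{\text{trig}}$, this is precisely the announced inversion of $Y_\omega$ in terms of $\chi_\omega$. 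As a consistency check, $\det\mathfrak{D}(x)^{\text{trig}}=\prod_\omega Y_{\omega+1}(x)=Y(x)$ matches the $z\to\infty$ limit of part (2) via $\prod_\omega Y_\omega=Y$. The main obstacle I anticipate is making the product degeneration rigorous: one must verify that every entry carrying a positive power of $z$ is escorted by a positive power of $\mathfrak{q}_0$ (so that the trig limit strips all wraps from the left product), control the convergence of the resulting infinite triangular products, and justify interchanging the $\mathfrak{q}\to 0$ and $z\to\infty$ limits with those products; the bookkeeping of wraps—each passage through index $0$ producing a factor $z$ together with a factor $\mathfrak{q}_0$—is the crux and is exactly what forces the $LDU$ structure.
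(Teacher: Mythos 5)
Your proposal is correct and follows essentially the same route as the paper: degenerate each factor of the two infinite products in (\ref{factorizationCl}) to uni-lower- and uni-upper-triangular $z$-independent matrices (using that every $z$-carrying wrap entry is accompanied by a power of $\mathfrak{q}_0$), then read off parts (1) and (3) from the resulting $LDU$ structure and its preservation of principal minors. The only cosmetic difference is that you obtain part (2) directly by truncating the theta sum in (\ref{det}) and the products in (\ref{factorizationClSc}) at $\mathfrak{q}\to 0$, whereas the paper first deduces $\det\mathfrak{D}(x)^{\text{trig}}=Y(x)=\chi(x)$ from the triangular factorization and then invokes the same two formulas.
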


\begin{corollary}[QC duality]
The set of functions $Q_\omega(x)$ satisfy the Bethe equations:
\begin{equation}
    \frac{x_\omega}{x_{\omega-1}} \frac{Q_{\omega+1}(x+m) Q_{\omega}(x-m) Q_{\omega-1}(x) }{ Q_{\omega+1}(x) Q_{\omega}(x+m) Q_{\omega}(x-m)} \Big|_{\textnormal{zeros of } Q_\omega(x)} = -1
\end{equation}
with $Q_N$ being equal to:
\begin{equation}
    Q_N(x) = \det \mathfrak{D}(x)^{\text{trig}} = \chi(x) = Y(x) = \prod_{\alpha=1}^N \vartheta(x-a_\alpha)
\end{equation}
\end{corollary}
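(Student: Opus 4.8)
The plan is to obtain the Bethe equations directly from the classical limit of the saddle--point (limit--shape) equation (\ref{limitshape1}), rewriting the $Y$--functions through the principal minors furnished by part 3 of Theorem \ref{Trigthm}. First I would send $\epsilon = \epsilon_1 + \epsilon_2 \to 0$ in (\ref{limitshape1}), which reduces it to
\begin{equation}
\mathfrak{q}_\omega \, \frac{Y_{\omega+1}(x+m)\, Y_\omega(x-m)}{Y_{\omega+1}(x)\, Y_\omega(x)} \Big|_{x = x_{\square_\omega}} = -1 ,
\end{equation}
holding at the weight $x_{\square_\omega}$ of every addable box of colour $\omega$. Since Theorem \ref{Trigthm} gives $Y_\omega(x) = Q_\omega(x)/Q_{\omega-1}(x)$ with $Q_0 = 1$, I would substitute these ratios: the four $Y$--factors unfold into eight minors, the $Q_\omega(x)$ produced by $Y_{\omega+1}(x)$ cancels against the one produced by $Y_\omega(x)$, and after inserting $\mathfrak{q}_\omega = x_\omega/x_{\omega-1}$ one is left with the nested relation
\begin{equation}
\frac{x_\omega}{x_{\omega-1}} \, \frac{Q_{\omega+1}(x+m)\, Q_\omega(x-m)\, Q_{\omega-1}(x)}{Q_{\omega+1}(x)\, Q_\omega(x+m)\, Q_{\omega-1}(x-m)} = -1 ,
\end{equation}
which is the asserted Bethe equation once the evaluation locus is identified.

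The step I expect to be the crux is matching the evaluation locus: the saddle equation is imposed at the addable--box weights $x_{\square_\omega}$, whereas the corollary evaluates at the zeros of $Q_\omega(x)$, so I must show these coincide. The mechanism I would use is the zero/pole structure of the $Y$--functions at the limit shape. From the explicit product for $\mathcal{Y}_\omega$ (the formula following (\ref{Yobsorb})), in the classical limit the numerator $\vartheta$--factor attached to each addable box of colour $\omega$ survives without a cancelling denominator and yields a simple zero of $Y_\omega$ at $x_{\square_\omega}$; dually, purely from $Y_{\omega+1} = Q_{\omega+1}/Q_\omega$, the function $Y_{\omega+1}$ develops a pole there, so the ratio in the saddle equation stays finite. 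Because $Y_\omega = Q_\omega/Q_{\omega-1}$ and $Q_{\omega-1}$ is generically nonvanishing at such points, a zero of $Y_\omega$ is exactly a zero of $Q_\omega$, and conversely any zero of $Q_\omega$ not shared with $Q_{\omega-1}$ is a zero of $Y_\omega$, hence an addable--box position. Making this bijection rigorous --- in particular the genericity input that consecutive minors $Q_{\omega-1}, Q_\omega, Q_{\omega+1}$ have no common zeros and that all zeros are simple --- is the delicate part of the argument.

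Finally I would assemble the boundary identity for $Q_N$. By definition $Q_N(x)$ is the full $N \times N$ principal minor, so $Q_N(x) = \det \mathfrak{D}(x)^{\text{trig}}$. Letting $z \to \infty$ in the spectral--curve identity of part 2 of Theorem \ref{Trigthm}, namely $\det \mathfrak{D}(x,z)^{\text{trig}} = \chi(x) - z^{-1}\chi(x-m) = Y(x) - z^{-1} Y(x-m)$, the $z^{-1}$ term drops and yields $\det \mathfrak{D}(x)^{\text{trig}} = \chi(x) = Y(x)$. In the trigonometric regime $\mathfrak{q} \to 0$ the classical limit shape collapses to the empty $N$--tuple, so the explicit expression for $\mathcal{Y}(x)$ retains only its perturbative factor $Y(x) = \prod_{\alpha=1}^N \vartheta(x-a_\alpha)$, which closes the chain of equalities and completes the proof.
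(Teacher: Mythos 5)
Your proof is correct and follows exactly the paper's route: the paper's own proof is the one-line remark that the Bethe equations follow from the limit-shape equation (\ref{limitshape1}) combined with $Y_\omega = Q_\omega/Q_{\omega-1}$ from Theorem \ref{Trigthm}, and that the $Q_N$ identity follows from parts 2)--3) of that theorem --- which is precisely what you spell out, including the identification of the addable-box locus with the zeros of $Q_\omega$. Note that the equation you actually derive, with $Q_{\omega-1}(x-m)$ rather than $Q_\omega(x-m)$ in the denominator, is the correct (standard nested) form --- the corollary as printed appears to contain a misprint, so your claim that your relation ``is the asserted Bethe equation'' should be read modulo that typo.
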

\begin{proof}
    The first statement follows directly from the formula (\ref{limitshape1}), and the second from the theorem above.
\end{proof}
This statement is the essence of the proof (in one direction) of the Quantum-Classical duality between trig CM and XXX spin chain, trig RS and XXZ spin chain, and dual ell RS and XYZ spin chain, introduced in \cite{GK} and studied in various forms (including the spectral dual frame) in
\cite{MTV}, \cite{ALTZ}, \cite{AKLTZ}, \cite{Z}, \cite{GZZ}, \cite{BKK}, \cite{ZK}, \cite{ZKS}, \cite{FZKS} \cite{BLZZ}, \cite{MNS} \cite{NSBetheGauge}, \cite{NW}, \cite{NRS}. The coordinates of the many-body systems $x_\omega$ become twist parameters of the spin chain, and the eigenvalues of the Lax operator $a_\omega$ become inhomogeneity parameters.

\subsection{Lax eigenvector}
The Lax operator eigenvector will be constructed out of the orbifolded version of the folded instantons partition function in the limit $\epsilon_1 = 0$. Namely

\begin{theorem} \label{Eigenvectorthm}
\begin{equation}
    \sum_{n \in \mathbb{Z}} (-1)^{n} \hat{\chi}(x+nm) D^{(n)}  C^{-n}  e^{-n \ve_2 \partial_x} \, \, \vec{\chi}_{24}(x-m)  = 0
\end{equation}
where
\begin{equation}
    \hat{\chi}(x) = \textnormal{diag} \big( \chi_\omega(x) \big)_{\omega = 0 }^{\omega = N-1} 
\end{equation}
\begin{equation}
  \vec{\chi}_{24}(x)  = \begin{pmatrix} \chi_{24, 0}(x) \\ \chi_{24, 1}(x) \\
    \vdots \\
    \chi_{24, N-1}(x)
    \end{pmatrix}
\end{equation}
($\chi_{24,\omega}(x)$ were defined in (\ref{foldedNSorb})), 
and $D^{(n)}$ is a diagonal matrix:
\begin{equation}
    D^{(n)} = \begin{cases}
        \prod_{k = 0}^{n-1} (\hat{\slashed{Q}}^{n-k} C^{-1}) \, C^n , \qquad  n \geq 0 \\
        \prod_{k=1}^{n-1} C^k (\hat{\slashed{Q}}C^{-1})^k \qquad n< 0
    \end{cases}
\end{equation}
Explicitly, it looks like:
\begin{equation}
    D^{(n)} = \begin{cases}
        \textnormal{diag}(\prod_{k = 0}^{n-1} \mathfrak{q}_{\omega-k}^{n-k})_{\omega =0}^{\omega=N-1}, \qquad  n \geq 0 \\
        \textnormal{diag}(\prod_{k = 1}^{-n-1} \mathfrak{q}_{\omega+k}^{-n-k})_{\omega =0}^{\omega=N-1}, \qquad  n < 0
    \end{cases}
\end{equation}
\end{theorem}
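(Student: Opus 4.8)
The plan is to recognize the displayed operator as the right-hand side of the factorization formula (\ref{factorizationform}) specialized to $z=1$, and then to prove horizontality of the folded-instanton vector with respect to the factorized operator. First I would match terms: for $n\geq 0$ the product $D^{(n)}C^{-n}$ collapses to $\prod_{k=0}^{n-1}(\hat{\slashed{Q}}^{n-k}C^{-1})$, and for $n<0$, writing $n=-\tilde n$, to $\prod_{k=1}^{\tilde n-1}C^k(\hat{\slashed{Q}}C^{-1})^k\,C^{\tilde n}$, while $e^{-n\epsilon_2\partial_x}$ supplies the shift operators $e^{\mp n\epsilon\partial_x}$. Since $\epsilon=\epsilon_2$ when $\epsilon_1=0$, the sum in the theorem is exactly the right-hand side of (\ref{factorizationform}) with every $z^{\pm n/N}$ set to one and $C_z$ replaced by $C$. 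By Theorem \ref{factorizationthm} this equals the product (left-hand) side $\hat{\mathfrak{D}}(x,1)$. Thus the statement collapses to the single operator identity $\hat{\mathfrak{D}}(x,1)\,\vec{\chi}_{24}(x-m)=0$, i.e. the claim that the folded-instanton vector is a horizontal section (Baker--Akhiezer vector) for the difference connection defined by $\hat{\mathfrak{D}}$.

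Next I would establish horizontality directly from the defining properties of the folded observables. The components $\chi_{24,\omega}$ are the $\epsilon_1\to0$ regularizations of $\mathcal{Y}_{24,\omega}$, and through (\ref{Qorb})--(\ref{Y24orb}) they are governed by the single family $\tilde{Q}_\omega$, with $Y_\omega(x)=\tilde{Q}_\omega(x)/\tilde{Q}_\omega(x+\epsilon_2 N)$ and $Y_{24,\omega}(x)=\tilde{Q}_\omega(x)/\tilde{Q}_\omega(x+m)$. I would use the product form $\hat{\mathfrak{D}}(x,1)=L_+\,\hat{Y}(x+\epsilon)\,R_+$, where $L_+$ and $R_+$ are the two semi-infinite $\hat{C}_1$-ordered products with $\hat{C}_1=C e^{\epsilon\partial_x}$, and feed $\vec{\chi}_{24}(x-m)$ into $R_+$. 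The mechanism I expect is telescoping: each factor of $R_+$ shifts the argument by $m$ and rotates indices by $C$, and the $\tilde{Q}$-presentation of $\chi_{24,\omega}$ is engineered so that consecutive factors cancel against the ratios $\hat{Y}(x-(n+1)m+\epsilon)/\hat{Y}(x-nm+\epsilon)$, converting the semi-infinite product acting on the vector into a finite expression that is then annihilated by the complementary product $L_+\,\hat{Y}(x+\epsilon)$. Concretely, this amounts to deriving a first-order difference relation for $\chi_{24,\omega}(x)$ and checking that it is precisely the flatness condition encoded by one factor of $\hat{\mathfrak{D}}$.

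Conceptually, the cleanest route --- and the one I would present as the proof of principle --- is the gauge-origami Dyson--Schwinger argument: the left-hand side of the eigenvector equation is the $\epsilon_1\to0$ vacuum expectation value, in the full gauge origami with instantons on the $12$, $34$ and $24$ planes, of the $34$-plane $qq$-character $\EuScript{X}_\omega$ inserted into the folded ($24$) partition function. By the no-pole property (Property 2 of Definition \ref{foldeddeforb}, together with the regularity of $\EuScript{X}_\omega$) this combined vev is regular in $x\in\mathbb{C}$; by the transformation property of $\hat{\mathfrak{D}}$ (and Theorem \ref{6dtransform} in the elliptic case) it is quasi-periodic with a definite multiplier; and counting zeros against the allowed number of poles forces the regular quasi-periodic section to vanish identically. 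The main obstacle, on either route, is the pole-cancellation combinatorics that links the $34$ data ($\hat{\chi},\hat{Y}$) to the $24$ data ($\vec{\chi}_{24}$): one must show that the Young-diagram sums defining $\chi_{24,\omega}$ in (\ref{foldedNSorb}) supply exactly the residues needed to cancel those of the $qq$-character terms, which is the folded analogue of the non-commutative Jacobi / pole-cancellation identity underlying Theorem \ref{factorizationthm}. Once that cancellation is in hand, regularity together with quasi-periodicity and zero-counting (equivalently, the telescoping of the products) yields $\hat{\mathfrak{D}}(x,1)\,\vec{\chi}_{24}(x-m)=0$.
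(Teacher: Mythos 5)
Your reduction of the statement to the single identity $\hat{\mathfrak{D}}(x,1)\,\vec{\chi}_{24}(x-m)=0$ (the operator in the theorem being the right-hand side of (\ref{factorizationform}) with $z^{1/N}=1$) is correct and consistent with how the paper frames the result. But the proposal stops exactly where the proof has to begin: you yourself identify "the pole-cancellation combinatorics that links the $34$ data to the $24$ data" as the main obstacle on either of your routes, and you never supply it. That combinatorial identity \emph{is} the theorem. The paper's Section 9 proves it head-on, working with the sum form rather than the product form: it writes the relevant piece of the plethystic character as $\textnormal{Ch}^{24}_{\omega+n}(x+\epsilon_2 n)|_{\mu}+\textnormal{Ch}^{34}_{\omega}(x-\epsilon_3 n)|_{\nu}$ and exhibits an explicit involution-like bijection $(\mu,\nu,n)\mapsto(\tilde\mu,\tilde\nu,n+1)$ — strip the first row of the $24$-plane diagram $\mu$ and append a row of length $\mu_1-n$ to the $34$-plane diagram $\nu$ — under which this character, the fugacity monomials $\mathbb{Q}^{(24),\mu}_{\omega+n}\mathbb{Q}^{\nu}_{\omega}$, and the matrices $D^{(n)}$ all transform consistently; the sign $(-1)^n$ then flips, and the bijection between triples with $\mu_1-n\ge\nu_1$ and those with $\tilde\mu_1-\tilde n<\tilde\nu_1$ makes the entire double sum cancel in pairs. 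None of this is present, even in outline, in your proposal.

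Your two suggested routes also have independent problems. The telescoping route asserts without verification that the $\tilde Q_\omega$-presentation of $\chi_{24,\omega}$ is "engineered so that consecutive factors cancel"; the warm-up computations in the paper (e.g.\ the relations between $\chi_{24,1}^{(\square)}$ and $\chi_{24,0}^{(\varnothing)}$ in the $SL(2)$ case) show that even the first-order check already requires nontrivial identities mixing different instanton sectors, so this cannot be waved through. The Dyson--Schwinger route is worse off: at fixed $z=1$ the operator $\hat{\mathfrak{D}}(x,z)$ has no quasi-periodicity in $x$ alone in the $4d$ and $5d$ cases (the transformation property of Theorem 4.2 shifts $x$ and $z$ simultaneously, $x\mapsto x+m$, $z\mapsto\mathfrak{q}z$), so there is no elliptic zero-counting argument available for the object $\hat{\mathfrak{D}}(x,1)\vec{\chi}_{24}(x-m)$ as a function of $x$; regularity plus growth control would have to be established by hand, which again lands you back on the same residue-cancellation combinatorics. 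As it stands the proposal is a correct problem statement, not a proof.
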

After the Fourier transform:
\begin{equation}
    \vec{\Psi}_\delta(z) : = \mathfrak{D}_1(z) \sum_{x \in \epsilon_2 \mathbb{Z} + \delta} z^{\frac{x}{N \epsilon_2}}\, \vec{\chi}_{24}(x-m) 
\end{equation}
one obtains in the 4d case:
\begin{equation}
    \Big[N \epsilon_2 z \frac{d}{dz} - \tilde{L}(z) \Big] \vec{\Psi}_\delta(z) = 0
\end{equation}
and in the 5d case:
\begin{equation}
    \Big[e^{ \beta N \epsilon_2 z \frac{d}{dz} } - \tilde{L}^{RS}(z) \Big] \vec{\Psi}_\delta(z) = 0
\end{equation}
where $\delta$ is a parameter labeling different solutions.

\subsection{Taking the limit in the Lax eigenvector formula}
The formula above for $\vec{\chi}_{24}(x)$ depends on the parameter $\epsilon_2$, which corresponds to the quantization of the spectral parameter in the Lax matrix. 
In order to get the eigenvector for the usual classical Lax matrix we need to take the limit $\epsilon_2 \rightarrow 0$. \\
In order to take this limit we need to introduce more convenient notations for the characters appearing in the expression for folded instanton (\ref{foldedNSorb})):
\begin{gather}
    S_{12, a} = q_2^a \tilde{S}_{12, a} \\
    S_{24, a} = q_2^a \tilde{S}_{24, a}
\end{gather}
explicitly they will look like this:
\begin{equation}
\tilde{S}_{12, a} = \tilde{N}_{12,a} - P_1 \tilde{K}_{12,a} + P_1 q_2^{N \delta_{a 0}} \tilde{K}_{12,a-1}  
\end{equation}
and
\begin{equation}
\tilde{S}_{24, a} = \tilde{N}_{24,a} -  \tilde{K}_{24,a} + q_2^{N \delta_{a 0}} \tilde{K}_{24,a-1} +   q_2^{-N \delta_{a, N-1}} q_2 q_4 \tilde{K}_{24,a+1} - q_2 q_4 \tilde{K}_{24,a}
\end{equation}
where:
\begin{gather}
    \tilde{N}_{12,a} = q_2^{-a} N_{12,a}\\
    \tilde{K}_{12,a} = q_2^{-a} K_{12,a} \\
    \tilde{N}_{24,a} = q_2^{-a} N_{24,a}\\
    \tilde{K}_{24,a} = q_2^{-a} K_{24,a}
\end{gather}
for $a=0,...,N-1$. And let
\begin{gather}
    \tilde{N}_{12} = \sum_{a=0}^{N-1} \tilde{N}_{12,a} \\
    \tilde{N}_{24} = \sum_{a=0}^{N-1} \tilde{N}_{24,a}
\end{gather}
Notice that:
\begin{equation}
  \tilde{S}_{12} : =  \sum_{a=0}^{N-1} \tilde{S}_{12, a} = \tilde{N}_{12} - P_1(1-q_2^N) \tilde{K}_{12,N-1}
\end{equation}
and 
\begin{equation}
   \tilde{S}_{24}:=  \sum_{a=0}^{N-1} \tilde{S}_{24, a} = \tilde{N}_{24} - (1-q_2^N) \tilde{K}_{24,N-1} + q_2 q_4 (1-q_2^{-N}) \tilde{K}_{24,0} 
\end{equation}
contain factors $(1-q^N)$, allowing us to break the character in the argument of the plethystic exponent in (\ref{foldedNSorb})) into singular and non-singular parts as $\epsilon_2 \rightarrow 0$. Indeed, let us rewrite it as:
\begin{equation}
    \sum_{\substack{a =1,...,N \\ \omega', \omega'' \in \mathbb{Z}_N\\ a - \omega' + \omega'' = - \omega \, \text{mod} N}} q_2^{a} S_{12, \, \omega'}^* S_{24, \, \omega''} = \sum_{\substack{a =1,...,N \\ \omega', \omega'' =0,...,N-1\\ a - \omega' + \omega'' = - \omega \, \text{mod} N}} q_2^{a - \omega' + \omega''} \tilde{S}_{12, \, \omega'}^* \tilde{S}_{24, \, \omega''}
\end{equation}
which is equal to:
\begin{equation}
  \sum_{ \omega' - \omega '' >  \omega }  \tilde{S}_{12, \, \omega'}^* \tilde{S}_{24, \, \omega''} + q_2^N \sum_{ \omega -N  \leq \omega' - \omega '' \leq  \omega } \tilde{S}_{12, \, \omega'}^* \tilde{S}_{24, \, \omega''} +  q_2^{2N} \sum_{  \omega' - \omega '' < \omega -N  } \tilde{S}_{12, \, \omega'+1}^* \tilde{S}_{24, \, \omega''} 
\end{equation}
and finally gives us:
\begin{equation}
     \tilde{S}_{12}^*\tilde{S}_{24} - (1-q_2^N) \sum_{ \omega -N  \leq \omega' - \omega '' \leq  \omega }  \tilde{S}_{12, \, \omega'}^* \tilde{S}_{24, \, \omega''}  - (1-q_2^{2N}) \sum_{  \omega' - \omega '' < \omega -N  }  \tilde{S}_{12, \, \omega'}^* \tilde{S}_{24, \, \omega''} 
\end{equation}
Hence the only singular part of the character in the argument of the plethystic exponent in (\ref{foldedNSorb})) is equal to:
\begin{equation}
    -\frac{P_3}{1 - q_2^N} \tilde{S}_{12} \tilde{N}_{24} 
\end{equation}
and the non-singular part could be expressed nicely with the help of the observation, that in the limit $q_2 = 1$:
\begin{equation}
 P_3  \tilde{S}_{24, a} = P_3  \tilde{N}_{24,a}  - T_a + T_{a-1} + q_3^{-1} T_{a+1} - q_3^{-1} T_a
\end{equation}
where:
\begin{equation}
    T_a \big|_\mu = \sum_{\substack{ (i,j) \in \partial_+ \mu \\ i-j= a \,  \text{mod} N }} (1-q_3^{-j+1}) + \sum_{\substack{ (i,j) \in \partial_- \mu \\ i-j= a \, \text{mod} N }} (1-q_3^{-j+2})  
\end{equation}
Eventually one can write:
\begin{equation}
    \chi_{24,\omega}(x) = {\bE} \Big[ - \frac{P_3}{1-q_2^N} \tilde{S}_{12}^* \tilde{N}_{24}\Big] \, \chi_{24,\omega}^{reg}(x) 
\end{equation}
where the limit:
\begin{equation}
    \lim_{\epsilon_2 \rightarrow 0} \chi_{24,\omega}^{reg}(x)
\end{equation}
is well defined, and equal to:
\begin{equation}
\lim_{\epsilon_2 \rightarrow 0} \chi_{24,\omega}^{reg}(x) = \sum_{\mu} \mathbb{Q}_\omega^{(24), \,\mu} {\bE} \Big[ - \text{Ch}_{24,\omega}^{reg}(x)\Big] \Big|_\mu  
\end{equation}
\begin{multline}
\text{Ch}_{24,\omega}^{reg}(x) = -P_3 S_{12}^* K_{24, N-1} - P_3 q_3^{-1} K_{24,0} - \\- \sum_{\omega-N \leq \omega' - \omega'' \leq \omega} P_3 S_{12, \omega'}^* S_{24, \omega''} - 2 \sum_{\omega' - \omega'' < \omega - N} P_3 S_{12, \omega'}^* S_{24, \omega''}   
\end{multline}
Now we would like to reproduce the equation for the spectral curve. Let us notice, that the singular part is precisely:
\begin{equation}
  {\bE} \Big[ - \frac{P_3}{1-q_2^N} \tilde{S}_{12}^* \tilde{N}_{24}\Big] = Y_{24}(x) 
\end{equation}
Hence, as we have
\begin{equation}
    \frac{Y_{24}(x)}{Y_{24}(x + \epsilon_2 N)} = \frac{Y(x)}{Y(x+m)}
\end{equation}
in the limit $\epsilon_2 \rightarrow 0$ this gives us:
\begin{equation}
Y_{24}(x) \rightarrow e^{- \frac{S(x)}{N \epsilon_2}}    
\end{equation}
where:
\begin{equation}
    \frac{dS}{dx} = \ln Y(x) - \ln Y(x+m)
\end{equation}
Therefore:
\begin{equation}
    e^{- n \epsilon_2 \partial_x} \chi_{24,\omega}(x-m) \rightarrow z^{\frac{n}{N}} e^{- \frac{S(x)}{N \epsilon_2}} \lim_{\epsilon_2 \rightarrow 0} \chi_{24,\omega}^{reg}(x-m)
\end{equation}
where:
\begin{equation} \label{eigenvalue}
    z = \frac{Y(x-m)}{Y(x)}
\end{equation}
and
\begin{equation}
     \lim_{\epsilon_2 \rightarrow 0} \chi_{24,\omega}^{reg}(x-m)
\end{equation}
is therefore components of the zero vector of the matrix:
\begin{equation}
    S_z \mathfrak{D}(x,z) S_z^{-1}
\end{equation}
see formula (\ref{factorizationCl}), (\ref{factorizationform}) and (\ref{SzId}). And hence the eigenvector of the Lax matrix $L(z)$ has the form:
\begin{equation}
    \vec{\Phi}_{24}(x): = U^{-1} \mathfrak{D}_1(z) S_z^{-1} \cdot \lim_{\epsilon_2 \rightarrow 0} \vec{\chi}_{24}^{\,reg}(x-m)
\end{equation}
\begin{equation}
    L(z) \, \vec{\Phi}_{24}(x) = x \,  \vec{\Phi}_{24}(x)
\end{equation}
with the eigenvalue $x$ determined by the equation (\ref{eigenvalue}), which indeed corresponds to the zero of the determinant
\begin{equation}
    \det \big(x - L(z)\big)
\end{equation}
(see $n=0$ multiple of the product in the LHS of (\ref{factorizationCl}), and (\ref{factorizationClSc})).

\subsection{Towards spectral duality for ellCM, ellRS and Dell}
If $L(z)$ is a Lax operator for some integrable system depending on the spectral parameter $z$, and 
\begin{equation}
    \det(x - L(z)) = 0
\end{equation}
is its spectral curve equation, we usually call some other integrable system spectral dual to the initial one, if its Lax operator $L^{\text{Dual}}(x)$ (depending on spectral parameter $x$) satisfies the relation:
\begin{equation}
      \det(z - L^{\text{Dual}}(x))=\det(x - L(z))
\end{equation}
This type of duality was first observed for Toda chain, and then studied extensively for other systems. See, for example: \cite{Harnad},\cite{BEH}, \cite{MMRZZ} \cite{GMMM}.
For elliptic integrable systems, whose Lax operator has an elliptic dependence on parameter $z$ we would expect the spectral dual system to have Lax matrix of infinite size. Indeed,
let us introduce the following notations:
\begin{equation}
    Y_{\omega,n}:=  Y_{\omega}(x+nm) \quad n \in \mathbb{Z}
\end{equation}
$E_{nm} \, - \,\, n,m \in \mathbb{Z}$ be the standard basis in $\mathfrak{g} \mathfrak{l}(\infty)$. Define the set of operators:
\begin{equation} \label{SpinLax}
    \mathcal{L}_\omega(x) = \sum_{n \in \mathbb{Z}} \mathfrak{q}_\omega^{-n} \frac{Y_{\omega,n-1}}{Y_{\omega+1,n}} \Big(E_{nn} + E_{n n+1} \Big)
\end{equation}
Then the following theorem holds:
\begin{theorem} \label{Spectraldual}
Equation
\begin{equation}
    \det_{N \times N} \mathfrak{D}(x,z) = 0
\end{equation}
up to ill-defined infinite factor:
\begin{equation}
    \lim_{n \rightarrow \infty} z^{-n} \mathfrak{q}^{\frac{n^2-n}{2}} Y(x-nm) = \lim_{n \rightarrow \infty} z^{-n} \mathfrak{q}^{\frac{n^2-n}{2}} (-nm)^N
\end{equation}
is equivalent to:
\begin{equation}
    \det_{\infty \times \infty } \big(1 - z \, T_N(x)\big) = 0
\end{equation}
where $T_N(x)$ is given by:
\begin{equation}
    T_N(x)^{-1} = \mathcal{L}_1(x) \, \mathcal{L}_2(x)\cdot...\cdot \mathcal{L}_{N}(x)
\end{equation}
\end{theorem}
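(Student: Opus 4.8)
The plan is to recognize that the left-hand side $\det_{\infty \times \infty}\big(1 - z\,T_N(x)\big)$ is, after a mild regularization, nothing but the two-sided infinite product produced by the scalar factorization formula (\ref{factorizationClSc}), and that the discrepancy between the naive two-sided product and the genuinely convergent one-sided products there is exactly the advertised ill-defined factor. Concretely, I would first write each spin Lax operator as $\mathcal{L}_\omega(x) = \hat c_\omega\,(1+\Lambda)$, where $\Lambda = \sum_n E_{n,n+1}$ is the shift on $\mathfrak{g}\mathfrak{l}(\infty)$ and $\hat c_\omega = \mathrm{diag}\big(\mathfrak{q}_\omega^{-n}\,Y_{\omega,n-1}/Y_{\omega+1,n}\big)_{n\in\mathbb{Z}}$. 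Since $1+\Lambda$ is upper bidiagonal, the product $T_N(x)^{-1} = \mathcal{L}_1\cdots\mathcal{L}_N$ is upper triangular (supported on diagonals $0,\dots,N$), and hence so is its inverse $T_N(x)$.

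Next I would compute the diagonal of $T_N$. For a triangular product the diagonal is the product of the diagonals, so $(T_N^{-1})_{nn} = \prod_{\omega=1}^N \mathfrak{q}_\omega^{-n}\,Y_{\omega,n-1}/Y_{\omega+1,n}$. Telescoping the $\omega$-product, using the cyclicity $Y_{\omega+N}=Y_\omega$ of the orbifold index together with $\prod_\omega \mathfrak{q}_\omega = \mathfrak{q}$ and $\prod_\omega Y_\omega(y) = Y(y)$, collapses this to $\mathfrak{q}^{-n}\,Y(x+(n-1)m)/Y(x+nm)$, whence
\[
(T_N)_{nn} = \mathfrak{q}^{\,n}\,\frac{Y(x+nm)}{Y(x+(n-1)m)}.
\]
Because $T_N$ is upper triangular, every finite principal minor on an ordered index set $\{n_1<\dots<n_k\}$ equals $\prod_i (T_N)_{n_i n_i}$, so the Fredholm expansion of the characteristic determinant (a generating function of elementary symmetric functions of the diagonal) collapses to
\[
\det_{\infty \times \infty}\big(1-z\,T_N(x)\big) = \prod_{n\in\mathbb{Z}}\Big(1-z\,\mathfrak{q}^{\,n}\tfrac{Y(x+nm)}{Y(x+(n-1)m)}\Big).
\]

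I would then regularize this two-sided product by splitting it at $n=0$. The factors with $n\ge 1$ are precisely the first (left) product in (\ref{factorizationClSc}). For $n\le 0$ I would pull out the divergent prefactor, writing $1-z\mathfrak{q}^n r_n = (-z\mathfrak{q}^n r_n)\,(1-z^{-1}\mathfrak{q}^{-n}r_n^{-1})$ with $r_n = Y(x+nm)/Y(x+(n-1)m)$; after the substitution $n\mapsto -n$ the convergent parts reproduce the second (right) product in (\ref{factorizationClSc}), while the remaining prefactor telescopes to $(-z)^{k}\mathfrak{q}^{-k(k-1)/2}\,Y(x)/Y(x-km)$ as $k=M+1\to\infty$. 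Combining this with (\ref{factorizationClSc}) and (\ref{det}) yields the clean relation
\[
\det_{\infty \times \infty}\big(1-z\,T_N(x)\big) = \Big(\lim_{k\to\infty}\tfrac{(-z)^{k}\mathfrak{q}^{-k(k-1)/2}}{Y(x-km)}\Big)\,\det_{N\times N}\mathfrak{D}(x,z),
\]
in which the $Y(x)$ cancels and the bracket is the reciprocal of the stated factor $\lim_{n\to\infty}z^{-n}\mathfrak{q}^{(n^2-n)/2}Y(x-nm)$. Since this prefactor is nonzero (though formally divergent), the vanishing loci of the two determinants coincide, which is the asserted equivalence.

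The main obstacle is not the algebra above, which is essentially a bookkeeping telescoping, but the analytic meaning of an infinite determinant of the banded, non-trace-class operator $T_N$. The identity ``principal minor $=$ diagonal product'' for an infinite upper-triangular operator, and the passage from the formal two-sided product to the convergent one-sided products of (\ref{factorizationClSc}), are exactly where the regularization enters; the theorem's clause ``up to ill-defined infinite factor'' is the precise record of this step. I would therefore present the equality as an identity of formal expansions in $z$ and $z^{-1}$, matching coefficients through the elementary symmetric functions of the diagonal entries $(T_N)_{nn}$, so that the divergent prefactor is isolated from the outset rather than produced at the end.
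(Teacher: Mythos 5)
Your proposal is correct, and it reaches the stated determinant identity by a route that is partly different from the paper's. The shared core is the observation that triangularity reduces $\det_{\infty\times\infty}(1-z\,T_N(x))$ to the two-sided product $\prod_{n\in\mathbb{Z}}\bigl(1-z\,\mathfrak{q}^n Y(x+nm)Y(x+(n-1)m)^{-1}\bigr)$; your telescoping of the diagonal of $T_N$ agrees with the paper's computation. Where you diverge: the paper first rewrites the matrix product $\mathfrak{D}(x,z)$ as a doubly-infinite product $\tilde{\mathfrak{D}}(x,z)$ times the divergent matrix factor $(-1)^n\hat{Y}(x-nm)\prod_{i=1}^n C_z\hat{\slashed{Q}}^{n-i}$ (this requires an inductive lemma on words in $C_z$ and $\hat{\slashed{Q}}$), and then \emph{derives} the operators $\mathcal{L}_\omega$ by unfolding the null-vector recursion $\psi_{n+1}=[1-\hat{\slashed{Q}}^n\hat{Y}(x+nm)C_z^{-1}\hat{Y}(x+(n-1)m)^{-1}]\psi_n$, which exhibits an actual bijection between null vectors of $\tilde{\mathfrak{D}}$ and of $z-\mathcal{L}_1\cdots\mathcal{L}_N$ before any determinant is taken. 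You instead take $\mathcal{L}_\omega$ as given by (\ref{SpinLax}) and extract the ill-defined prefactor by the elementary scalar identity $1-a=(-a)(1-a^{-1})$ applied to the $n\le 0$ factors, matching the result against (\ref{factorizationClSc}) and (\ref{det}). This is a legitimate and more economical proof of the theorem as stated (an equality of determinants up to the divergent factor, hence equivalence of their vanishing loci), at the cost of losing the eigenvector-level correspondence that the paper's argument provides and that motivates calling the dual system a spin chain; the residual $(-1)^n$ ambiguity in your prefactor versus the one quoted in the theorem is already present in the paper's own bookkeeping of the ill-defined factor and is immaterial here.
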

\begin{rem}
    In order to avoid nasty infinite factors the precise equality could be formulated for the ratios of the determinants:
\begin{equation}
    \frac{\det_{N \times N} \mathfrak{D}(x+m,z)}{\det_{N \times N} \mathfrak{D}(x,z)} = \frac{ \det_{\infty \times \infty } \big(1 - z \, T_N(x+m)\big)}{ \det_{\infty \times \infty } \big(1 - z \, T_N(x)\big)}
\end{equation}
And now they got canceled. \\
It is also interesting to notice that such a ratio of generating functions of Hamiltonians (in the quantum case) was first used by M. Nazarov and E. Sklyanin to study $N \rightarrow \infty$ limit of the trigonometric Ruijsenaars-Schneider model \cite{NS}. For them it was pure convenience, however, in the analogous situation for elliptic in momenta generalization of tRS system \cite{GZ1} without taking such ratios the modes of the generating function do not even provide a family of commuting Hamiltonians.
\end{rem}
From the expressions for $T_N(x)$ it is clear that the dual system is some sort of spin chain, and from the expression for $\mathcal{L}_\omega(x)$ it is clear that each site is associated with some infinite-dimensional representation of some algebra. However, as the Poisson bracket between $\mathfrak{q}_{\omega'}$ and $Y_\omega(x)$ is very non-trivial, we were not able to identify the corresponding $rLL$-structure. We expect it to be related to affine Yangian / quantum toroidal algebra / elliptic quantum toroidal algebra.

\section{ Proof of Factorization formula for qq-Character}
\textbf{Remark:} In our next paper we will give an alternative simpler proof of this statement, which will be based on the results from \cite{NG,ncJac}.

\begin{proof}[Proof of theorem \ref{factorizationthm}]
 By simple cancellation of factors the formula for the $qq$-Character (\ref{qq-CharLim}) could be rewritten as:
\begin{equation} \label{qChar}
    \chi_\omega(x) =  \sum_{\lambda} \prod_{j = 1}^{\lambda_1} \mathfrak{q}_{\omega+1-j}^{\lambda_j^{t}}  \prod_{j =1}^{\lambda_1}  \frac{Y_{\omega+2-j}(x + m(\lambda_j^t-j+1) + \epsilon(2-j) ) }{Y_{\omega+1-j}(x + m(\lambda_j^t-j) + \epsilon(1-j) )} Y_{\omega +1 - \lambda_1} (x - m \lambda_1 + \epsilon(1- \lambda_1)) 
\end{equation}
 Opening the brackets in the LHS of the formula (\ref{factorizationform}) one obtains:
 \begin{eqnarray}
     \sum_{r , s\geq 0} \sum_{\substack{ n_0 > n_1 >...>n_{r-1} \geq 1 \\ 0 \leq k_0 < k_1<...< k_{s-1}}} (-z)^{\frac{r-s}{N}} \prod_{i=0}^{r-1} \mathfrak{q}_{\omega-i}^{n_i} \prod_{i=0}^{s-1} \prod_{j=0}^{k_i - 1} \mathfrak{q}_{\omega+i-j -r} \, e_\omega \otimes e_{\omega -r +s}^t \cdot \\
     \cdot 
\prod_{i =0 }^{r-1} \frac{Y_{\omega+1-i} (x+ n_i m - (i-1) \epsilon ) }{Y_{\omega-i} (x+ (n_i -1 )m - i \epsilon )} Y_{\omega+1-r}(x-(r-1) \epsilon) \cdot \\ \cdot \prod_{i = 0}^{s-1} \frac{Y_{\omega +1 +i - k_i -r} (x - (k_i +1)m - (r+k_i -i -1) \epsilon)}{Y_{\omega +1 +i - k_i -r} (x - k_i m - (r+k_i - i -1) \epsilon)} e^{- \epsilon (r-s) \partial_x}
 \end{eqnarray}
 The further idea is to notice that the two sets of strictly increasing numbers $n_0 > n_1 >...>n_{r-1} \geq 1$ and $0 \leq k_0 < k_1<...< k_{s-1}$ precisely encode the information about a Young diagram $\lambda$ and an additional integer value, which could be interpreted as a shift of the Young diagram perpendicular to the main diagonal. The dictionary is the following. The shift is equal to $p = r-s$. The positive integers $n_j$ define the lengths of the first $r$-columns, and $k_i$ define the length of the first $s$-rows, through the formulas:
 \begin{gather}
     n_j = \lambda_{j+1}^t -j +p, \qquad j = 0,...,r-1 \\
     k_{s-i} = \lambda_i - i - p , \qquad i = 1,...,s
 \end{gather}
 This data uniquely determines the diagram. Notice that, given $\lambda$ and $p$ the numbers $r$ and $s$ are uniquely determined as such values of $i$ and $j$ that the expressions $\lambda_i - i - p $, $\lambda_{j+1}^t -j +p$ change sign at them correspondingly (last value of the index for which they are positive).
 With this substitution the above expression could be rewritten as follows:
 \begin{eqnarray}
     \sum_{p \in \mathbb{Z} } \sum_{\lambda} (-z)^{\frac{p}{N}} \prod_{j=0}^{r-1} \mathfrak{q}_{\omega-j}^{\lambda_{j+1}^t -j +p} \prod_{i=1}^{s} \prod_{j=0}^{\lambda_i - i - p - 1} \mathfrak{q}_{\omega-i-j -p} \, e_\omega \otimes e_{\omega - p}^t \cdot \\
     \cdot 
\prod_{j =0 }^{r-1} \frac{Y_{\omega+1-j} (x+ (\lambda_{j+1}^t -j +p) m - (j-1) \epsilon ) }{Y_{\omega-j} (x+ (\lambda_{j+1}^t -j +p -1 )m - j \epsilon )} Y_{\omega+1-r}(x-(r-1) \epsilon) \cdot \\
\cdot \prod_{i = 1}^{s} \frac{Y_{\omega +1 - \lambda_i} (x - (\lambda_i - i - p+1)m - (\lambda_i -1) \epsilon)}{Y_{\omega +1 - \lambda_i} (x - (\lambda_i - i - p) m - (\lambda_i -1) \epsilon)} e^{- \epsilon p \partial_x}
 \end{eqnarray} 
 Now we need to match every multiple in the product to every multiple in the expression (\ref{qChar}), shifted by $p$.\\
 Let us denote:
 \begin{multline}
 \text{LHS}(\lambda,p) = \prod_{j =0 }^{r-1} \frac{Y_{\omega+1-j} (x+ (\lambda_{j+1}^t -j +p) m - (j-1) \epsilon ) }{Y_{\omega-j} (x+ (\lambda_{j+1}^t -j +p -1 )m - j \epsilon )} Y_{\omega+1-r}(x-(r-1) \epsilon) \cdot \\
\cdot \prod_{i = 1}^{s} \frac{Y_{\omega +1 - \lambda_i} (x - (\lambda_i - i - p+1)m - (\lambda_i -1) \epsilon)}{Y_{\omega +1 - \lambda_i} (x - (\lambda_i - i - p) m - (\lambda_i -1) \epsilon)}   
 \end{multline}
 And:
 \begin{multline}
 \text{RHS}(\lambda,p) =  Y_{\omega +1 - \lambda_1} (x - m (\lambda_1 - p) + \epsilon(1- \lambda_1))  \prod_{\substack{j =1 } }^{\lambda_1} \frac{Y_{\omega+2-j}(x + m(\lambda_j^t-j+1+p) + \epsilon(2-j) ) }{Y_{\omega+1-j}(x + m(\lambda_j^t-j+p) + \epsilon(1-j) )}     
 \end{multline}

 We are going to prove that $\text{LHS}(\lambda,p) = \text{RHS}(\lambda,p)$ by induction on the number of boxes in the Young diagram.\\
 The base of the induction is the case when $\lambda = \varnothing$,  and either $r = 0$, and hence $s = -p$, or $s=0$, and $r=p$.\\
 Let us consider the case $r = 0$ first.
The LHS$(\varnothing,-s)$ of the formula above then takes the form:
\begin{equation}
Y_{\omega+1}(x+ \epsilon)
\cdot \prod_{i = 1}^{s} \frac{Y_{\omega +1} (x +(i + p-1)m + \epsilon)}{Y_{\omega +1} (x +(i + p)m + \epsilon)}    
\end{equation}
which is, after canceling all factors, equal to the RHS$(\varnothing,-s)$:
\begin{equation}
Y_{\omega +1} (x + m p + \epsilon)    
\end{equation}
Now let $s = 0$, then one has:
\begin{multline}
\text{LHS}(\varnothing,r) =  \prod_{j =0 }^{r-1} \frac{Y_{\omega+1-j} (x+ (r-j) m - (j-1) \epsilon ) }{Y_{\omega-j} (x+ (r -j -1 )m - j \epsilon )} Y_{\omega+1-r}(x-(r-1) \epsilon)  =
Y_{\omega+1-r}(x + r m+ \epsilon)
\end{multline}
which is equal to the RHS$(\varnothing,r)$.\\
For the induction step, let us assume, that we are adding one box to the $k$'th row. For the LHS the cases $k - \lambda_k -1 +p \geq 0$ and $k - \lambda_k -1 +p < 0$ should be treated separately, because they affect the product of the first $r$ factors or the last $s$ factors correspondingly, but eventually the final result is the same: 
\begin{multline}
    \frac{\text{LHS}(\lambda+1_k,p)}{\text{LHS}(\lambda,p)} = \\ =\frac{Y_{\omega - \lambda_k}(x-(\lambda_k -k -p +2)m - \lambda_k \epsilon)}{Y_{\omega+1 - \lambda_k}(x-(\lambda_k -k -p +1)m - (\lambda_k-1) \epsilon)} \frac{Y_{\omega +1- \lambda_k}(x-(\lambda_k -k -p)m - (\lambda_k-1) \epsilon)}{Y_{\omega - \lambda_k}(x-(\lambda_k -k -p +1)m - \lambda_k \epsilon)}
\end{multline}
By analogous calculation same is for the RHS. Hence the formula is proven.\\
Now it is only left to show, that $\mathfrak{q}_\omega$ factors match. Let us notice, that the RHS of the main formula could be rewritten as:
\begin{multline}
\sum_{n = 0}^\infty (-z)^{\frac{n}{N}} \hat{\chi}(x+nm) \prod_{k = 0}^{n-1} (\hat{\slashed{Q}}^{n-k} C^{-1}) \, e^{-n \epsilon \partial_x} 
+ \\ +\sum_{n = 1}^\infty (-z)^{\frac{-n}{N}} \hat{\chi}(x-nm) \prod_{k=1}^{n-1} C^k (\hat{\slashed{Q}}C^{-1})^k \, C^n \, e^{n \epsilon \partial_x}    = \\=
\sum_{n \in \mathbb{Z}} (-z)^{\frac{n}{N}} \hat{\chi}(x+nm) D_n C^{-n}  e^{-n \epsilon \partial_x} 
\end{multline}
Where $D_n$ is a diagonal matrix:
\begin{equation}
    D_n = \begin{cases}
        \prod_{k = 0}^{n-1} (\hat{\slashed{Q}}^{n-k} C^{-1}) \, C^n , \qquad  n \geq 0 \\
        \prod_{k=1}^{n-1} C^k (\hat{\slashed{Q}}C^{-1})^k \qquad n< 0
    \end{cases}
\end{equation}
Explicitly, it looks like:
\begin{equation}
    D_n = \begin{cases}
        \text{diag}(\prod_{k = 0}^{n-1} \mathfrak{q}_{\omega-k}^{n-k})_{\omega =0}^{\omega=N-1}, \qquad  n \geq 0 \\
        \text{diag}(\prod_{k = 1}^{-n-1} \mathfrak{q}_{\omega+k}^{-n-k})_{\omega =0}^{\omega=N-1}, \qquad  n < 0
    \end{cases}
\end{equation}
Hence, we need to compare the expressions:
\begin{equation}
    \mathfrak{q}(\lambda,p) = \prod_{j=0}^{r-1} \mathfrak{q}_{\omega-j}^{\lambda_{j+1}^t -j +p} \prod_{i=1}^{s} \prod_{j=0}^{\lambda_i - i - p - 1} \mathfrak{q}_{\omega-i-j -p}
\end{equation}
with
\begin{equation}
  \mathfrak{q}'(\lambda,p) = \begin{cases}
     \prod_{j = 1}^{\lambda_1} \mathfrak{q}_{\omega+1-j}^{\lambda_j^{t}} \prod_{k = 0}^{p-1} \mathfrak{q}_{\omega-k}^{p-k}, \qquad  p \geq 0 \\
     \prod_{j = 1}^{\lambda_1} \mathfrak{q}_{\omega+1-j}^{\lambda_j^{t}} \prod_{k = 1}^{-p-1} \mathfrak{q}_{\omega+k}^{-p-k}, \qquad  p < 0 
    \end{cases}
\end{equation}
The proof, again goes by induction. Let us perform the step of the induction first. If we add the box to the $k$'th row, regardless of whether $k - \lambda_k -1 +p \geq 0$ or not, the expression changes as follows:
\begin{equation}
    \frac{\mathfrak{q}(\lambda+1_k,p)}{\mathfrak{q}(\lambda,p)} = \mathfrak{q}_{\omega-\lambda_k} 
\end{equation}
The same is true for $\mathfrak{q}'(\lambda,p)$ :
\begin{equation}
    \frac{\mathfrak{q}'(\lambda+1_k,p)}{\mathfrak{q}'(\lambda,p)} = \mathfrak{q}_{\omega-\lambda_k} 
\end{equation}
So, the step is completed. Now we only need to prove the statement for the empty diagram. Here we need to consider 2 cases. The first case, when $p$ is positive, so $s=0$, and $p=r$. Then we have:
\begin{equation}
\mathfrak{q}(\varnothing,p) = \prod_{j=0}^{p-1} \mathfrak{q}_{\omega-j}^{ -j +p} =    \mathfrak{q}'(\varnothing,p)
\end{equation}
The second case,  $p$ is negative, so $r=0$, $s=-p$. Then one can write:
\begin{equation}
\mathfrak{q}(\varnothing,p)  =  \prod_{i=1}^{-p} \prod_{j=0}^{ - i - p - 1} \mathfrak{q}_{\omega-i-j -p} = \prod_{j=1}^{-p-1} \mathfrak{q}_{\omega-j -p}^{j-1} = \prod_{k = 1}^{-p-1} \mathfrak{q}_{\omega+k}^{-p-k} = \mathfrak{q}'(\varnothing,p)   
\end{equation}
where in the first step we switched the order of the product between $i$ and $j$. This completes the proof of the formula. 
\end{proof}

\section{Proof of the transformation property of the factorization formula}
\begin{proof}
    The proof is based on simple identity:
    \begin{equation}
        X \hat{C}_{\mathfrak{q} z}^{-1} X^{-1} = \slashed{Q} \hat{C}_{ z}^{-1} 
    \end{equation}
    and hence:
    \begin{equation}
     X \hat{C}_{\mathfrak{q} z} X^{-1} = \hat{C}_{ z} \slashed{Q}^{-1}   
    \end{equation}
    The idea is to consider the expression:
    \begin{equation}
      X \hat{\mathfrak{D}}(x+m,\mathfrak{q} z) X^{-1}  
    \end{equation}
    and treat every multiple in the product separately. Let us start from the left. The typical multiple will look like:
    \begin{eqnarray}
        X \Big(1-  \hat{\slashed{Q}}^n \hat{Y}(x+(n+1)m +\epsilon) \, \hat{C}_{\mathfrak{q} z}^{-1} \hat{Y}(x+nm +\epsilon)^{-1} \Big) X^{-1} =\\ = \Big(1-  \hat{\slashed{Q}}^{n+1} \hat{Y}(x+(n+1)m +\epsilon) \, \hat{C}_{z}^{-1} \hat{Y}(x+nm +\epsilon)^{-1} \Big)
    \end{eqnarray}
    Hence, we obtained all multiples in the first product starting from $n=2$. To get $n=1$ one we need to consider the term:
    \begin{eqnarray}
     X \, \hat{Y}(x + m+ \epsilon)  \Big(1 - \frac{\hat{Y}(x + \epsilon)}{\hat{Y}(x+ m + \epsilon)} \hat{C}_{\mathfrak{q} z}\Big) X^{-1} = \\
     - X \Big(1-   \hat{Y}(x+m +\epsilon) \, \hat{C}_{\mathfrak{q} z}^{-1} \hat{Y}(x +\epsilon)^{-1} \Big) X^{-1}\, \hat{Y}(x + \epsilon) \,X \hat{C}_{\mathfrak{q} z} X^{-1} = \\=
     -  \Big(1-   \hat{\slashed{Q}} \hat{Y}(x+m +\epsilon) \, \hat{C}_{z}^{-1} \hat{Y}(x +\epsilon)^{-1} \Big) \, \hat{Y}(x + \epsilon) \, X \hat{C}_{\mathfrak{q} z} X^{-1}
    \end{eqnarray}
    Now we are left to deal with the expressions of the form:
    \begin{eqnarray}
        X \hat{C}_{\mathfrak{q} z} \Big(1 -  (\hat{\slashed{Q}} \hat{C}_{\mathfrak{q} z}^{-1})^n \frac{\hat{Y}(x-nm + \epsilon)}{\hat{Y}(x-(n-1) m + \epsilon)} \hat{C}_{\mathfrak{q} z}^{\, n+1}\Big) X^{-1}
    \end{eqnarray}
    the first thing we need to notice is that:
    \begin{equation}
        (\hat{C}_{\mathfrak{q} z}^{-1} \hat{\slashed{Q}})^{n-1} \hat{C}_{\mathfrak{q} z}^{-1}\frac{\hat{Y}(x-nm + \epsilon)}{\hat{Y}(x-(n-1) m + \epsilon)} \hat{C}_{\mathfrak{q} z}^{\, n}
    \end{equation}
    is a diagonal matrix, so it commutes with $\hat{\slashed{Q}}$, so we have:
    \begin{eqnarray}
        X \hat{C}_{\mathfrak{q} z} \Big(1 -  (\hat{\slashed{Q}} \hat{C}_{\mathfrak{q} z}^{-1})^n \frac{\hat{Y}(x-nm + \epsilon)}{\hat{Y}(x-(n-1) m + \epsilon)} \hat{C}_{\mathfrak{q} z}^{\, n+1}\Big) X^{-1} = \\ =
        X \hat{C}_{\mathfrak{q} z} \Big(1 -  (\hat{C}_{\mathfrak{q} z}^{-1} \hat{\slashed{Q}})^{n-1} \hat{C}_{\mathfrak{q} z}^{-1}\frac{\hat{Y}(x-nm + \epsilon)}{\hat{Y}(x-(n-1) m + \epsilon)} \hat{C}_{\mathfrak{q} z}^{\, n} \hat{\slashed{Q}} \hat{C}_{\mathfrak{q} z }\Big)  X^{-1}
    \end{eqnarray}
    pulling $\hat{C}_{\mathfrak{q} z }$ through to the right, one gets:
    \begin{eqnarray}
     X  \Big(1 -  (\hat{\slashed{Q}} \hat{C}_{\mathfrak{q} z}^{-1})^{n-1}\frac{\hat{Y}(x-nm + \epsilon)}{\hat{Y}(x-(n-1) m + \epsilon)} \hat{C}_{\mathfrak{q} z}^{\, n} \hat{\slashed{Q}} \Big)  X^{-1} \, X\hat{C}_{\mathfrak{q} z} X^{-1}    
    \end{eqnarray}
    Carrying out the conjugation by $X$, we arrive at:
    \begin{eqnarray}
     X  \Big(1 -  (\hat{\slashed{Q}}^2 \hat{C}_{ z}^{-1})^{n-1}\frac{\hat{Y}(x-nm + \epsilon)}{\hat{Y}(x-(n-1) m + \epsilon)} (\hat{C}_{z} \hat{\slashed{Q}}^{-1})^{\, n} \hat{\slashed{Q}} \Big)  X^{-1} \, X\hat{C}_{\mathfrak{q} z} X^{-1} = \\ =    \Big(1 -  (\hat{\slashed{Q}}^2 \hat{C}_{z}^{-1})^{n-1}\frac{\hat{Y}(x-nm + \epsilon)}{\hat{Y}(x-(n-1) m + \epsilon)} (\hat{C}_{ z} \hat{\slashed{Q}}^{-1})^{\, n-1} \hat{C}_{ z} \Big)  \, X\hat{C}_{\mathfrak{q} z} X^{-1}   
    \end{eqnarray}
    Now in each factor $\hat{\slashed{Q}}^2 $ on the left, one multiple got canceled with the factor $\hat{\slashed{Q}}^{-1}$ on the right, because the matrix, standing between them is diagonal. So one gets:
    \begin{eqnarray}
       \Big(1 -  (\hat{\slashed{Q}} \hat{C}_{z}^{-1})^{n-1}\frac{\hat{Y}(x-nm + \epsilon)}{\hat{Y}(x-(n-1) m + \epsilon)} \hat{C}_{ z}^{\, n} \Big)  \, X\hat{C}_{\mathfrak{q} z} X^{-1}   
    \end{eqnarray}
    That is exactly what we need. So the proof is completed. 
\end{proof}
However, for completeness, we decided to show, that the RHS
\begin{eqnarray}
\hat{\mathfrak{D}}(x,z) = \sum_{n = 0}^\infty (-1)^n \hat{\chi}(x+nm) \prod_{k = 0}^{n-1} (\hat{\slashed{Q}}^{n-k} \hat{C}_z^{-1})  + \\
+ \sum_{n = 1}^\infty (-1)^n \hat{\chi}(x-nm) \prod_{k=1}^{n-1} \hat{C}_z^k (\hat{\slashed{Q}}\hat{C}_z^{-1})^k \, \hat{C}_z^n    
\end{eqnarray}
also have the same transformation property. 
\begin{proof}
    Indeed, for the positive shift summands one has:
    \begin{equation}
        X \prod_{k = 0}^{n-1} (\hat{\slashed{Q}}^{n-k} \hat{C}_{\mathfrak{q} z}^{-1}) X^{-1} = \prod_{k = 0}^{n} (\hat{\slashed{Q}}^{n+1-k} \hat{C}_{z}^{-1}) \, \hat{C}_{z} \hat{\slashed{Q}}^{-1}
    \end{equation}
    exactly, as we need. \\
    Now let us consider the negative shift summands:
    \begin{equation}
        X \prod_{k=1}^{n-1} \hat{C}_{\mathfrak{q} z}^k (\hat{\slashed{Q}}\hat{C}_{\mathfrak{q} z}^{-1})^k \, \hat{C}_{\mathfrak{q} z}^n X^{-1}
    \end{equation}
    Let us deal with each multiple of the form:
    \begin{equation}
      X  \hat{C}_{\mathfrak{q} z}^k (\hat{\slashed{Q}}\hat{C}_{\mathfrak{q} z}^{-1})^k  X^{-1}  
    \end{equation}
    first.
    \begin{eqnarray}
    X  \hat{C}_{\mathfrak{q} z}^k (\hat{\slashed{Q}}\hat{C}_{\mathfrak{q} z}^{-1})^k  X^{-1}  = (\hat{C}_{z} \hat{\slashed{Q}}^{-1})^k (\hat{\slashed{Q}}^2\hat{C}_{z}^{-1})^k    
    \end{eqnarray}
    Similarly to the previous proof one of the $\hat{\slashed{Q}}$ factors in $\hat{\slashed{Q}}^2$ on the far right got canceled by the first $\hat{\slashed{Q}}^{-1}$ on the left, because the number of $\hat{C}_{z}$ between them equal to the number of $\hat{C}_{z}^{-1}$. The same holds for all other $\hat{\slashed{Q}}$. So we get:
    \begin{eqnarray}
    X  \hat{C}_{\mathfrak{q} z}^k (\hat{\slashed{Q}}\hat{C}_{\mathfrak{q} z}^{-1})^k  X^{-1}  = (\hat{C}_{z} \hat{\slashed{Q}}^{-1})^k (\hat{\slashed{Q}}^2\hat{C}_{z}^{-1})^k  =    \hat{C}_{z}^k (\hat{\slashed{Q}}\hat{C}_{z}^{-1})^k
    \end{eqnarray}
    Now let us look at the last factor ($k=n-1$) and the $\hat{C}_{\mathfrak{q} z}^n$ term, standing after it:
    \begin{eqnarray}
    \hat{C}_{z}^{n-1} (\hat{\slashed{Q}}\hat{C}_{z}^{-1})^{n-1} (\hat{C}_{z} \hat{\slashed{Q}}^{-1})^n =\hat{C}_{z}^{n-1} \, \hat{C}_{z} \hat{\slashed{Q}}^{-1}    
    \end{eqnarray}
    That is exactly what we needed.
\end{proof}
\begin{proof}[Proof of the theorem \ref{6dtransform}]
   The proof follows directly from the RHS of the factorization formula and formulas (\ref{qqCharCl}) and (\ref{SzId}). 
\end{proof}

\section{Proof of the expression for the Lax matrix of the Calogero-Moser model}

\begin{proof}[Proof of the theorem \ref{Laxthm}]
Both $\mathfrak{D}_0(z)$ and $\mathfrak{D}_1(z)$ are already sections of the bundles on just the elliptic curve. $\hat{\mathfrak{D}}_1(z)$ - section of the line bundle, and $\mathfrak{D}_0(z)$ - section of the affine bundle. And their transformation properties look as follows:
\begin{gather}
X \mathfrak{D}_1(\mathfrak{q} z) = -\mathfrak{D}_1(z) X C_{\mathfrak{q} z} \\
X \mathfrak{D}_0(\mathfrak{q} z) + m X \mathfrak{D}_1(\mathfrak{q} z)  = -\mathfrak{D}_0(z) X C_{\mathfrak{q} z}
\end{gather}
Hence if we define the Lax operator by the formula:
\begin{equation}
    \tilde{L}(z) = - \mathfrak{D}_0(z) \, \mathfrak{D}_1(z)^{-1}
\end{equation}
It will transform as follows:
\begin{equation} \label{LaxTransform}
    X \tilde{L}(\mathfrak{q} z) X^{-1} - m = \tilde{L}(z) 
\end{equation}
Let us write down a more explicit expression for it.
The strategy is to read off the values of $\mathfrak{D}_0(z)$ and $\mathfrak{D}_1(z)$ from the LHS of the formula (\ref{factorizationCl}), and then, divide one by another.  At first, we need to expand the $\hat{Y}$ matrix up to constant order in $x$:
\begin{equation}
    \hat{Y}(x) = x - P + O \big(\frac{1}{x} \big)
\end{equation}
The factors appearing in the product in (\ref{factorizationCl}) then will have the expansion:
\begin{gather}
\frac{\hat{Y}(x-(n+1)m)}{\hat{Y}(x-n m)} = 1 - \frac{m}{x} + O \big(\frac{1}{x^2} \big)  \\
\hat{Y}(x+nm ) \, C_{z}^{-1} \hat{Y}(x+(n-1)m )^{-1} = C_{z}^{-1} + \frac{m}{x} C_{z}^{-1} + \frac{1}{x} [C_{z}^{-1}, P] + O \big(\frac{1}{x^2} \big)
\end{gather}
where $P$ - is the matrix of classical momenta, conjugated to $x_\omega$.
Then we can write down $\mathfrak{D}_0(z)$ and $\mathfrak{D}_1(z)$ explicitly:
\begin{equation}
 \mathfrak{D}_1(z) =  \overleftarrow{\prod_{n=1}^\infty} \Big(1-  \hat{\slashed{Q}}^n  \, C_{z}^{-1} \Big) 
\cdot \overrightarrow{\prod_{n=0}^\infty} \Big(1 -   (\hat{\slashed{Q}} C_{z}^{-1})^n C_{z}^{\, n+1}\Big)  
\end{equation}
\begin{multline}
 \mathfrak{D}_0(z) = - \overleftarrow{\prod_{n=1}^\infty} \Big(1-  \hat{\slashed{Q}}^n  \, C_{z}^{-1} \Big) 
\, P \, \overrightarrow{\prod_{n=0}^\infty} \Big(1 -   (\hat{\slashed{Q}} C_{z}^{-1})^n C_{z}^{\, n+1}\Big)  - \\
- \sum_{k=1}^\infty \overleftarrow{\prod_{n=k+1}^\infty} \Big(1-  \hat{\slashed{Q}}^n  \, C_{z}^{-1} \Big) \hat{\slashed{Q}}^k \Big(m C_{z}^{-1} + [C_{z}^{-1}, P]  \Big) \overleftarrow{\prod_{n=1}^{k-1}} \Big(1-  \hat{\slashed{Q}}^n  \, C_{z}^{-1} \Big) \cdot \overrightarrow{\prod_{n=0}^\infty} \Big(1 -   (\hat{\slashed{Q}} C_{z}^{-1})^n C_{z}^{\, n+1}\Big) + \\
+ m \sum_{k=0}^\infty \overleftarrow{\prod_{n=1}^\infty} \Big(1-  \hat{\slashed{Q}}^n  \, C_{z}^{-1} \Big) \cdot \overrightarrow{\prod_{n=0}^{k-1}} \Big(1 -   (\hat{\slashed{Q}} C_{z}^{-1})^n C_{z}^{\, n+1}\Big) \cdot (\hat{\slashed{Q}} C_{z}^{-1})^k C_{z}^{\, k+1} \cdot \overrightarrow{\prod_{n=k+1}^{\infty}} \Big(1 -   (\hat{\slashed{Q}} C_{z}^{-1})^n C_{z}^{\, n+1}\Big)
\end{multline}
Hence for the Lax matrix, we get:
\begin{multline}
    \tilde{L}(z) = \overleftarrow{\prod_{n=1}^\infty} \Big(1-  \hat{\slashed{Q}}^n  \, C_{z}^{-1} \Big) \, P \overrightarrow{\prod_{n=1}^\infty} \Big(1-  \hat{\slashed{Q}}^n  \, C_{z}^{-1} \Big)^{-1} +\\
    + \sum_{k=1}^\infty \overleftarrow{\prod_{n=k+1}^\infty} \Big(1-  \hat{\slashed{Q}}^n  \, C_{z}^{-1} \Big) \hat{\slashed{Q}}^k \Big(m C_{z}^{-1} + [C_{z}^{-1}, P]  \Big) \overrightarrow{\prod_{n=k}^{\infty}} \Big(1-  \hat{\slashed{Q}}^n  \, C_{z}^{-1} \Big)^{-1}  - \\ 
    -m \sum_{k=0}^\infty \overleftarrow{\prod_{n=1}^\infty} \Big(1-  \hat{\slashed{Q}}^n  \, C_{z}^{-1} \Big) \cdot \overrightarrow{\prod_{n=0}^{k-1}} \Big(1 -   (\hat{\slashed{Q}} C_{z}^{-1})^n C_{z}^{\, n+1}\Big) \cdot (\hat{\slashed{Q}} C_{z}^{-1})^k C_{z}^{\, k+1} \cdot \\ \cdot \overleftarrow{\prod_{n=0}^{k}} \Big(1 -   (\hat{\slashed{Q}} C_{z}^{-1})^n C_{z}^{\, n+1}\Big)^{-1} \overrightarrow{\prod_{n=1}^\infty} \Big(1-  \hat{\slashed{Q}}^n  \, C_{z}^{-1} \Big)^{-1}
\end{multline}
Using the equality:
\begin{multline}
\overleftarrow{\prod_{n=1}^\infty} \Big(1-  \hat{\slashed{Q}}^n  \, C_{z}^{-1} \Big) \, P \overrightarrow{\prod_{n=1}^\infty} \Big(1-  \hat{\slashed{Q}}^n  \, C_{z}^{-1} \Big)^{-1} = \\ = P -
\sum_{k=1}^\infty \overleftarrow{\prod_{n=k+1}^\infty} \Big(1-  \hat{\slashed{Q}}^n  \, C_{z}^{-1} \Big) \hat{\slashed{Q}}^k \, [C_{z}^{-1}, P] \, \overrightarrow{\prod_{n=k}^{\infty}} \Big(1-  \hat{\slashed{Q}}^n  \, C_{z}^{-1} \Big)^{-1}
\end{multline}
one arrives at the answer:
\begin{multline} \label{LaxFormula}
    \tilde{L}(z) = P 
    + m \sum_{k=1}^\infty \overleftarrow{\prod_{n=k+1}^\infty} \Big(1-  \hat{\slashed{Q}}^n  \, C_{z}^{-1} \Big) \, \hat{\slashed{Q}}^k  C_{z}^{-1}  \, \overrightarrow{\prod_{n=k}^{\infty}} \Big(1-  \hat{\slashed{Q}}^n  \, C_{z}^{-1} \Big)^{-1} - \\ 
    -m \sum_{k=0}^\infty \overleftarrow{\prod_{n=1}^\infty} \Big(1-  \hat{\slashed{Q}}^n  \, C_{z}^{-1} \Big) \cdot \overrightarrow{\prod_{n=0}^{k-1}} \Big(1 -   (\hat{\slashed{Q}} C_{z}^{-1})^n C_{z}^{\, n+1}\Big) \cdot (\hat{\slashed{Q}} C_{z}^{-1})^k C_{z}^{\, k+1} \cdot \\ \cdot \overleftarrow{\prod_{n=0}^{k}} \Big(1 -   (\hat{\slashed{Q}} C_{z}^{-1})^n C_{z}^{\, n+1}\Big)^{-1} \overrightarrow{\prod_{n=1}^\infty} \Big(1-  \hat{\slashed{Q}}^n  \, C_{z}^{-1} \Big)^{-1}
\end{multline}
The Lax matrix of the Calogero-Moser system is uniquely determined by its quasi-periodicity properties and the residue at the point $z=1$. The quasiperiodicity was already verified, and it matches the standard ones. Hence we are only left to compute the residue.
The only contribution to this residue comes from the factor with $k=0, \, n=0$:
\begin{equation}
C_z (1-  C_{z})^{-1}   = -\frac{1}{1-z} \sum_{\omega \in \mathbb{Z}_N} C_{z}^{\omega} 
\end{equation}
Its residue is thus equal to:
\begin{equation}
    -\underset{z=1}{\text{res}} (1-  C_{z}^{-1})^{-1} =  e \otimes e^t
\end{equation}
Therefore the Lax matrix residue is equal to:
\begin{equation} \label{Laxresidue}
\underset{z=1}{\text{res}} \tilde{L}(z)  = -m  \overleftarrow{\prod_{n=1}^\infty} \Big(1-  \hat{\slashed{Q}}^n  \, C_{1}^{-1} \Big) \, e \otimes e^t \overrightarrow{\prod_{n=1}^\infty} \Big(1-  \hat{\slashed{Q}}^n  \, C_{1}^{-1} \Big)^{-1}    
\end{equation}
Let us denote:
\begin{gather}
    u = \overleftarrow{\prod_{n=1}^\infty} \Big(1-  \hat{\slashed{Q}}^n  \, C_{1}^{-1} \Big) \, e \\
    v^t = e^t \overrightarrow{\prod_{n=1}^\infty} \Big(1-  \hat{\slashed{Q}}^n  \, C_{1}^{-1} \Big)^{-1} 
\end{gather}
From the off-diagonal part of the Lax matrix transformation property (\ref{LaxTransform}) and from the residue calculation above it follows that it could be written in the following form:
\begin{equation}
\tilde{L}_{ij}(z) =  - m \, u_i v_j \frac{\theta_{\mathfrak{q}}'(1) \theta_{\mathfrak{q}}(z x_i / x_j)}{\theta_{\mathfrak{q}}(z) \theta_{\mathfrak{q}}( x_i / x_j)}  \quad \text{for} \,\, i \neq j 
\end{equation}
In order for the residue at $z=1$ to match (\ref{Laxresidue}) the diagonal part of $\tilde{L}$ should include the term $-m u_i v_i E_1(z) \delta_{ij}$. However from the diagonal part of the transformation property (\ref{LaxTransform}) and quasi-periodicity of $E_1(z)$ ($E_1( \mathfrak{q} z) = E_1(z) -1$) it follows that $v_i = u_i^{-1}$. As from the explicit expression (\ref{LaxFormula}) we know that the constant in $z$ diagonal part of $\tilde{L}(z)$ is $P$ we arrive at the final result:
\begin{equation}
\tilde{L}_{ij}(z) = \big(p_i - m E_1(z)\big) \delta_{ij} - m (1 - \delta_{ij}) \, u_i u_j^{-1} \frac{\theta_{\mathfrak{q}}'(1) \theta_{\mathfrak{q}}(z x_i / x_j)}{\theta_{\mathfrak{q}}(z) \theta_{\mathfrak{q}}( x_i / x_j)}   
\end{equation}
\\ 
To get to the standard value of the residue, we thus need to conjugate the Lax operator by the diagonal matrix $U$, which is determined by the requirement:
\begin{equation}
\overleftarrow{\prod_{n=1}^\infty} \Big(1-  \hat{\slashed{Q}}^n  \, C_{1}^{-1} \Big) \, e = U \,e     
\end{equation}
It takes the form:
\begin{equation}
    U_{\omega \, \omega'} = u_\omega \delta_{\omega \, \omega'} =  \delta_{\omega \, \omega'} \sum_{s=0}^{\infty}(-1)^s \sum_{0< n_1 <...< n_s} \, \prod_{k=0}^{s-1} \mathfrak{q}_{\omega-k}^{n_{s-k}}
\end{equation}
So the Lax matrix of the Calogero-Moser system in the standard gauge is equal to:
\begin{equation}
    L(z) = -U^{-1} \, \mathfrak{D}_0(z) \, \mathfrak{D}_1(z)^{-1} U
\end{equation} 
and has the following expression:
\begin{equation}
L_{ij}(z) = \big(p_i - m E_1(z)\big) \delta_{ij} - m (1 - \delta_{ij}) \, \frac{\theta_{\mathfrak{q}}'(1) \theta_{\mathfrak{q}}(z x_i / x_j)}{\theta_{\mathfrak{q}}(z) \theta_{\mathfrak{q}}( x_i / x_j)}   
\end{equation}
\end{proof}

\section{Proof of the expression for the elliptic Ruijsenaars-Schneider Lax matrix}
\begin{proof}[Proof of the theorem \ref{RSLaxthm}]
From the transformation properties for $\mathfrak{D}_0(z)$ and $\mathfrak{D}_\infty(z)$:
\begin{equation}
    X \mathfrak{D}_\infty(\mathfrak{q} z) = - e^{\beta m} \mathfrak{D}_\infty (z) X C_{\mathfrak{q} z} 
\end{equation}
we immediately see, that the Lax matrix:
\begin{equation} \label{RSLaxPeriodicity}
    \tilde{L}^{RS}(z) = \mathfrak{D}_\infty(z) \mathfrak{D}_1^{-1}(z)
\end{equation}
have the needed quasi-periodicity:
\begin{equation}
    X \tilde{L}^{RS}(\mathfrak{q} z) X^{-1} = e^{\beta m} \tilde{L}^{RS}(z) 
\end{equation}
Now, let us find an explicit formula for it. We have:
\begin{equation}
    \frac{\hat{Y}(x-(n+1)m)}{\hat{Y}(x-n m)} = 1 - e^{- \beta x} e^{\beta P} e^{\beta n m} (e^{\beta m} - 1) + O(e^{-2 \beta x})
\end{equation}
\begin{multline}
    \hat{Y}(x+nm ) \, C_{z}^{-1} \hat{Y}(x+(n-1)m )^{-1} = \\ = C_z^{-1} - e^{- \beta x} e^{- \beta n m} (e^{\beta P} C_z^{-1} - e^{\beta m} C_z^{-1} e^{\beta P}) + O(e^{-2 \beta x})
\end{multline}
Hence for $\mathfrak{D}_\infty (z)$ we will have the following expression:
\begin{multline}
 \mathfrak{D}_\infty(z) = \overleftarrow{\prod_{n=1}^\infty} \Big(1-  \hat{\slashed{Q}}^n  \, C_{z}^{-1} \Big) 
\, e^{\beta P}\, \overrightarrow{\prod_{n=0}^\infty} \Big(1 -   (\hat{\slashed{Q}} C_{z}^{-1})^n C_{z}^{\, n+1}\Big)  - \\
- \sum_{k=1}^\infty \overleftarrow{\prod_{n=k+1}^\infty} \Big(1-  \hat{\slashed{Q}}^n  \, C_{z}^{-1} \Big) \hat{\slashed{Q}}^k e^{- \beta m k} \Big(e^{\beta P} C_{z}^{-1} - e^{\beta m} C_{z}^{-1} e^{\beta P} \Big) \overleftarrow{\prod_{n=1}^{k-1}} \Big(1-  \hat{\slashed{Q}}^n  \, C_{z}^{-1} \Big) \cdot \overrightarrow{\prod_{n=0}^\infty} \Big(1 -   (\hat{\slashed{Q}} C_{z}^{-1})^n C_{z}^{\, n+1}\Big) - \\-
 (e^{\beta m} - 1) \sum_{k=0}^\infty \overleftarrow{\prod_{n=1}^\infty} \Big(1-  \hat{\slashed{Q}}^n  \, C_{z}^{-1} \Big) \cdot \overrightarrow{\prod_{n=0}^{k-1}} \Big(1 -   (\hat{\slashed{Q}} C_{z}^{-1})^n C_{z}^{\, n+1}\Big) \cdot \\ \cdot e^{\beta (P+ m k)} (\hat{\slashed{Q}} C_{z}^{-1})^k C_{z}^{\, k+1} \cdot \overrightarrow{\prod_{n=k+1}^{\infty}} \Big(1 -   (\hat{\slashed{Q}} C_{z}^{-1})^n C_{z}^{\, n+1}\Big)
\end{multline}
By calculations analogous to the ones done in the 4d case  for the Lax matrix we will have the result:
\begin{multline} \label{LaxRSFormula}
    \tilde{L}^{RS}(z) = e^{\beta P} 
    + \\ + \sum_{k=1}^\infty \overleftarrow{\prod_{n=k+1}^\infty} \Big(1-  \hat{\slashed{Q}}^n  \, C_{z}^{-1} \Big) \, \hat{\slashed{Q}}^k  \Big[e^{\beta P} C_z^{-1} (1- e^{- \beta m k}) - (1-e^{- \beta m k + \beta m}) C_z^{-1} e^{\beta P} \Big] \cdot \overrightarrow{\prod_{n=k}^{\infty}} \Big(1-  \hat{\slashed{Q}}^n  \, C_{z}^{-1} \Big)^{-1} - \\ 
    - (e^{\beta m} -1) \sum_{k=0}^\infty \overleftarrow{\prod_{n=1}^\infty} \Big(1-  \hat{\slashed{Q}}^n  \, C_{z}^{-1} \Big) \cdot \overrightarrow{\prod_{n=0}^{k-1}} \Big(1 -   (\hat{\slashed{Q}} C_{z}^{-1})^n C_{z}^{\, n+1}\Big) \cdot e^{\beta P} e^{\beta m k} (\hat{\slashed{Q}} C_{z}^{-1})^k C_{z}^{\, k+1} \cdot \\ \cdot \overleftarrow{\prod_{n=0}^{k}} \Big(1 -   (\hat{\slashed{Q}} C_{z}^{-1})^n C_{z}^{\, n+1}\Big)^{-1} \overrightarrow{\prod_{n=1}^\infty} \Big(1-  \hat{\slashed{Q}}^n  \, C_{z}^{-1} \Big)^{-1}
\end{multline}
Its residue at $z =1$ equals:
\begin{equation} \label{LaxRSresidue}
\underset{z=1}{\text{res}} 
\tilde{L}^{RS}(z)  = (1-e^{\beta m})  \overleftarrow{\prod_{n=1}^\infty} \Big(1-  \hat{\slashed{Q}}^n  \, C_{1}^{-1} \Big) \, e^{\beta P} e \otimes e^t \overrightarrow{\prod_{n=1}^\infty} \Big(1-  \hat{\slashed{Q}}^n  \, C_{1}^{-1} \Big)^{-1}    
\end{equation}
Or, if we denote:
\begin{gather}
    u^{RS} = u^{RS}(\slashed{Q},P) =  \overleftarrow{\prod_{n=1}^\infty} \Big(1-  \hat{\slashed{Q}}^n  \, C_{1}^{-1} \Big) \,e^{\beta P} \, \cdot e \\
    v^t = e^t \overrightarrow{\prod_{n=1}^\infty} \Big(1-  \hat{\slashed{Q}}^n  \, C_{1}^{-1} \Big)^{-1} 
\end{gather}
then
\begin{equation}
\underset{z=1}{\text{res}} 
\tilde{L}^{RS}(z)  = (1-e^{\beta m}) \,  u^{RS} \otimes v
\end{equation}
Similar to the 4d case we could express the answer for $L^{RS}$ in terms of elliptic functions.
Indeed, from Lax quasi-periodicity property (\ref{RSLaxPeriodicity}) we conclude that
\begin{equation}
    \tilde{L}^{RS}_{ij}(z) = c_{ij} \frac{\theta_{\mathfrak{q}}'(1) \theta_{\mathfrak{q}}(e^{-\beta m} z x_i /x_j)}{\theta_{\mathfrak{q}}(z) \theta_{\mathfrak{q}}(e^{-\beta m} x_i /x_j)}
\end{equation}
where $c_{ij}$ depend on $\vec{x}$ and $\vec{p}$.\\
By comparison of the residues at $z = 1$ one gets:
\begin{equation}
    c_{ij} = (1-e^{\beta m}) u^{RS}_i v_j
\end{equation}
Hence, after conjugation by $U$ we arrive at the final formula:
\begin{equation}
    L^{RS}_{ij}(z) = (1-e^{\beta m}) u^{RS}_i u_i^{-1} \frac{\theta_{\mathfrak{q}}'(1) \theta_{\mathfrak{q}}(e^{-\beta m} z x_i /x_j)}{\theta_{\mathfrak{q}}(z) \theta_{\mathfrak{q}}(e^{-\beta m} x_i /x_j)}
\end{equation}
\end{proof}

\section{Proof of the main theorem in the Trigonometric Limit section}

 \begin{proof}[Proof of the theorem \ref{Trigthm}]
    Let us start by writing down the factors in the product (\ref{factorizationCl}) explicitly.\\
    First of all:
    \begin{equation}
        \hat{\slashed{Q}}^n C_{z}^{-1} = \begin{pmatrix}
            0 & 0 & 0 & ... & \mathfrak{q}_0^n z \\
            \mathfrak{q}_1^n & 0 & 0 & ... & 0 \\
            0 & \mathfrak{q}_2^n & 0 & ... & 0 \\
            \vdots & \vdots & \ddots & \ddots & \vdots \\
             0 & 0 & ... & \mathfrak{q}_{N-1}^n  & 0
        \end{pmatrix} 
    \end{equation}
    Hence:
\begin{multline}
    1-  \hat{\slashed{Q}}^n \hat{Y}(x+nm ) \, C_{z}^{-1} \hat{Y}(x+(n-1)m )^{-1}  =  \\ = \begin{pmatrix}
            1 & 0 & 0 & ... & -\mathfrak{q}_0^n z \frac{Y_{1,n}}{Y_{0,n-1}} \\
            -\mathfrak{q}_1^n \frac{Y_{2,n}}{Y_{1,n-1}} & 1 & 0 & ... & 0 \\
            0 & -\mathfrak{q}_2^n \frac{Y_{3,n}}{Y_{2,n-1}}  & 1 & ... & 0 \\
            \vdots & \vdots & \ddots & \ddots & \vdots \\
             0 & 0 & ... & -\mathfrak{q}_{N-1}^n \frac{Y_{0,n}}{Y_{N-1,n-1}}  & 1
        \end{pmatrix} 
\end{multline}
Therefore in the limit $\mathfrak{q} \rightarrow 0$, and hence $\mathfrak{q}_0 \rightarrow 0$, the matrix becomes uni-lower-triangular and $z$-independent.\\
Similarly one can obtain:
\begin{multline}
1 -   (\hat{\slashed{Q}} C_{z}^{-1})^n \frac{\hat{Y}(x-(n+1)m)}{\hat{Y}(x-n m)} C_{z}^{\, n+1} = \\
= \begin{pmatrix} 1 & -\delta_{N-1} & 0 &...& 0\\ 0 & 1 & -\delta_{N-2} &...& 0 \\ \vdots & \vdots & \ddots & \ddots & \vdots \\ 0 & 0 & 0 & ... & -\delta_1 \\ -\delta_0 & 0 & 0 & ... & 1 \end{pmatrix} 
\end{multline}
where:
\begin{equation}
    \delta_j = \mathfrak{q}_{N-n-j} \, ... \, \mathfrak{q}_{N-1-j} \frac{Y_{N-n-j, -n-1}}{Y_{N-n-j, -n}} \begin{cases}
        z^{-1}, \qquad j =0 \\
        1, \qquad 0 < j < N-n \\
        z, \qquad  N-n \leq j \leq N-1
    \end{cases}
\end{equation}
From this result, it follows, that as soon as $\mathfrak{q}_0 = 0$, the limit $z \rightarrow \infty$ of this matrix is well-defined and uni-upper-triangular. \\
We conclude from these two calculations that the matrix $\mathfrak{D}(x)^{\text{trig}}$ is also well defined and obtained from the diagonal matrix $\hat{Y}$ by the action of the uni-lower-triangular matrix from the left and uni-upper-triangular matrix from the right. This action preserves the principal minors. Hence part 3) of the theorem is proven. In particular, we have
\begin{equation}
    \det \mathfrak{D}(x)^{\text{trig}} = Y(x) = \chi(x).
\end{equation}
Together with the formulas (\ref{factorizationClSc}) and (\ref{det}), we obtain part 2) from here, as well. 
 \end{proof}

\section{Proof of the formula for the Lax eigenvector}

\textbf{Remark:} In our next paper we will give a much simpler proof of this theorem, based on general results of \cite{ncJac}.

\subsection{Warm up: SL(2) Example (first terms)}
Let us now find the equation for the vector:
\begin{equation}
    \begin{pmatrix} \chi_{24, 0}(x) \\ \chi_{24, 1}(x)
    \end{pmatrix}
\end{equation}
Put $\nu = \varnothing$ first, that is looking at the term $\mathfrak{q}_0^0 \mathfrak{q}_1^0$. One has:
\begin{gather}
    \chi_{24,0}^{(\varnothing)}(x) = \frac{\tilde{Q}_0(x+2 \epsilon_2) \tilde{Q}_1(x+ \epsilon_2)}{\tilde{Q}_0(x+2 \epsilon_2 +m ) \tilde{Q}_1(x+ \epsilon_2 +m)} \\
    \chi_{24,1}^{(\varnothing)}(x) = \frac{\tilde{Q}_0(x+ \epsilon_2) \tilde{Q}_1(x+ 2\epsilon_2)}{\tilde{Q}_0(x+ \epsilon_2 +m ) \tilde{Q}_1(x+ 2\epsilon_2 +m)}
\end{gather}
They satisfy the equation:
\begin{equation}
    \begin{pmatrix}
        Y_1(x + \epsilon_2 +m) & -Y_1(x + \epsilon_2) e^{\epsilon_2 \partial_x} \\
        -Y_0(x + \ve_2) e^{\epsilon_2 \partial_x} &  Y_0(x + \epsilon_2 +m) \\
    \end{pmatrix} 
    \begin{pmatrix} \chi_{24, 0}^{(\varnothing)}(x) \\ \chi_{24, 1}^{(\varnothing)}(x)
    \end{pmatrix} = 0
\end{equation}
Now let us find the first order in $\mathfrak{q}_1$ correction to it. For $qq$-Characters we have:
\begin{gather}
    \chi_0(x) = Y_1(x+\epsilon_2) \\
    \chi_1(x) = Y_0(x+\epsilon_2) + \mathfrak{q}_1 \frac{Y_0(x+m+\epsilon_2) Y_1(x-m)}{Y_1(x)} + O(\mathfrak{q}_1^2)
\end{gather}
And for the folded instanton partition functions:
\begin{gather}
\chi_{24,0}(x) = \chi_{24,0}^{(\varnothing)}(x) + O(\mathfrak{q}_1^2) \\
\chi_{24,1}(x) = \chi_{24,1}^{(\varnothing)}(x) +\mathfrak{q}_1 \chi_{24,1}^{(\square)}(x) + O(\mathfrak{q}_1^2)  
\end{gather}
where:
\begin{equation}
    \chi_{24,1}^{(\square)}(x) = {\bE} \Big[\frac{-q_1 q_2 P_3}{\tilde{P}_2}\big( - q_2 q_4 S_{12,0}^* + S_{12,1}^*(q_2+q_4) \big) - \frac{q_1 q_2^2 P_3}{\tilde{P}_2} \big( - q_2 q_4 S_{12,1}^* + S_{12,0}^*(q_2+q_4) \big)  \Big]
\end{equation}
Or in the multiplicative form:
\begin{equation}
\chi_{24,1}^{(\square)}(x) =\frac{\tilde{Q}_1(x+2 \epsilon_2) \tilde{Q}_1(x-m) \tilde{Q}_1(x+2 \epsilon_2) \tilde{Q}_0(x+3 \epsilon_2)}{\tilde{Q}_1(x) \tilde{Q}_1(x+2 \epsilon_2+m) \tilde{Q}_1(x+2 \epsilon_2-m) \tilde{Q}_1(x+3 \epsilon_2+m)}    
\end{equation}
Thanks to easy verifiable equations:
\begin{gather}
    Y_1(x+ \epsilon_2) \chi_{24,1}^{(\square)}(x+ \epsilon_2) = Y_1(x+ \epsilon_2 - m) \chi_{24,0}^{(\varnothing)}(x+ 2\epsilon_2) \\
    \frac{Y_0(x+\epsilon_2 +m) Y_1(x-m)}{Y_1(x)} \chi_{24,0}^{(\varnothing)}(x + \ve_2) = Y_0(x+ \epsilon_2 + m) \chi_{24,1}^{(\square)}(x) \\
    \frac{Y_0(x+\epsilon_2 +2m) Y_1(x)}{Y_1(x+m)} \chi_{24,1}^{(\varnothing)}(x) = Y_0(x + 2m + \epsilon_2) \chi_{24,0}^{(\varnothing)}(x- \epsilon_2)
\end{gather}
one obtains:
\begin{equation}
    \begin{pmatrix}
        \chi_0(x +m) + \mathfrak{q}_1 \chi_0(x-m) e^{2 \epsilon_2 \partial_x} & -\chi_0(x) e^{\epsilon_2 \partial_x} \\
        -\chi_1(x) e^{\epsilon_2 \partial_x} - \mathfrak{q}_1 \chi_1(x+2m) e^{-\epsilon_2 \partial_x}  &  \chi_1(x + m) \\
    \end{pmatrix} 
    \begin{pmatrix} \chi_{24, 0}^{(\varnothing)}(x) \\ \chi_{24, 1}^{(\varnothing)}(x) + \mathfrak{q}_1 \chi_{24, 1}^{(\square)}(x)
    \end{pmatrix} = 0
\end{equation}
Shifting $x$ by $-m$ gives us:
\begin{equation}
    \begin{pmatrix}
        \chi_0(x) + \mathfrak{q}_1 \chi_0(x-2m) e^{2 \epsilon_2 \partial_x} & -\chi_0(x-m) e^{\epsilon_2 \partial_x} \\
        -\chi_1(x-m) e^{\epsilon_2 \partial_x} - \mathfrak{q}_1 \chi_1(x+m) e^{-\epsilon_2 \partial_x}  &  \chi_1(x) \\
    \end{pmatrix} 
    \begin{pmatrix}
    \psi_0(x) \\
    \psi_1(x)
\end{pmatrix} = 0
\end{equation}
where
\begin{equation}
\begin{pmatrix}
    \psi_0(x) \\
    \psi_1(x)
\end{pmatrix}  
= \begin{pmatrix} \chi_{24, 0}^{(\varnothing)}(x-m) \\ \chi_{24, 1}^{(\varnothing)}(x-m) + \mathfrak{q}_1 \chi_{24, 1}^{(\square)}(x-m)
    \end{pmatrix}
\end{equation}
which agrees with our expectations.

\subsection{Warm up: General case first terms}
The potential eigenvector should have the form:
\begin{equation}
  \vec{\chi}_{24}(x)  = \begin{pmatrix} \chi_{24, 0}(x) \\ \chi_{24, 1}(x) \\
    \vdots \\
    \chi_{24, N-1}(x)
    \end{pmatrix}
\end{equation}
Similarly to the $SL(2)$-case let us start with the case of a trivial diagram:
\begin{equation}
  \vec{\chi}_{24}^{(\varnothing)}(x)  = \begin{pmatrix} \chi_{24, 0}^{(\varnothing)}(x) \\ \chi_{24, 1}^{(\varnothing)}(x) \\
    \vdots \\
    \chi_{24, N-1}^{(\varnothing)}(x)
    \end{pmatrix}
\end{equation}
Where:
\begin{equation}
\chi_{24, \omega}^{(\varnothing)} (x) =  {\bE} \Big[ -\frac{P_3}{1 - q_2^N} \sum_{\substack{a,c = \,0,...,N-1\\ a - c = - \omega \, \text{mod} N}} q_2^{a+1} S_{12, \, c+1}^* \Big]     
\end{equation}
Or, more explicitly: 
\begin{equation}
\chi_{24, \omega}^{(\varnothing)} (x) =  {\bE} \Big[ -\frac{P_3}{1 - q_2^N} \sum_{c = 0}^\omega q_2^{c- \omega + N} S_{12, \, c}^*  -\frac{P_3}{1 - q_2^N} \sum_{c = \omega+1}^{N-1} q_2^{c- \omega} S_{12, \, c}^*  \Big]     
\end{equation}
In the multiplicative form:
\begin{equation}
\chi_{24, \omega}^{(\varnothing)} (x) = \prod_{c = 0}^\omega \frac{\tilde{Q}_c\big(x + (c- \omega + N) \epsilon_2\big)}{\tilde{Q}_c\big(x + m + (c- \omega + N) \epsilon_2\big)}  \prod_{c = \omega+1}^{N-1} \frac{\tilde{Q}_c\big(x + (c- \omega ) \epsilon_2\big)}{\tilde{Q}_c\big(x + m + (c- \omega) \epsilon_2\big)}  
\end{equation}
From this definition, it follows that:
\begin{equation}
\chi_{24, \omega}^{(\varnothing)} (x + \epsilon_2) = \frac{Y_\omega(x+m + \epsilon_2)}{Y_\omega(x + \epsilon_2)} \chi_{24, \omega-1}^{(\varnothing)} (x) 
\end{equation}
In the matrix form, this equation could be rewritten as:
\begin{equation}
    \hat{Y}(x+ \epsilon_2) \vec{\chi}_{24}^{(\varnothing)}(x-m) - \hat{Y}(x+ \epsilon_2 - m) \, C \, e^{\epsilon_2 \partial_x}\vec{\chi}_{24}^{(\varnothing)}(x-m) = 0
\end{equation}
where:
\begin{equation}
    \hat{Y}(x) = \text{diag} \big( Y_{\omega+1}(x) \big)_{\omega = 0 }^{\omega = N-1} 
\end{equation}
and
\begin{equation}
    C = \begin{pmatrix} 0 & 1 & 0 &...& 0\\ 0 & 0 & 1 &...& 0 \\ \vdots & \vdots & \ddots & \ddots & \vdots \\ 0 & 0 & 0 & ... & 1 \\ 1 & 0 & 0 & ... & 0\end{pmatrix} = 
    \sum_{\omega} e_\omega \otimes e_{\omega+1}^t
\end{equation}
Which is equal to $\vec{\mathfrak{q}} = 0$ term of the general formula from the theorem \ref{Eigenvectorthm}.

\subsection{Proof of the general case}

\begin{proof}[Proof of theorem \ref{Eigenvectorthm}]
Similarly to the non-orbifold case, we could single out the part of the character of our main interest:
\begin{equation}
    P_3 \sum_{\omega'} \sum_{a=1}^N q_2^{a+n} q_3^{-1} S^*_{12, \omega+a+n+\omega'} S_{24, \omega'} + q_2 \tilde{P}_2 \sum_{\omega'} q_3^{-n} S^*_{12, \omega + \omega' + 1} S_{34, \omega'}
\end{equation}
Let us consider the 24-part first:
\begin{equation}
   \textnormal{Ch}^{24}_{\omega+n}(x+\epsilon_2 n) = P_3 \sum_{\omega'} \sum_{a=1}^N q_2^{a+n} q_3^{-1} S^*_{12, \omega+a+n+\omega'} S_{24, \omega'}
\end{equation}
We will use the following expression for $S_{24}$:
\begin{equation}
    S_{24, \omega} |_\mu = \sum_{ \substack{(i,j) \in \partial_+ \mu \\ i-j = \omega \text{mod} N} }  q_2^{i-1} q_4^{j-1} - \sum_{ \substack{(i,j) \in \partial_- \mu \\ i-j = \omega \text{mod} N} }  q_2^{i} q_4^{j}
\end{equation}
Let us define:
\begin{equation}
    \tilde{\mu}_{i} = \mu_{i+1}, \qquad i = 1,..,\mu_1^t \\
\end{equation}
Then, one of the 2 options is possible:
\begin{gather}
    1) \quad \mu_2 < \mu_1 \\
    2) \quad \mu_2 = \mu_1
\end{gather}
In scenario 1), the diagram $\mu$ has one addable and one removal box in the first row, and they differ by the shift in $\epsilon_4$, meaning, that the removable box shifted by $\epsilon_4 + \epsilon_2$ now differ from the addable one by $\epsilon_2$. Hence, as we have:
\begin{gather}
    \partial_+ \mu = (\epsilon_2 + \partial_+ \tilde{\mu}) \cup \{(1, \mu_1 +1)\} \\
    \partial_- \mu = (\epsilon_2 + \partial_- \tilde{\mu}) \cup \{(1, \mu_1)\}
\end{gather}
the character:
\begin{multline}
   \textnormal{Ch}^{24}_{\omega+n}(x+\epsilon_2 n) |_{\mu} = P_3 \sum_{a=1}^N \sum_{(i,j) \in \partial_+ \mu} q_2^{a+n+i-1} q_4^{j-1} q_3^{-1}  S^*_{12, \omega+a+n+i-j} - \\ - P_3 \sum_{a=1}^N \sum_{(i,j) \in \partial_- \mu} q_2^{a+n+i} q_4^{j} q_3^{-1}  S^*_{12, \omega+a+n+i-j}
\end{multline}
could be rewritten as
\begin{multline}
P_3 \sum_{a=1}^N q_2^{a+n} q_4^{\mu_1}q_3^{-1}  S_{12, \omega+a+n - \mu_1^t}^* - P_3 \sum_{a=1}^N q_2^{a+n+1} q_4^{\mu_1}q_3^{-1}  S_{12, \omega+a+n - \mu_1^t +1}^* + \\
+ P_3 \sum_{a=1}^N \sum_{(i,j) \in \epsilon_2 +\partial_+ \tilde{\mu} } q_2^{a+n+i-1} q_4^{j-1} q_3^{-1}  S^*_{12, \omega+a+n+i-j} - \\ - P_3 \sum_{a=1}^N \sum_{(i,j) \in \epsilon_2 +\partial_- \tilde{\mu}} q_2^{a+n+i} q_4^{j} q_3^{-1}  S^*_{12, \omega+a+n+i-j}    
\end{multline}
this is the same as:
\begin{equation}
    P_3 \Tilde{P_2} q_2^{n+1}q_4^{\mu_1^t}q_3^{-1} S^*_{12, \omega+n+1- \mu_1} + \textnormal{Ch}^{24}_{\omega+n+1}(x+\epsilon_2 (n+1)) |_{\tilde{\mu}}
\end{equation}
In scenario 2), the diagrams $\mu$ and $\tilde{\mu}$ have the same number of addable and removable boxes. But boxes located outside of the first column got shifted by one to the left. Hence:
\begin{gather}
\partial_+ \mu \backslash \{(1, \mu_1 +1)\} = \epsilon_2 + \partial_+ \tilde{\mu} \backslash \{(1, \mu_1 +1)\}    \\
\partial_- \mu  = \epsilon_2 + \partial_- \tilde{\mu} 
\end{gather}
Therefore for the character we have:
\begin{multline}
P_3 \sum_{a=1}^N q_2^{a+n} q_4^{\mu_1}q_3^{-1}  S_{12, \omega+a+n - \mu_1}^* + \\ + P_3 \sum_{a=1}^N \sum_{(i,j) \in \epsilon_2 +\partial_+ \tilde{\mu} \backslash \{(1, \mu_1 +1)\} } q_2^{a+n+i-1} q_4^{j-1} q_3^{-1}  S^*_{12, \omega+a+n+i-j} - \\ - P_3 \sum_{a=1}^N \sum_{(i,j) \in \epsilon_2 +\partial_- \tilde{\mu}} q_2^{a+n+i} q_4^{j} q_3^{-1}  S^*_{12, \omega+a+n+i-j}     
\end{multline}
And this is equal to the same expression:
\begin{equation}
    P_3 \Tilde{P_2} q_2^{n+1}q_4^{\mu_1}q_3^{-1} S^*_{12, \omega+n+1- \mu_1} + \textnormal{Ch}^{24}_{\omega+n+1}(x+\epsilon_2 (n+1))  |_{\tilde{\mu}}
\end{equation}
Now let us have a look at the $34$ part of the character:
\begin{equation}
\textnormal{Ch}_\omega^{34}(x- \epsilon_3 n) =  q_2 \tilde{P}_2 \sum_{\omega'} q_3^{-n} S^*_{12, \omega + \omega' + 1} S_{34, \omega'}  
\end{equation}
We could write it down as:
\begin{equation}
\textnormal{Ch}_\omega^{34}(x- \epsilon_3 n) |_{\tilde{\nu}} = q_2 \tilde{P}_2 \sum_{(i,j) \in \partial_+ \tilde{\nu}} \ S^*_{12, \omega + 2- j} q_3^{i-1-n} q_4^{j-1}   - q_2 \tilde{P}_2 \sum_{(i,j) \in \partial_- \tilde{\nu}}  S^*_{12, \omega + 1- j} q_3^{i-n} q_4^{j} 
\end{equation}
Now let us try to separate the contribution from the first row. Introduce the diagram $\nu$:
\begin{equation}
    \nu_i = \tilde{\nu}_{i+1}
\end{equation}
Here we need to consider two cases as well.  The first one $\tilde{\nu}_1 > \tilde{\nu}_2$. Then we have one addable and one removable box in the first row, and the character could be written as:
\begin{multline}
\textnormal{Ch}_\omega^{34}(x- \epsilon_3 n) |_{\tilde{\nu}} =  q_2 \tilde{P}_2 S^*_{12, \omega + 1- \tilde{\nu}_1} q_3^{-n} q_4^{\tilde{\nu}_1} - q_2 \tilde{P}_2 S^*_{12, \omega + 1- \tilde{\nu}_1} q_3^{1-n} q_4^{\tilde{\nu}_1} + \\+ q_2 \tilde{P}_2 \sum_{(i,j) \in \partial_+ \nu} \ S^*_{12, \omega + 2- j} q_3^{i-n} q_4^{j-1}   - q_2 \tilde{P}_2 \sum_{(i,j) \in \partial_-\nu}  S^*_{12, \omega + 1- j} q_3^{i+1-n} q_4^{j}     
\end{multline}

which is:
\begin{equation}
\textnormal{Ch}_\omega^{34}(x- \epsilon_3 n) |_{\tilde{\nu}} =  q_2 \tilde{P}_2 P_3 S^*_{12, \omega + 1- \tilde{\nu}_1} q_3^{-n} q_4^{\tilde{\nu}_1} + \textnormal{Ch}_\omega^{34}(x- \epsilon_3 (n-1)) |_{\nu}   
\end{equation}
Or the other way around:
\begin{equation}
\textnormal{Ch}_\omega^{34}(x- \epsilon_3 n) |_{\nu}   = - q_2 \tilde{P}_2 P_3 S^*_{12, \omega + 1- \tilde{\nu}_1} q_3^{-n-1} q_4^{\tilde{\nu}_1} + \textnormal{Ch}_\omega^{34}(x- \epsilon_3 (n+1))   |_{\tilde{\nu}}
\end{equation}
Now consider the case $\tilde{\nu}_1 = \tilde{\nu}_2$. Then we have only one addable box in the first row and no removable boxes. However, the diagram $\nu$ will have one more addable box than $\tilde{\nu}$ with the first row deleted. So:
\begin{multline}
\textnormal{Ch}_\omega^{34}(x- \epsilon_3 n) |_{\tilde{\nu}} =  q_2 \tilde{P}_2 S^*_{12, \omega + 1- \tilde{\nu}_1} q_3^{-n} q_4^{\tilde{\nu}_1} + \\+ q_2 \tilde{P}_2 \sum_{(i,j) \in \partial_+ \nu \backslash \{(1, \nu_1 +1)\} } \ S^*_{12, \omega + 2- j} q_3^{i-n} q_4^{j-1}   - q_2 \tilde{P}_2 \sum_{(i,j) \in \partial_-\nu}  S^*_{12, \omega + 1- j} q_3^{i+1-n} q_4^{j}     
\end{multline}
which, however, gives the same final result:
\begin{equation}
\textnormal{Ch}_\omega^{34}(x- \epsilon_3 n) |_{\tilde{\nu}} =  q_2 \tilde{P}_2 P_3 S^*_{12, \omega + 1- \tilde{\nu}_1} q_3^{-n} q_4^{\tilde{\nu}_1} + \textnormal{Ch}_\omega^{34}(x- \epsilon_3 (n-1)) |_{\nu}   
\end{equation}
Now, if we pick $\tilde{\nu}_1 = \mu_1 - n$, thanks to equalities:
\begin{equation}
q_2^{n+1} q_4^{\mu_1} q_3^{-1} = q_2^{n+2} q_4^{\mu+1}   = q_2^{n+2} q_4^{\tilde{\nu}_1+n+1} = q_2 q_3^{-n-1} q_4^{\tilde{\nu}_1}
\end{equation}
we have:
\begin{multline}
\textnormal{Ch}^{24}_{\omega+n}(x+\epsilon_2 n) |_{\mu} + \textnormal{Ch}_\omega^{34}(x- \epsilon_3 n) |_{\nu}  =  \textnormal{Ch}^{24}_{\omega+n+1}(x+\epsilon_2 (n+1)) |_{\tilde{\mu}} + \textnormal{Ch}_\omega^{34}(x- \epsilon_3 (n+1))   |_{\tilde{\nu}}
\end{multline}
This formula will make sense as soon as $\tilde{\nu}$ is actually a Young diagram, namely when $\mu_1 - n \geq \nu_1$. This question will be addressed below.\\
Now, we need to check that the prefactor matches:
\begin{equation}
    D_{\omega}^{(-n-1)} \mathbb{Q}^{(34), \tilde{\mu}}_{\omega+n+1} \mathbb{Q}_\omega^{\tilde{\nu}} = D_{\omega+n}^{(-n)}\mathbb{Q}^{(34), \mu}_{\omega+n} \mathbb{Q}_\omega^{\nu}
\end{equation}
Let us look at $\mathbb{Q}_\omega^{\tilde{\nu}} $ first:
\begin{equation}
\mathbb{Q}_\omega^{\tilde{\nu}} = \prod_{i = 1}^{l(\tilde{\nu})} \mathfrak{q}_\omega... \mathfrak{q}_{\omega - \tilde{\nu}_i + 1} =  \mathfrak{q}_\omega... \mathfrak{q}_{\omega - \tilde{\nu}_1 + 1} \,  \mathbb{Q}_\omega^{\nu} = \mathfrak{q}_\omega... \mathfrak{q}_{\omega - \mu_1 + n + 1} \,  \mathbb{Q}_\omega^{\nu}
\end{equation}
Now let us have a look at $\mathbb{Q}^{(34), \mu}_{\omega+n+1}$:
\begin{multline}
\mathbb{Q}^{(34), \mu}_{\omega+n}  = \prod_{i =1}^{l(\mu)} \prod_{j=1}^{\mu_i}  \mathfrak{q}_{\omega +i -j + n} = \prod_{j =1}^{\mu_1} \mathfrak{q}_{\omega + n +1 -j} \prod_{i =2}^{l(\mu) } \prod_{j=1}^{\mu_{i}} \mathfrak{q}_{\omega+1-j+n} = \\ =  \prod_{j =1}^{\mu_1} \mathfrak{q}_{\omega + n +1 -j} \prod_{i =1}^{l(\mu)-1 } \prod_{j=1}^{\mu_{i+1}} \mathfrak{q}_{\omega+1-j+n+1} = \prod_{j =1}^{\mu_1} \mathfrak{q}_{\omega + n +1 -j} \mathbb{Q}^{(34), \tilde{\mu}}_{\omega+n+1}
\end{multline}
Hence, we have:
\begin{equation}
     \mathbb{Q}^{(34), \tilde{\mu}}_{\omega+n+1} \mathbb{Q}_\omega^{\tilde{\nu}} = \frac{\mathfrak{q}_\omega... \mathfrak{q}_{\omega - \mu_1 + n + 1} }{\prod_{j =1}^{\mu_1} \mathfrak{q}_{\omega + n +1 -j}} \mathbb{Q}^{(34), \mu}_{\omega+n} \mathbb{Q}_\omega^{\nu}
\end{equation}
We have to consider 2 cases again, and we see that:
\begin{equation}
  \frac{\mathbb{Q}^{(34), \tilde{\mu}}_{\omega+n+1} \mathbb{Q}_\omega^{\tilde{\nu}}}{\mathbb{Q}^{(34), \mu}_{\omega+n} \mathbb{Q}_\omega^{\nu}} = 
  \begin{cases}
      \mathfrak{q}_{\omega+1}^{-1} ... \mathfrak{q}_{\omega+n}^{-1} \qquad n >0\\
      \mathfrak{q}_{\omega} ... \mathfrak{q}_{\omega+n+1} \qquad n  \leq 0
  \end{cases}
\end{equation}
Let us make sure, that this result agrees with the transformation laws of $D_{\omega}^{(-n)}$.
If $n>0$, one has:
\begin{equation}
    D_{\omega}^{(-n-1)} = \prod_{k=1}^{n} \mathfrak{q}_{\omega+k}^{n+1-k} = D_{\omega}^{(-n)} \prod_{k=1}^{n} \mathfrak{q}_{\omega+k} 
\end{equation}
And if $n \leq 0$, as one has:
\begin{equation}
D_{\omega}^{(-n)} = \prod_{k=0}^{-n-1} \mathfrak{q}_{\omega-k}^{-n-k}   
\end{equation}
we could write:
\begin{equation}
D_{\omega}^{(-n-1)} = \prod_{k=0}^{-n-2} \mathfrak{q}_{\omega-k}^{-n-k-1}  = D_{\omega}^{(-n)} \prod_{k=0}^{-n-1} \frac{1}{\mathfrak{q}_{\omega-k}}  
\end{equation}
Therefore, we got exactly what we needed
\begin{equation}
  \frac{D_{\omega}^{(-n)}}{D_{\omega}^{(-n-1)}} = 
  \begin{cases}
      \mathfrak{q}_{\omega+1}^{-1} ... \mathfrak{q}_{\omega+n}^{-1} \qquad n >0\\
      \mathfrak{q}_{\omega} ... \mathfrak{q}_{\omega+n+1} \qquad n  \leq 0
  \end{cases}
\end{equation}
To complete the full proof, we only need to notice that the transformation:
\begin{equation}
    (\mu, \nu, n) \rightarrow (\tilde{\mu}, \tilde{\nu}, \tilde{n} = n+1)
\end{equation}
 maps the triples $(\mu, \nu, n)$ with $\mu_1 - n \geq \nu_1$ to the triples $(\tilde{\mu}, \tilde{\nu}, \tilde{n})$ satisfying $\tilde{\mu}_1 - \tilde{n} < \tilde{\nu}_1$, and is invertible on its image. 
\end{proof}

\section{Proof of the spectral duality}
\begin{proof}[Proof of the theorem \ref{Spectraldual}]
    Let us look at one multiple in the middle of the LHS of the classical factorization formula (\ref{factorizationCl}):
    \begin{equation}
        \hat{Y}(x) \Big(1 - \frac{\hat{Y}(x-m)}{\hat{Y}(x)} C_z\Big)
    \end{equation}
It is easy to see that it could be rewritten as:
\begin{equation}
    - \Big(1 - \hat{Y}(x) C_z^{-1} \hat{Y}(x-m)^{-1}\Big) \hat{Y}(x-m) C_z
\end{equation}
Now let us drag the new factor appearing on the right through the next multiple in a product (\ref{factorizationCl}):
\begin{multline}
    -\hat{Y}(x-m) C_z \Big(1 - (\hat{\slashed{Q}} C_z^{-1}) \frac{\hat{Y}(x-2m)}{\hat{Y}(x-m)} C_z^2 \Big) =\\=
    \Big(1 - \hat{\slashed{Q}}^{-1}\hat{Y}(x-m) C_z^{-1} \hat{Y}(x-2m)^{-1}\Big) \hat{Y}(x-2m) C_z \hat{\slashed{Q}} C_z
\end{multline}
Repeating this procedure $n$ times we will get for the LHS of (\ref{factorizationCl}):
\begin{multline}
    \overleftarrow{\prod_{k=-n+1}^\infty} \Big(1-  \hat{\slashed{Q}}^k \hat{Y}(x+km ) \, C_{z}^{-1} \hat{Y}(x+(k-1)m )^{-1} \Big)  (-1)^n\hat{Y}(x-nm) \prod_{i=1}^n C_z \hat{\slashed{Q}}^{n-i} \cdot \\
\cdot \overrightarrow{\prod_{k=n}^\infty} \Big(1 -   (\hat{\slashed{Q}} C_{z}^{-1})^k \frac{\hat{Y}(x-(k+1)m)}{\hat{Y}(x-k m)} C_{z}^{\, k+1}\Big)
\end{multline}
Indeed, the proof is carried out by induction. First of all, after the factor $(\hat{\slashed{Q}} C_{z}^{-1})^n$ in the first next multiple of the product on the right will get multiplied by $\prod_{i=1}^n C_z \hat{\slashed{Q}}^{n-i}$, the ratio $\frac{\hat{Y}(x-(n+1)m)}{\hat{Y}(x-n m)}$ could be taken out to the left and denominator will get canceled with $\hat{Y}(x-nm)$. So, we are only left to prove that:
\begin{lemma}
 \begin{equation}
 \prod_{i=1}^n C_z \hat{\slashed{Q}}^{n-i}   (\hat{\slashed{Q}} C_{z}^{-1})^n \, C_z^{n+1} = \prod_{i=1}^{n+1} C_z \hat{\slashed{Q}}^{n+1-i}
\end{equation}   
\end{lemma}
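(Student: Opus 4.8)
The plan is to reduce the whole identity to the single commutation relation between the cyclic shift $C_{z}$ and the diagonal matrix $\hat{\slashed{Q}}=\textnormal{diag}(\mathfrak{q}_\omega)$, and then close by induction on $n$. For a diagonal matrix $D=\textnormal{diag}(d_\omega)$ introduce the shifted matrix $D^{[k]}:=\textnormal{diag}(d_{\omega+k})$. Reading off the explicit form (\ref{cyclz}) of $C_{z}$ (the corner entry $z^{-1}$ being irrelevant under conjugation of a diagonal matrix), one gets $C_{z}\,D=D^{[1]}\,C_{z}$ and $D\,C_{z}=C_{z}\,D^{[-1]}$, equivalently $C_{z}^{-1}D=D^{[-1]}C_{z}^{-1}$. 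These are the only facts about $C_z$ and $\hat{\slashed{Q}}$ that will be used.

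First I would simplify the incoming factor on the left. Pushing all the $\hat{\slashed{Q}}$'s to the left through the $C_{z}^{-1}$'s gives $(\hat{\slashed{Q}}C_{z}^{-1})^{n}=R_n\,C_{z}^{-n}$ with the diagonal matrix $R_n:=\prod_{k=0}^{n-1}\hat{\slashed{Q}}^{[-k]}=\textnormal{diag}\big(\prod_{k=0}^{n-1}\mathfrak{q}_{\omega-k}\big)$, hence $(\hat{\slashed{Q}}C_{z}^{-1})^{n}C_{z}^{n+1}=R_n\,C_{z}$. Writing $A_n:=\prod_{i=1}^{n}C_{z}\hat{\slashed{Q}}^{\,n-i}$ for the left product, and noting that the right-hand side of the lemma is $A_{n+1}=C_{z}\hat{\slashed{Q}}^{\,n}A_n$ (peel off the $i=1$ factor), the claim becomes the compact identity
\[
A_n\,R_n\,C_{z}=C_{z}\,\hat{\slashed{Q}}^{\,n}\,A_n .
\]

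I would prove this by induction on $n$, the case $n=0$ being trivial. The one structural input is that $A_n$ conjugates diagonal matrices exactly as $C_{z}^{\,n}$ does: since pushing all shifts to the right writes $A_n=D_n\,C_{z}^{\,n}$ for some diagonal $D_n$, one has $A_n\,D\,A_n^{-1}=D^{[n]}$ for every diagonal $D$. For the step, substitute $A_{n+1}=C_{z}\hat{\slashed{Q}}^{\,n}A_n$ and $R_{n+1}=R_n\hat{\slashed{Q}}^{[-n]}$, move $\hat{\slashed{Q}}^{[-n]}$ to the right of the trailing $C_{z}$ (turning it into $\hat{\slashed{Q}}^{[-n-1]}$), apply the inductive hypothesis to the block $A_nR_nC_{z}$, use $A_n\hat{\slashed{Q}}^{[-n-1]}=\hat{\slashed{Q}}^{[-1]}A_n$ from the conjugation property, and finally the relation $\hat{\slashed{Q}}^{\,n}C_{z}\hat{\slashed{Q}}^{[-1]}=\hat{\slashed{Q}}^{\,n+1}C_{z}$ (together with commutativity of diagonals) to bring the result to $C_{z}\hat{\slashed{Q}}^{\,n+1}A_{n+1}$.

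The only real source of difficulty is the bookkeeping: keeping the direction of the ordered products straight and tracking the cumulative index shifts $\omega\mapsto\omega+k$ as the $\hat{\slashed{Q}}$'s and $C_{z}$'s are commuted past one another. Once the conjugation identity $A_n D A_n^{-1}=D^{[n]}$ is isolated, everything else is a short, purely mechanical shift count, so I expect no genuine obstacle beyond this.
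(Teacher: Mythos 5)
Your proof is correct and rests on exactly the same single fact the paper's proof uses --- that commuting a diagonal matrix past $C_z$ shifts its indices --- so it is essentially the paper's argument repackaged: where the paper repeatedly spots diagonal sub-words inside the explicit product and exchanges them with adjacent powers of $\hat{\slashed{Q}}$, you normalize everything to the form $(\text{diagonal})\cdot C_z^{\,k}$ and run a clean induction via the conjugation identity $A_n D A_n^{-1}=D^{[n]}$. The bookkeeping is tidier in your version, but the underlying mechanism is identical.
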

\begin{proof}
Let us write the LHS explicitly:
\begin{equation}
    C_z\hat{\slashed{Q}}^{n-1} C_z\hat{\slashed{Q}}^{n-2}...C_z \hat{\slashed{Q}} C_z \underbrace{\hat{\slashed{Q}}C_z^{-1}\cdot...\cdot \hat{\slashed{Q}}C_z^{-1}}_n C_z^{n+1}
\end{equation}
Let us look at the middle: the matrix $C_z \hat{\slashed{Q}} C_z^{-1}$ is diagonal, so could be exchanged with $\hat{\slashed{Q}}$ to the left of it:
\begin{equation}
    C_z\hat{\slashed{Q}}^{n-1} C_z\hat{\slashed{Q}}^{n-2}...C_z \hat{\slashed{Q}}^2 C_z^2 \hat{\slashed{Q}} C_z^{-1} \hat{\slashed{Q}}^2 C_z^{-1}\underbrace{\hat{\slashed{Q}}C_z^{-1}\cdot...\cdot \hat{\slashed{Q}}C_z^{-1}}_{n-2} C_z^{n+1}
\end{equation}
Now the matrix $C_z^2 \hat{\slashed{Q}} C_z^{-1} \hat{\slashed{Q}}^2 C_z^{-1}$ is diagonal, so could be exchanged with $\hat{\slashed{Q}}^2$ to the left of it too. Continue this process further one gets:
\begin{equation}
    C_z^n \hat{\slashed{Q}} C_z^{-1} \hat{\slashed{Q}}^{2} C_z^{-1}\cdot...\cdot \hat{\slashed{Q}}^{n-1} C_z^{-1} \hat{\slashed{Q}}^n C_z^n
\end{equation}
Now the matrix: $C_z^{-1} \hat{\slashed{Q}}^{2} C_z^{-1}\cdot...\cdot \hat{\slashed{Q}}^{n-1} C_z^{-1} \hat{\slashed{Q}}^n C_z^{n-1}$ is diagonal, so it could be interchanged with $\hat{\slashed{Q}}$ standing to its left to get:
\begin{equation}
    C_z^{n-1} \hat{\slashed{Q}}^2 C_z^{-1} \hat{\slashed{Q}}^{2} C_z^{-1}\cdot...\cdot \hat{\slashed{Q}}^{n-1} C_z^{-1} \hat{\slashed{Q}}^n C_z^{n-1}\hat{\slashed{Q}} C_z 
\end{equation}
The matrix $C_z^{-1} \hat{\slashed{Q}}^{2} C_z^{-1}\cdot...\cdot \hat{\slashed{Q}}^{n-1} C_z^{-1} \hat{\slashed{Q}}^n C_z^{n-2}$ is diagonal and could be exchanged with $\hat{\slashed{Q}}^{2}$ to its left as well. Continuing this process further, we are getting the desired equality.
\end{proof}
Notice that for the determinant we have:
\begin{equation}
    \det_{N \times N} \Big[(-1)^n\hat{Y}(x-nm) \prod_{i=1}^n C_z \hat{\slashed{Q}}^{n-i} \Big] = z^{-n} \mathfrak{q}^{\frac{n^2-n}{2}} Y(x-nm)
\end{equation}
Hence, repeating the procedure described above but now for the scalar version of the factorization formula (\ref{factorizationClSc}) one finds that the following formula holds:
\begin{multline}
     \det_{N \times N} \overleftarrow{\prod_{n \in \mathbb{Z}}} \Big(1-  \hat{\slashed{Q}}^n \hat{Y}(x+nm ) \, C_{z}^{-1} \hat{Y}(x+(n-1)m )^{-1} \Big) = \\= \prod_{n \in \mathbb{Z}} \Big(1-  z\mathfrak{q}^n Y(x+nm ) \, Y(x+(n-1)m )^{-1} \Big)
\end{multline}
and both sides differ from the $\det \mathfrak{D}(x,z)$ by the same ill-defined factor:
\begin{equation} \label{illfactor}
    \lim_{n \rightarrow \infty} z^{-n} \mathfrak{q}^{\frac{n^2-n}{2}} Y(x-nm) = \lim_{n \rightarrow \infty} z^{-n} \mathfrak{q}^{\frac{n^2-n}{2}} (-nm)^N
\end{equation}
Now let us denote:
\begin{equation}
    \tilde{\mathfrak{D}}(x,z) := \overleftarrow{\prod_{n \in \mathbb{Z}}} \Big(1-  \hat{\slashed{Q}}^n \hat{Y}(x+nm ) \, C_{z}^{-1} \hat{Y}(x+(n-1)m )^{-1} \Big)
\end{equation}
We would like to study the null-vectors of it:
\begin{equation}
    \tilde{\mathfrak{D}}(x,z) \, \psi = 0
\end{equation}
To each null-vector we can associate a sequence of vectors $\psi_n, \, n \in \mathbb{Z}$, such that:
\begin{gather}
\Big[1-  \hat{\slashed{Q}}^n \hat{Y}(x+nm ) \, C_{z}^{-1} \hat{Y}(x+(n-1)m )^{-1} \Big] \psi_n = \psi_{n+1}\\
    \lim_{n \rightarrow - \infty} \psi_n = \psi \\
    \lim_{n \rightarrow + \infty} \psi_n = 0 
\end{gather}

Let us denote their components by $\psi_{n}^{(\omega)}$. The first of the equations above in coordinates will look like:
\begin{equation}
    \psi_n^{(\omega+1)} - z^{\delta_{0,\omega}}\mathfrak{q}_\omega^n \frac{Y_{\omega+1,n}}{Y_{\omega,n-1}} \psi_n^{(\omega)} = \psi_{n+1}^{(\omega+1)}
\end{equation}
Solving each equation with respect to $\psi_n^{(\omega)}$ one obtains:
\begin{gather}
    \psi^{(\omega)} = \mathcal{L}_\omega \, \psi^{(\omega+1)} \, , \quad \omega =1,...,N-1 \\
    z \psi^{(N)} = \mathcal{L}_N \psi^{(1)}
\end{gather}
where we have introduced new notations for the infinite-dimensional vectors:
\begin{equation}
   \psi^{(\omega)} = \sum_{n \in \mathbb{Z}}  \psi^{(\omega)}_n e_n
\end{equation}
and the notations for $\mathcal{L}_\omega$ were introduced in the formulation of the theorem (\ref{SpinLax}). From these equations it clearly follows that:
\begin{equation}
    z \psi^{(1)} = \mathcal{L}_1 \mathcal{L}_2 \cdot...\cdot \mathcal{L}_N \psi^{(1)}
\end{equation}
Hence, the infinite-dimensional matrix:
\begin{equation}
    z -\mathcal{L}_1 \mathcal{L}_2 \cdot...\cdot \mathcal{L}_N 
\end{equation}
should be degenerate. Now we only need to find the correct normalization for its determinant to match our desired answer. To do that, notice, that the matrices $\mathcal{L}_\omega$ are triangular, therefore the determinant 
\begin{equation}
    \det_{\infty \times \infty}\big(1 - z \mathcal{L}_N^{-1} \mathcal{L}_{N-1}^{-1}\cdot...\cdot \mathcal{L}_1^{-1}\big)
\end{equation}
is easy to calculate:
\begin{equation}
    \det_{\infty \times \infty}\big(1 - z \mathcal{L}_N^{-1} \mathcal{L}_{N-1}^{-1}\cdot...\cdot \mathcal{L}_1^{-1}\big) = \prod_{n \in \mathbb{Z}} \Big(1-  z\mathfrak{q}^n Y(x+nm ) \, Y(x+(n-1)m )^{-1} \Big)
\end{equation}
and it matches the expression for the determinant of $\tilde{\mathfrak{D}}(x,z)$:
\begin{equation}
   \det_{N\times N} \tilde{\mathfrak{D}}(x,z) = \det_{\infty \times \infty}\big(1 - z\, T_N(x)\big)
\end{equation}
where we have introduced:
\begin{equation}
    T_{N}(x) = \mathcal{L}_{N-1}^{-1}(x)\cdot...\cdot \mathcal{L}_1^{-1}(x)
\end{equation}
Which means:
\begin{equation}
\det_{N\times N} \mathfrak{D}(x,z) = (\text{ill-defined factor}) \times\, \det_{\infty \times \infty}\big(1 - z\, T_N(x)\big)    
\end{equation}
The expression for this ill-defined factor (\ref{illfactor}) implies that the equality on the following ratios will not contain it:
\begin{equation}
    \frac{\det_{N \times N} \mathfrak{D}(x+m,z)}{\det_{N \times N} \mathfrak{D}(x,z)} = \frac{ \det_{\infty \times \infty } \big(1 - z \, T_N(x+m)\big)}{ \det_{\infty \times \infty } \big(1 - z \, T_N(x)\big)}
\end{equation}
\end{proof}

\section{Future directions}
The most fascinating generalization of the above results would be extending them to the case of both $(\epsilon_1, \epsilon_2)$ being non-zero, potentially uncovering their geometric and representation-theoretic meaning. Specifically, we hope for some relation of the infinite product formula to the ones in \cite{BBZ}, \cite{Zenk}, and possibly  \cite{KonSoib}.\\
We anticipate that it will pave the way toward the solution of the longstanding problem of the quantum separation of variables for the elliptic Calogero-Moser system. Namely, following E. Sklyanin prescription \cite{Skl}, the expression for the wavefunction $\Psi^{ellCM}$, found in \cite{BPSCFT5} after certain integral transform will take the factorized form:
\begin{equation}
    \Psi^{ellCM}(x_1,...,x_N) = \int d^N \Vec{x}^{SV} K(\Vec{x}|\Vec{x}^{SV}) \prod_{i=1}^N Q_{SV}(x^{SV}_i)
\end{equation}
Where $x^{SV}_i$ - are separated variables of the system, and $Q_{SV}$ is a null vector of the quantum spectral curve (\ref{FactorizationScalar}), found in \cite{NG}. 

\section{Appendix: N = 2 Example}
For $N =2$ the matrix $\mathfrak{D}(x,z)$ could be written down explicitly:
\begin{equation}
 \mathfrak{D}(x,z) = \begin{pmatrix}
     \sum_{k \in \mathbb{Z}} \mathfrak{q}^{k(k+1)} z^k \Big(\frac{x_0}{x_1}\Big)^k \chi_0(x+ 2 k m) & \sum_{k \in \mathbb{Z}} \mathfrak{q}^{k^2} z^k \Big(\frac{x_0}{x_1}\Big)^k \chi_0(x+ (2 k-1) m) \\
     \sum_{k \in \mathbb{Z}} \mathfrak{q}^{k(k-1)} z^{k-1} \Big(\frac{x_1}{x_0}\Big)^k \chi_1(x+ (2 k-1) m) & \sum_{k \in \mathbb{Z} }\mathfrak{q}^{k^2} z^k \Big(\frac{x_1}{x_0}\Big)^k \chi_1(x+ 2 k m)
 \end{pmatrix}   
\end{equation}
This matrix has a great resemblance to the one used in the IRF-Vertex correspondence discovered in \cite{H}, and further studied in \cite{LOZ} for the purpose of the Symplectic Hecke Correspondence (see formula 4.25). However, for the number of particles $N$ greater than 2, it starts to look somewhat different.\\ 
For visibility, let us demonstrate how everything works in the trigonometric limit. The formula above then takes the form:
\begin{equation}
 \mathfrak{D}(x,z)^{\text{trig}} = \begin{pmatrix}
     \chi_0(x) + z^{-1} \, \frac{x_1}{x_0} \chi_0(x- 2  m) & \chi_0(x - m) \\
     \frac{x_1}{x_0} \chi_1(x+m) + z^{-1}  \chi_1(x - m) &  \chi_1(x)
 \end{pmatrix}   
\end{equation}
And the main identity (\ref{factorizationCl})  reduces to:
\begin{multline}
\overleftarrow{\prod_{n=1}^\infty} \begin{pmatrix}
    1 & 0 \\ \Big( \frac{x_1}{x_0}\Big)^n \frac{ Y_0(x+nm)}{Y_1(x+(n-1)m)} & 1
\end{pmatrix} 
\begin{pmatrix}
 Y_1(x)  &  Y_1(x-m) \\ z^{-1} Y_0(x-m) & Y_0(x) 
\end{pmatrix}
\begin{pmatrix}
    1 & 0 \\ z^{-1}  \frac{x_1}{x_0} \frac{Y_1(x-2m)}{Y_1(x-m)} & 1
\end{pmatrix} = \\ =
\begin{pmatrix}
     \chi_0(x) + z^{-1} \, \frac{x_1}{x_0} \chi_0(x- 2  m) & \chi_0(x - m) \\
     \frac{x_1}{x_0} \chi_1(x+m) + z^{-1}  \chi_1(x - m) &  \chi_1(x)
 \end{pmatrix} 
\end{multline}
which could be even easily verified directly, as we know that:
\begin{gather}
    \chi_0(x) = Y_1(x) 
    \\
    \chi_1(x) = \sum_{n=0}^\infty \Big( \frac{x_1}{x_0}\Big)^n \frac{ Y_0(x+nm) Y_1(x-m)}{Y_1(x+(n-1)m)} 
\end{gather}
Now let's explore what kind of Hamiltonians one obtains from this matrix:
\begin{equation}
    \det \mathfrak{D}(x,z)^{\text{trig}} = D(x) - z^{-1} D(x-m)
\end{equation}
where:
\begin{equation}
    D(x) = \chi_0(x) \chi_1(x) - \frac{x_1}{x_0} \chi_0(x-m) \chi_1(x+m)
\end{equation}
In the 6d (Equivariant elliptic cohomologies) case when the $qq$-Characters have the form:
\begin{gather}
    \chi_0(x) = \theta_{p_{6d}} \big( e^{\beta p_0 } e^{-\beta x } \big) \\
    \chi_1(x) = \frac{\theta_{p_{6d}} \big( e^{\beta p_1 } e^{-\beta x } \big)}{1 - \frac{x_1}{x_0} }
\end{gather}
one has:
\begin{equation}
    D^{6d}(x) =  \sum_{n_0, n_1 \in \mathbb{Z} } e^{2\pi i \tau_{6d}\sum_{k=0}^1 \frac{n_k^2-n_k}{2}}  \big(-e^{-\beta x} \big)^{(n_1 + n_2)} \frac{1 - e^{\beta m(n_1-n_0)}\frac{x_1}{x_0}}{1 - \frac{x_1}{x_0}}  e^{\beta n_0 p_0 + \beta n_1 p_1}
\end{equation}
That is exactly the classical, trigonometric limit of the Dell generating function found in the paper \cite{KS}.\\
In the $5d$ (Equivariant K-theory)- limit ($\tau_{6d} \rightarrow \infty$) it reduces to the generating function of the tRS Hamiltonians:
\begin{equation}
    D^{5d}(x) =  1 - e^{-\beta x } \Big[ \frac{1 - e^{\beta m}\frac{x_1}{x_0}}{1 - \frac{x_1}{x_0}} e^{\beta p_1}  + \frac{1 - e^{-\beta m}\frac{x_1}{x_0}}{1 - \frac{x_1}{x_0}} e^{\beta p_0} \Big] +  e^{-2 \beta x } e^{\beta p_0 + \beta p_1}
\end{equation}

\section{Appendix: Matrix elliptic Jacobi identity}
Elliptic theta-function:
\begin{equation}
    \theta_{\mathfrak{q}}(z) = \sum_{n \in \mathbb{Z}} (-z)^n \mathfrak{q}^{\frac{n^2-n}{2}} 
\end{equation}
could be seen as a 1 by 1 matrix of the homomorphism between a bundle of rank 1 degree 0 and a bundle of rank 1 degree 1 on the elliptic curve with parameter $\mathfrak{q}$. This could be encoded in its transformation property:
\begin{equation}
    \theta_{\mathfrak{q}}(\mathfrak{q}z) = - z^{-1}  \theta_{\mathfrak{q}}(z)
\end{equation}
It satisfies the following Jacobi identity:
\begin{equation}
\frac{1}{\prod_{n=1}^\infty (1- \mathfrak{q}^n)}\sum_{n \in \mathbb{Z}} (-z)^n \mathfrak{q}^{\frac{n^2-n}{2}} = \prod_{n=0}^\infty (1- \mathfrak{q}^n z) (1-\mathfrak{q}^{n+1}  z^{-1})    
\end{equation}
We found a remarkable matrix analog of this identity (proven in the main text). Let $\mathfrak{D}_1(z)$ be the matrix of the homomorphism between bundle of rank N degree 1 and bundle of rank N degree 0, both with the moduli $x_0,...,x_{N-1}$ on the elliptic curve with parameter $\mathfrak{q}$. These words could be expressed in terms of its transformation property:
\begin{equation}
    X \mathfrak{D}_1(\mathfrak{q} z) = -\mathfrak{D}_1(z) X C_{\mathfrak{q} z} 
\end{equation}
where:
\begin{equation}
    X = \textnormal{diag}(x_\omega)_{\omega = 0}^{\omega = N-1}
\end{equation}

\begin{equation}
    C_{z} = \begin{pmatrix} 0 & 1 & 0 &...& 0\\ 0 & 0 & 1 &...& 0 \\ \vdots & \vdots & \ddots & \ddots & \vdots \\ 0 & 0 & 0 & ... & 1 \\ z^{-1} & 0 & 0 & ... & 0\end{pmatrix} 
\end{equation}

Then the following matrix Jacobi identity holds:
\begin{eqnarray}
 \mathfrak{D}_1(z) =  \overleftarrow{\prod_{n=1}^\infty} \Big(1-  \hat{\slashed{Q}}^n  \, C_{z}^{-1} \Big) 
\cdot \overrightarrow{\prod_{n=0}^\infty} \Big(1 -   (\hat{\slashed{Q}} C_{z}^{-1})^n C_{z}^{\, n+1}\Big)   = \\ =
\sum_{n = 0}^\infty (-1)^n \hat{\mathbb{B}} \prod_{k = 0}^{n-1} (\hat{\slashed{Q}}^{n-k} C_{z}^{-1}) 
+ \sum_{n = 1}^\infty (-1)^n \hat{\mathbb{B}}\prod_{k=1}^{n-1} C_{z}^k (\hat{\slashed{Q}}C_{z}^{-1})^k \, C_{z}^n 
\end{eqnarray}
where:
\begin{gather}
    \mathfrak{q}_\omega = \frac{x_\omega}{x_{\omega-1}}, \qquad \omega = 1,..., N-1 \\
    \mathfrak{q}_0 = \mathfrak{q} \frac{x_0}{x_{N-1}}
\end{gather}
and extended to infinity by quasiperiodicity $x_{\omega+N} = \mathfrak{q} x_\omega$,
\begin{equation}
\hat{\slashed{Q}} = \text{diag} \big( \mathfrak{q}_{\omega} \big)_{\omega = 0 }^{\omega = N-1}    
\end{equation}

\begin{equation}
    \hat{\mathbb{B}} = \text{diag}(\mathbb{B}_\omega)_{\omega = 0}^{\omega = N-1}
\end{equation}
\begin{equation}
  \mathbb{B}_\omega = \prod_{l=1}^\infty \frac{1}{\Big( 1 -\frac{x_\omega}{x_{\omega-l}} \Big)}  
\end{equation}
The determinant of the matrix Jacobi identity gives back the usual one.

\end{document}